\newcommand{\set}[1]{\mathcal{#1}} 
\newcommand{\eqdef}{\triangleq} 
\newcommand{\mat}[1]{\mathsf{#1}} 
\newcommand{\vect}[1]{\bm{#1}} 
\newcommand{\card}[1]{\left|#1\right|} 
\newcommand{\trans}[1]{#1^{\textup{\textsf{\tiny T}}}} 
\newtheorem{thm}{Theorem}
\newtheorem{lem}{Lemma}
\newtheorem{mydef}{Definition}
\newtheorem{cor}{Corollary}
\begin{document}
%
\title{Optimal and Approximation Algorithms for Joint Routing and Scheduling in Millimeter-Wave Cellular Networks}

%
%
%

\author{Dingwen~Yuan,
        Hsuan-Yin~Lin,
        J\"{o}rg~Widmer and
	Matthias Hollick
\thanks{D. Yuan and M. Hollick are with Secure Mobile Networking
Lab (SEEMOO), Technische Universit\"{a}t Darmstadt, Darmstadt, Germany 
(e-mail: dyuan@seemoo.tu-darmstadt.de; mhollick@seemoo.tu-darmstadt.de).}
\thanks{H. Lin is with Simula UiB Research Lab, Bergen, Norway
(e-mail: hsuan-yin.lin@ieee.org)}
\thanks{J. Widmer is with Institute IMDEA Networks, Madrid, Spain
(e-mail: joerg.widmer@imdea.org)}}
\maketitle


\begin{abstract}
	
	Millimeter-wave (mmWave) communication is a promising technology to cope with the exponential increase in 5G data traffic.
	Such networks typically require a very dense deployment of base stations. 
	A subset of those, so-called macro base stations, feature high-bandwidth connection to the core network, while relay base stations are connected wirelessly.
	To reduce cost and increase flexibility, wireless backhauling is needed to connect both macro to relay as well as relay  to relay base stations. 
	The characteristics of mmWave communication mandates new paradigms for routing and scheduling. 
	The paper investigates scheduling algorithms under different interference models. 
	To showcase the scheduling methods, we study the maximum throughput fair scheduling  problem. Yet the proposed algorithms can be easily extended to other problems.
	For a full-duplex network under the no interference model, we propose an efficient polynomial-time scheduling method, the {\em schedule-oriented optimization}. Further, we prove that the problem is NP-hard if we assume pairwise link interference model or half-duplex radios.
	Fractional weighted coloring based approximation algorithms are proposed for these NP-hard cases.
	Moreover, the approximation algorithm parallel data stream scheduling is proposed for the case of half-duplex network under the no interference model. It has better approximation ratio than the fractional weighted coloring based algorithms and even attains the optimal solution for the special case of uniform orthogonal backhaul networks.
\end{abstract}

\begin{IEEEkeywords}
millimeter-wave, 5G, backhaul, max-min fairness, full-duplex, half-duplex, matching, coloring, 
pairwise link interference, single user spatial multiplexing.
\end{IEEEkeywords}

%
\IEEEpeerreviewmaketitle

\section{Introduction}
5G and beyond cellular systems are embracing millimeter wave (mmWave) communication in the $10$-$300$ GHz band where abundant bandwidth is available to achieve Gbps data rate. One of the main challenges for mmWave systems is the high propagation loss. 
Although it can be partially compensated by directional antennas~\cite{Rappaport:2013jk, Rangan:2014ia}, the effective communication range of a mmWave base station (BS) is around $100$ meters for typical use cases. Thus, base station deployment density in 5G will be significantly higher than in 4G~\cite{Singh:2015eh, ghosh2014millimeter}. This leads to high infrastructure cost for the operators. In fact, besides the cost of site lease, backhaul link provisioning is the main contributor to this expense because the mmWave access network may require multi-Gbps backhaul links to the core network. Currently, such a high data rate can only be accommodated by fiber-optic links which have high installation costs and are inflexible for reconfiguration.  

Recent studies show that mmWave self-backhauling is a  cost-effective alternative to wired backhauling~\cite{Dehos14,Singh:2015eh}. 
This approach is particularly interesting in an NG-RAN (Next Generation Radio Access Network) where one or more {\em 5G base stations} (gNBs)
have fiber backhaul to the core network and act as gateways for the other gNBs~\cite{NGRAN}.
We refer to the gateway gNBs as {\em macro BSs} and the other gNBs as {\em relay BSs}.
Fig.~\ref{fig:net-model} illustrates such a setup in which each relay BS can be reached by at least one macro BS directly or via other relay~BSs.
Moreover, the directionality of mmWave communication reduces or removes the wireless backhaul interference and allows simultaneous scheduling of multiple links over the same channel as long as their antenna beams do not overlap. However, the number of simultaneous data streams a base station can handle is limited by the number of its radio frequency (RF) chains. Furthermore, a base station may only support half-duplex communication, i.e. it {\em can not} work as a transmitter and a receiver at the same time. 

\begin{figure}[t!]
\centering
		\includegraphics[width=8cm]{./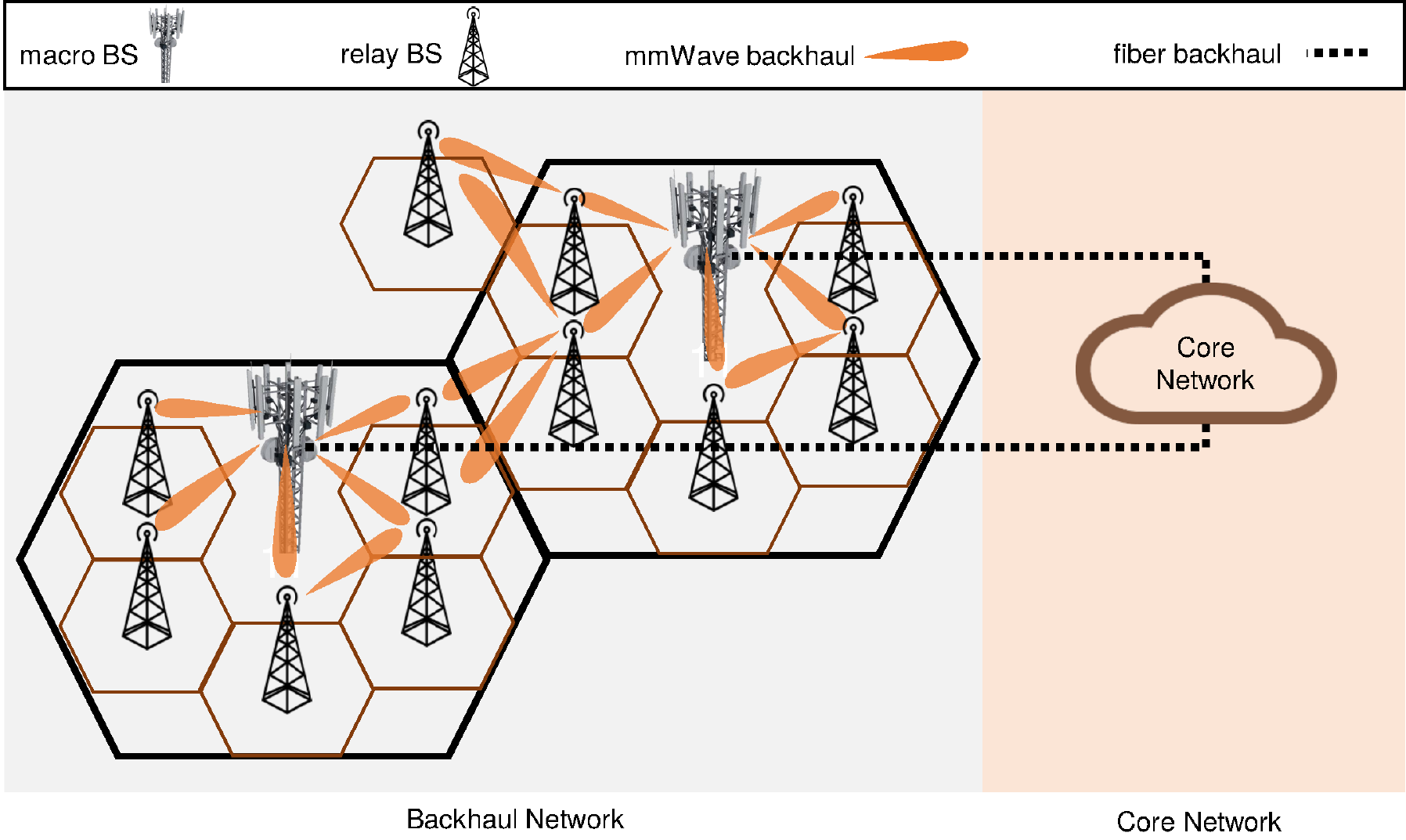} 
		\caption{mmWave self-backhauling setup.}
		\label{fig:net-model}
\end{figure}

As of now, much of the research on mmWave communication has been dedicated to issues that the mobile users (UEs) face in the access networks. 
How to maximize performance such as throughput and energy efficiency in mmWave backhaul networks has received less attention.
Transmission scheduling that incorporates the possibility of multi-hop routing is a most crucial research question to be addressed.


A naive scheduling which lets a macro BS serve all the relay BSs in its macrocell in a round robin fashion is neither practical nor efficient.
If a relay BS' link to a macro BS is weaker than its links to other nearby relay BSs (which in turn have high-capacity links to a macro BS), a schedule allowing multi-hop routing can be more favorable since it alleviates the bottleneck at the macro BSs. At the same time, the limited interference at mmWave frequencies makes it efficient to maximize spatial reuse and operate as many data streams simultaneously as possible.
The goal of this paper is to design a scheduler that exploits these characteristics to optimize mmWave backhaul efficiency for full-duplex and half-duplex radios assuming pairwise link interference or not, and realistic or maximum single-user spatial multiplexing.

The paper is organized as follows: related work is discussed in \S \ref{s:related}.
The system model  is described in \S \ref{s:sysmodel}.
The {\em maximum throughput fair scheduling} (MTFS) serves as a concrete problem for presenting our methods for backhaul scheduling.
In \S\ref{s:fd-sched-fair}, we present a polynomial time optimal algorithm for the full-duplex MTFS problem assuming no interference (NI) between links.
In \S\ref{sec:complexity}, we show that the MTFS problem is NP-hard for both cases of pairwise link interference (PI) model and half-duplex radios. Yet it is solvable in polynomial time for a special case---the so-called {\em uniform orthogonal backhaul network}.
In \S \ref{sec:app-fra-color}, we propose two general approximation algorithms based on fractional weighted coloring for the NP-hard cases.
For half-duplex radios under the NI model, we propose the approximation algorithm---parallel data stream scheduling (PDS) in \S\ref{sec:pds}, which has better a performance bound than the fractional weighted coloring based algorithms. It provides the optimal scheduling for the case of uniform orthogonal backhaul networks.
\S\ref{sec:extension} elaborates on the extension of the scheduling algorithms to more general scenarios.
\S\ref{s:eval} demonstrates the efficiency and effectiveness of the algorithms through numerical evaluations. Finally, \S \ref{s:conclude} concludes the paper.

\section{Related Work}
\label{s:related}

Our previous paper~\cite{YuanLWH18} discussed the optimal full-duplex scheduling of mmWave backhaul networks and its approximation algorithm assuming zero interference between any pair of links and the linearity of link capacity in terms of RF chains.
This paper is a continuation of the work, which includes both full-duplex and half-duplex scheduling of mmWave backhaul networks for pairwise link interference and realistic single-user spatial-multiplexing model.
A few works studying mmWave network scheduling~\cite{Niu15, Zhu16, Feng16, Li17, Niu17} share the assumption that the traffic demand is measured in discrete units of timeslots or packets.
 The resulting optimization problems are all formulated as mixed integer programs (MIP). As MIPs are in general NP-hard, optimal solutions can only be computed for very small networks. For practical use, they all rely on heuristics, which are based for example on greedy edge coloring~\cite{Niu15, Feng16} or finding the maximum independent set in a graph~\cite{Zhu16, Li17, Niu17}. Furthermore, \cite{Niu15, Zhu16, Li17, Niu17} assume that routing is pre-determined, which does not fully exploit the freedom given by a reconfigurable backhaul, and may thus limit performance.

In contrast, we relax the constraint of in-order flow scheduling (i.e., if needed, packets are queued for a short time) which does not harm the long-term throughput, and allows a timeslot to be of any length (a following step can be used to discretize the length).
 Based on these assumptions, we propose polynomial time optimal and approximation scheduling algorithms. We show by simulation that they are practical for mmWave cellular networks. 
Moreover, the scheduling takes QoS optimization goals or QoS requirements as input and finds an efficient routing automatically.
The first attempt to solve the problem of joint routing and scheduling in a network with Edmonds' matching formulation goes back to~\cite{Hajek88}. Hajek's scheduling algorithm differs from ours in that it minimizes schedule length. Furthermore, we use a one-step {\em schedule-oriented} approach while they first compute the optimal link time and then compute the minimum length schedule given the link time. 
The matching-based approach is also used in~\cite{Sinha19} to study the problem of throughput-optimal routing and scheduling in a wireless directed acyclic graph (DAG) with a time-varying connectivity, to which a throughput-optimal dynamic broadcast policy is proposed.

Two of our approximation algorithms (F$^3$WC) are adapted from~\cite{Wan09} which proposed scheduling algorithms for multihop wireless networks based on fractional weighted vertex coloring of conflict graphs. 
We show the generality of F$^3$WC algorithms by demonstrating that all constraints of half-duplex, RF chain number and pairwise link interference can be modeled in a conflict graph.

 While current mmWave radios only support half-duplex operation, full-duplex is feasible through proper analog and digital cancellation~\cite{LiJT14} and is likely to be used in the future.
More details about the design of a full-duplex mmWave radio are discussed in~\cite{Dinc18}.
In addition, we assume that multiple data streams can be transmitted concurrently between a single pair or multiple pairs of BSs. This assumption is backed by the feasibility of single-user and multi-user MIMO in mmWave communication~\cite{SunRHNR14, RaghavanPSSKRCM18, XueFW17}, exploiting beamforming, multipath, spatial and polarization diversity.

Existing research works on mmWave backhaul scheduling favor centralized solutions for various optimization goals, including throughput~\cite{ Feng16, Li17, YuanLWH18}, delay~\cite{Kilpi17}, energy consumption~\cite{Niu17}, makespan~\cite{Niu15, Arribas19}, wireless bandwidth~\cite{DuOCVV17} and flows with satisfied QoS requirements~\cite{Zhu16}.
To achieve these optimization goals, most of the works perform link scheduling to maximize spatial reuse while some control routing~\cite{Niu15, Feng16, DuOCVV17, YuanLWH18, Arribas19}, transmission power~\cite{Li17, Niu17, DuOCVV17} and bandwidth allocation~\cite{DuOCVV17}.
3GPP is currently performing a study on Integrated Access and Backhaul (IAB)~\cite{3GPP-IAB} which describes mechanisms for sharing radio resources between access and backhaul links.  Yet designing a high-performance IAB is still an open problem~\cite{PoleseGZRGCZ20, SahaD19}.
Our algorithms can be equally applied to IAB networks because of our graph-based scheduling method, thus contributing to this emerging field.

\section{System model}
\label{s:sysmodel}
The system model considers a backhaul network which consists of one or more macro BSs and multiple relay BSs. 
The macro BSs act as the backhaul gateways for the relay BSs.
We assume that any two macro BSs are connected by fiber optics with infinite capacity (data rate). Thus, data can be exchanged between two macro BSs with zero delay\footnote{We can equally support non-zero communication delay between macro BSs. This can be done by adding wired links of fixed capacity between macro BSs. These links do not interfere with each other or with the wireless links. For presentation clarity, we assume that wired links introduce zero delay.}.

A node (BS) is equipped with multiple RF chains. Two nodes may have a different number of RF chains. We assume analog or hybrid beamforming which allows up to the same number of {\em simultaneous data streams} incident to a node as its number of RF chains.
In addition, the maximum number of simultaneous data streams supported by a link  is determined by the spatial diversity of that link (see. \S\ref{sec:su-sm}). These data streams may have different capacities.
Finally, the network is either {\em full-duplex} or {\em half-duplex}, meaning all nodes are equipped with full-duplex or half-duplex radios, respectively. 

We use the convention that {\em arc} and {\em edge} refer to {\em directed} and {\em undirected edge}  of a graph.
The notations $(u, v)$ and $\{u, v\}$ are used to represent an arc from $u$ to $v$ and an edge between $u$ and $v$, respectively.
To model the backhaul network, we start with a {\em directed network} $D$, which is an {\em arc weighted directed multigraph}\footnote{A directed multigraph allows parallel arcs with the same head node and the same tail node, but disallows loops.}. 
In this paper, we use the notation $V(G)$ and $E(G)$ to refer to the vertex set and arc (edge) set of a directed (or undirected) graph $G$. Thus, $V(D)$ and $E(D)$ are the vertex set and arc set of $D$, representing the set of all BSs and the set of all possible data streams.
Let $B(D)$ and $M(D)$ be the set of macro BS vertices and the set of relay BS vertices, respectively.
Let $L$ be the {\em link network} of $D$. It is a subgraph of $D$ which is constructed by replacing each set of parallel arcs by a single arc. 
Let $l = (u, v) \in E(L)$ denote the link from node $u$ to node $v$ and $d(l)$ the maximum number of simultaneous data streams allowed by $l$. 
Then there are $d(l)$ parallel arcs from $u$ to $v$ in $D$, which are labeled as $l_1 \dots l_{d(l)}$.
Let $c: E(D) \mapsto \mathbb{Q}_+$ be the {\em capacity} function defined for each arc (data stream) where $\mathbb{Q}_+$ is the set of non-negative rational numbers. The value $c(e)$ is the capacity of a data stream $e$.
Fig.~\ref{fig:sys_model} illustrates a toy example for a directed network and the corresponding link network. 
\begin{figure}[hbpt]
\centering
\includegraphics[width=8cm]{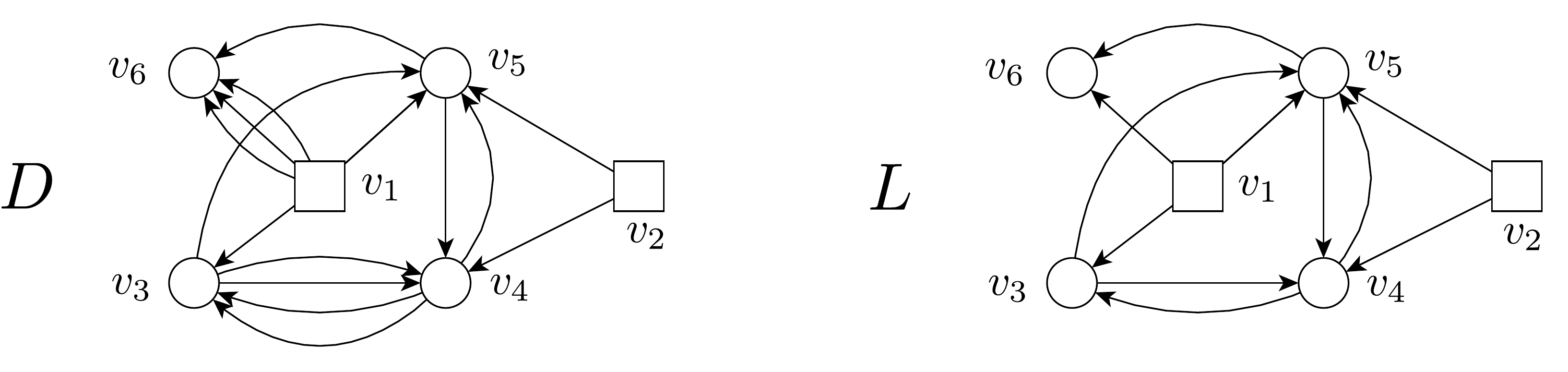}
\caption{Example of {\em directed network} and {\em link network}.
$\Square$ and $\Circle$ represent a macro and a relay BS, respectively.}
\label{fig:sys_model}	
\end{figure}
 
In addition, we  define an {\em RF chain number} function $r: V(D) \mapsto \mathbb{N}$ where $\mathbb{N}$ is the set of natural numbers $\{1, 2, \dots \}$. $r(u)$ is the number of RF chains of node $u$.  We denote the total number of RF chains of  $D$ by $r(D)$, i.e., $r(D) = \sum_{v \in V(D)} r(v)$. Scheduling is made based on the directed network $D$.

\subsection{Single User Spatial Multiplexing Model} \label{sec:su-sm}
Parallel data streams from a transmitter $u$ to a receiver $v \ne u$ may be scheduled simultaneously if both $u$ and $v$ use multiple RF chains. This is referred to as {\em single user spatial multiplexing} (SU-SM)~\cite{SunRHNR14}. In this paper, we assume that every RF chain of a node $v$ is assigned the same transmission power $p_{\text{tx}}(v)$. $p_{\text{tx}}(u)$ and $p_{\text{tx}}(v)$ may be different if $u \ne v$.
This is a reasonable assumption for a multi-transceiver node that has multiple RF chains, each with their own amplifiers and phased arrays (see~\cite{ZhaoWWQZ20} for an example MIMO system).
We investigate two models for SU-SM:
\begin{enumerate}
	\item{REAL-SU-SM:} This is a general and realistic model. A link $l = (u, v) \in E(L)$ supports at most $d(l)$ data streams where $d(l) \le \min \big(r(u), r(v)\big)$ and is determined by the spatial diversity of link $l$ such as the {\em channel matrix rank}. 
	Since only the total capacity of the data streams is relevant in our scheduling problem, we can equivalently model their capacities such that $c(l_1) \ge \dots \ge c(l_{d(l)}) > 0$. 
	This means that we have capacity $c(l_1)$ if only one data stream is active. With each additional data stream we have a more marginal increase in total capacity.
	Using $k \le d(l)$ RF chains at both ends, we get a total capacity of $\sum_{i=1}^k c(l_i)$.
	
	\item{MAX-SU-SM:} This model  achieves the maximum possible capacity of SU-SM, where each link provides sufficient spatial diversity.
	It is a special case of the REAL-SU-SM model with $d(l) = \min \big(r(u), r(v)\big)$ and $c(l_1) = \dots = c(l_{d(l)}) > 0$.
	In this model, capacity is also defined for a link such that $c(l) = c(l_1)$.

\end{enumerate}

\subsection{RF Chain Number Constraint and Half-duplex Constraint}

The constraint due to the limited number of RF chains is that given a set of simultaneously scheduled data streams $E \in E(D)$, for each node $v$, the number of arcs in $E$ that are incident to $v$ is at most $r(v)$.
Furthermore, a schedule can be either {\em full-duplex} (FD) or {\em half-duplex} (HD). FD allows incoming and outgoing arcs of a vertex to be scheduled simultaneously, whereas HD does not. We will prove in the paper that the duplex state is a key parameter for scheduling. It determines whether the  computational complexity is polynomial time  or NP-hard.

\subsection{Interference Model} \label{sec:sys-intf}
Compared to lower frequency omnidirectional communication, mmWave frequencies significantly reduce the interference between concurrent links  due to the short communication range and the directionality brought by beamforming. We will study and compare two interference models in this paper: 
\begin{enumerate}
	\item{NI:} no interference between concurrent links.
	\item{PI:} pairwise link interference.
\end{enumerate}

\begin{figure}[hbpt]
	\centering
	\includegraphics[width=3.7cm]{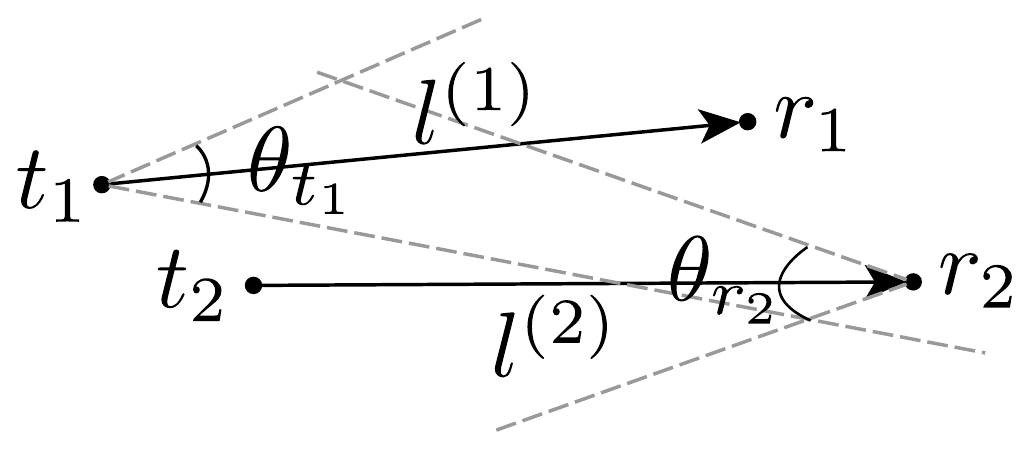}
	\caption{Two interfering links $l^{(1)}$ and $l^{(2)}$. An angle between two dashed rays starting from a node is the beamwidth of that node.}
	\label{fig:intf-links}	
\end{figure}

The NI model (no interference assumption) is reasonable for the situation of narrow beams and sparse deployments.
The PI model specifies that given two links $l^{(1)}, l^{(2)} \in E(L)$ sharing no common vertex, the function $\mathrm{intf}(l^{(1)}, l^{(2)}) = 1$ if there is interference between $l^{(1)}$ and $l^{(2)}$, i.e., they cannot be scheduled simultaneously; otherwise $\mathrm{intf}(l^{(1)}, l^{(2)}) = 0$. As shown in Fig.~\ref{fig:intf-links}, we say that {\em $l^{(1)}$ interferes $l^{(2)}$} if (1) $r_2$ and $t_1$ are within the beamwidths of each other, and (2) $\mathrm{SINR} = \frac{p_{r_2}^{t_2}}{p_{r_2}^{t_1} + N_0} < \tau$, where $p_r^t$ is the received power at receiver $r$ due to transmitter $t$;  $N_0$ is the noise power and $\tau$ is a SINR threshold. $\mathrm{intf}(l^{(1)}, l^{(2)}) = 1$ if $l^{(1)}$ interferes $l^{(2)}$ or $l^{(2)}$ interferes $l^{(1)}$. Otherwise, $\mathrm{intf}(l^{(1)}, l^{(2)}) = 0$.
For the case that two links have vertices in common, we assume that advanced signal processing (MIMO precoding and combining) can be  performed at the common vertices (transmitters or receivers) to remove the interference.

According to~\cite{GhadikolaeiFM16}, the {\em protocol model} is a sufficiently accurate interference model for  directional mmwave communication.
Due to the high gain of directional antennas, interference from a single node is typically either negligible or destructive. 
Obviously, the protocol model with an arbitrary interference range can be translated into pairwise link interference. This justifies that the PI model is reasonably accurate and the NI model is optimistic.

\subsection{Optimization Goal}
Our paper focuses on downlink communication. With simple adaptation, the proposed algorithms can also be applied to an uplink or a joint uplink and downlink optimization. We will explain this in \S\ref{sec:extension}, after presenting the algorithms.
 For downlink scheduling, the only data sources are the macro BSs, and we can thus remove all incoming arcs to macro BSs in $D$. We can observe in Fig.~\ref{fig:sys_model} that there are many ways to schedule downlink communication among the macro and relay BSs. Our goal is to find a schedule that is of unit length and is optimal with respect to a QoS metric, while satisfying given QoS requirements and the constraints on simultaneous transmissions (number of RF chains at a node, radio duplexity, spatial multiplexing model and interference model).  In practice, the unit time length corresponds to the duration of a radio frame. 

\section{Optimal Maximum Throughput Fair Scheduling For A Full-Duplex Network Under the NI Model} 
\label{s:fd-sched-fair}

This section assumes a full-duplex backhaul network under the NI (No Interference) model.
Given these assumptions, we provide a polynomial-time optimal  algorithm for the maximum throughput fair scheduling problem (MTFS).

The goal of the MTFS problem is to maximize the downlink {\em network throughput} under the condition that the {\em max-min fairness}~\cite{Tang06, Tassiulas02} in throughput is achieved at the relay BSs.

\begin{mydef}[Maximum Throughput Fair Schedule] 
Given a directed network $D$ and a unit time schedule $S$, let $\vect{h}^S = [h^S_v | v \in M(D)]$ be
the {\em throughput vector} of $S$, where $h^{S}_v$ denotes the throughput of a relay BS $v$, that is the total
amount of data entering $v$ minus that leaving $v$ when $S$ is applied.

\noindent (i) A feasible unit time schedule $S_f$ is said to satisfy the
    max-min fairness criterion if
    $\min_{v \in M(D)} h_v^{S_f}\geq
    \min_{v \in M(D)}h_v^S$ for any feasible unit time schedule
    $S$. The value $\min_{v \in M(D)} h_v^{S_f}$ is called the {\em max-min throughput}.
  
\noindent (ii) A feasible unit time schedule $S^*$ is an optimal solution of the MTFS
    problem if $S^*$ satisfies the max-min fairness criterion in (i) and the network throughput $\sum_{v \in M(D)}h^{S^*}_v$ is maximum.
\end{mydef}

In the following, we present our generally applicable optimization method---{\em schedule-oriented optimization}.

\subsection{Schedule-Oriented Optimization}
\label{ss:sched-ori-opt}

To characterize the scheduled data streams in each timeslot, we need the definition of {\em simple $b$-matching} of a graph~\cite{Korte12}.

\begin{mydef} Let $G$ be an undirected graph with numbers $b: V(G) \mapsto \mathbb{N}$ and weights $c: E(G) \mapsto \mathbb{R}$, then a {\em simple $b$-matching} in $G$ is a function $f: E(G) \mapsto \{0, 1\}$ and $\sum_{e \in \delta(v)} f(e) \le b(v)$ for all $v \in V(G)$ where $\delta(v)$ is the set of edges incident to $v$. A {\em maximum weight simple $b$-matching} $f$ is a simple $b$-matching whose weight $\sum_{e \in E(G)} c(e) f(e)$ is maximum.
\end{mydef}

Let $b = [r(v) | v \in V(D)]$, then it is obvious the arc set scheduled in a timeslot is a simple $b$-matching of $D$. 
On the other hand, a simple $b$-matching of $D$ is an arc set that can be scheduled simultaneously in a timeslot.
The schedule-oriented optimization solves a linear optimization problem, the solution to which is exactly the optimal schedule. For the mathematical formulation, we construct the {\em node-matching matrix}.

\begin{mydef}[Node-matching Matrix] Given a directed network $D$, suppose the number of all possible simple $b$-matchings
  of $D$ is $K$ where $b = [r(v) | v \in V(D)]$. Then the node-matching matrix $\mat{A}=[a_{i,j}]$ is
  a $\card{V(D)} \times K$ matrix.
 Each element $a_{i,j}$  is equal to the 
  sum capacity of all arcs in the $j$-th simple $b$-matching that enter the $i$-th vertex of $D$ 
  minus the sum capacity of all arcs in the $j$-th
  simple $b$-matching that leave the $i$-th vertex of $D$.
  \label{def:node-matching-matrix}
\end{mydef}

\begin{figure}[!hbpt]
	\begin{minipage}{0.3\textwidth}
		\begin{figure}[H]
			\includegraphics[width=4cm]{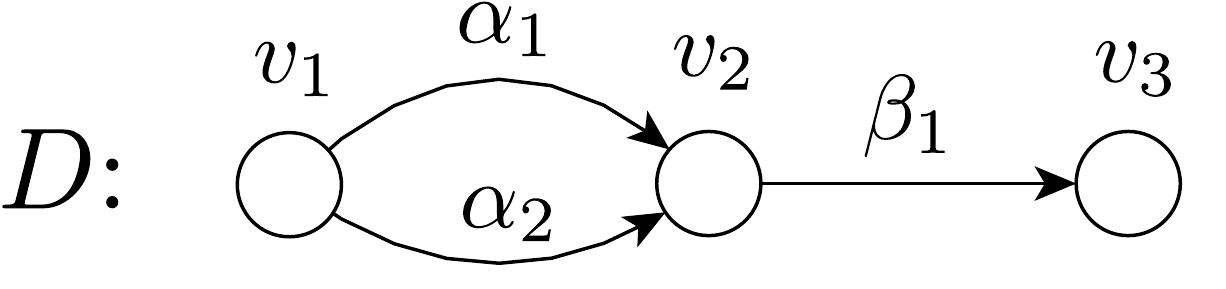}	
		\end{figure}
	\end{minipage} \\ \\ \\ 
	$\mat{A:}$
	\begin{minipage}{0.7\textwidth} \footnotesize
		$\begin{blockarray}{ccccccccccccccccc}
		& \alpha_1  & \alpha_2  & \beta_1  & \alpha_1,\alpha_2  & \alpha_1, \beta_1  & \alpha_2, \beta_1 \\
		\begin{block}{c[cccccccccccccccc]}
		v_1  & -8 & -6 & 0   & -14 & -8 & -6  \\
		v_2  & 8  & 6  & -3  & 14  & 5  & 3 \\
		v_3  & 0  & 0  & 3   & 0   & 3  & 3  \\
		\end{block}
		\end{blockarray}$
	\end{minipage}
	\caption{Node-$b$-matching matrix $\mat{A}$. Every node in the backhaul network has 2 RF chains.
		Let two links $\alpha = (v_1, v_2)$ and $\beta = (v_2, v_3)$.
		The maximum number of data streams of each link is $d(\alpha) = 2, d(\beta) = 1$. 
		The arc capacities in the directed network $D$ are: $c(\alpha_1) = 8$, $c(\alpha_2) = 6$, $c(\beta_1) = 3$.}
	\label{fig:node-matching}
\end{figure}

As we will see, the node-matching matrix helps to formulate the throughput constraints at each relay BS.
Fig.~\ref{fig:node-matching} gives an example of node-matching matrix for a directed network. 

Let $\mat{A}$ be the node-matching matrix of $D$. We define $\mat{A}^M$ as a submatrix of $\mat{A}$, which consists of the rows related to relay BSs. 
As the set of arcs scheduled in each timeslot of a schedule must be a simple $b$-matching in $D$, we define $\vect{t}^S$ as a $K\times 1$ {\em length vector}, each element of which is the length of a potential timeslot corresponding to a simple $b$-matching. Let the minimum throughput among all relay BSs be $\theta$. Then we can solve the MTFS problem in two steps: (i)
maximizing $\theta$; the solution $\theta^*$ is the
max-min throughput, and (ii)
computing the optimal schedule $S^*$ that offers the highest network throughput subject to the constraint $\theta \ge \theta^*$.

\textbf{Linear programs for MTFS.} 
The linear program to
maximize $\theta$ in step (i) is
\begin{IEEEeqnarray}{lrCl}
  \IEEEyesnumber\label{eq:mtf-theta}\IEEEyessubnumber*
  \textnormal{max} & \theta & &
  \\
  \textnormal{s.t.}\quad& \mat{A}^{M}\vect{t}^S& \geq & \vect{1}\theta\label{eq:mtf-theta1}
  \\
  & \trans{\vect{1}}\vect{t}^S& = &1\label{eq:mtf-theta2}
  \textnormal{ and }
   \vect{t}^S \geq \vect{0},
\end{IEEEeqnarray}
where $\vect{1}$ and $\vect{0}$ represent the all-one and all-zero
column vectors. The superscript `$\trans{}$' denotes the
vector transpose. \eqref{eq:mtf-theta1} is the constraint that the
throughput at each relay BS should be at least
$\theta$. \eqref{eq:mtf-theta2} is the constraint that the schedule
 should be of unit length. The feasibility of the schedule is
implicitly guaranteed by the formulation in terms of simple
$b$-matchings.
 
After we have obtained the solution $\theta^*$ from \eqref{eq:mtf-theta}, we can formulate the linear program that maximizes the  network throughput, under the condition that each relay BS has throughput at least $\theta^*$:
\begin{IEEEeqnarray}{lrCl}
  \IEEEyesnumber\label{eq:mtf}\IEEEyessubnumber*
  \textnormal{max} &\trans{\vect{c}}\vect{t}^S & &
  \\
  \textnormal{s.t.}\quad& \mat{A}^{M}\vect{t}^S& \geq & \vect{1} \theta^*
  \\
  & \trans{\vect{1}}\vect{t}^S& = &1
  \textnormal{ and }
   \vect{t}^S \geq \vect{0},
\end{IEEEeqnarray}
Here, $\vect{c}$ is the {\em capacity vector} whose element $c_j$ is
the cumulative capacity of all macro-BS-to-relay-BS data streams in the $j$-th
simple $b$-matching $M_j$, i.e.,
$c_j = \sum_{\{e | e \in M_j, \text{tail}(e) \in B(D)\}} c(e)$, where
$B(D)$ is the set of all macro BSs.

The difficulty in solving \eqref{eq:mtf-theta} and
\eqref{eq:mtf} is due to the huge number of elements in $\vect{t}^S$
(equal to the number of simple $b$-matchings of $D$, which is exponential in $\card{V(D)}$). 
Yet, we show that it is unnecessary to enumerate all of them, and both \eqref{eq:mtf-theta} and
\eqref{eq:mtf} can be solved in polynomial time.

\begin{thm}
\label{thm:MTFS-polynomial-time}
	Under the assumption of a full-duplex backhaul network and the NI model, the MTFS problem can be solved in polynomial time with the ellipsoid algorithm.
\end{thm}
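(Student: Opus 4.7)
The plan is to apply the ellipsoid method to the duals of \eqref{eq:mtf-theta} and \eqref{eq:mtf}, equipped with a polynomial-time separation oracle. The main obstacle is that $K$, the number of simple $b$-matchings of $D$ with $b(v)=r(v)$, is exponential in $\card{V(D)}$, so the primal LPs have exponentially many columns (one per matching $M_{j}$) and cannot even be written down explicitly in polynomial time.

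First, I would dualize \eqref{eq:mtf-theta}. Treating $\theta$ as a free variable, the dual takes the form: minimize $\lambda$ subject to $\trans{\vect{1}}\vect{y}=1$, $\vect{y}\ge\vect{0}$, and $\lambda\ge\trans{\vect{y}}\vect{a}^{M}_{j}$ for every simple $b$-matching $M_{j}$ of $D$, where $\vect{a}^{M}_{j}$ denotes the $j$-th column of $\mat{A}^{M}$; this has polynomially many variables but exponentially many constraints. Given a candidate $(\vect{y},\lambda)$, the separation task is to decide whether $\max_{j}\trans{\vect{y}}\vect{a}^{M}_{j}\le\lambda$ and, if not, to return a violating $M_{j}$. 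Expanding via Definition~\ref{def:node-matching-matrix} yields
\[
\trans{\vect{y}}\vect{a}^{M}_{j} \;=\; \sum_{e=(u,w)\in M_{j}} c(e)\bigl(\tilde y_{w}-\tilde y_{u}\bigr),
\]
where $\tilde y_{v}:=y_{v}$ if $v\in M(D)$ and $\tilde y_{v}:=0$ if $v\in B(D)$ (the downlink assumption removes incoming arcs at macro BSs). Hence separation reduces to computing a maximum-weight simple $b$-matching on the undirected multigraph obtained by forgetting arc directions in $D$, with edge weights $c(e)(\tilde y_{w}-\tilde y_{u})$ (edges with negative weight can be dropped) and vertex capacities $b(v)=r(v)$. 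This is a classical polynomial-time solvable problem, e.g., via Edmonds' simple $b$-matching algorithm.

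With a polynomial-time separation oracle in hand, the Gr\"{o}tschel--Lov\'{a}sz--Schrijver equivalence of separation and optimization allows the ellipsoid method to solve the dual---and, by the standard primal-recovery argument for LPs with implicitly described column sets, the primal---in polynomial time, producing $\theta^{*}$ together with an optimal $\vect{t}^{S}$ supported on polynomially many matchings. The same recipe applies to \eqref{eq:mtf}: its objective $\trans{\vect{c}}\vect{t}^{S}$ and the added constraint $\mat{A}^{M}\vect{t}^{S}\ge\vect{1}\theta^{*}$ only modify the linear coefficients that enter the separation oracle, leaving the reduction to max-weight simple $b$-matching intact. The principal technical point is the arc-by-arc decomposition of $\trans{\vect{y}}\vect{a}^{M}_{j}$ above: one must verify that it really turns separation into an \emph{unconstrained} max-weight simple $b$-matching problem rather than some harder constrained-matching variant, and this is where the sign convention in Definition~\ref{def:node-matching-matrix} and the downlink structure of $D$ are essential.
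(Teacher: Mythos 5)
Your proposal is correct and follows essentially the same route as the paper's proof: dualize the column-exponential LPs, observe that the separation problem for the dual constraints decomposes arc-by-arc into a maximum-weight simple $b$-matching instance (with exactly the weights $c(e)(\tilde y_w - \tilde y_u)$, macro-BS potentials set to zero), and invoke the Gr\"{o}tschel--Lov\'{a}sz--Schrijver equivalence of separation and optimization to conclude polynomial-time solvability of both \eqref{eq:mtf-theta} and \eqref{eq:mtf}. The only cosmetic differences are your explicit remark about dropping negative-weight edges and the primal-recovery step, both of which are implicit in the paper's argument.
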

\begin{proof}
  See Appendix~\ref{sec:proof_MTFS-polynomial-time} for the proof.
\end{proof}

Although polynomial, in practice the ellipsoid algorithm~\cite{Khachiyan80} almost always runs much slower than the {\em simplex algorithm}. Therefore, we propose algorithms based on the {\em revised simplex algorithm}~\cite{Dantzig55} which does not require the generation of all columns of $\mat{A}^M$. Conceptually, the algorithms first create a feasible schedule. In each iteration, to improve the optimization goal, we replace a timeslot in the schedule by another simple $b$-matching (a set of simultaneous data streams) while keeping the schedule feasible, until  the optimum is reached. The optimum is guaranteed to be reachable due to the correctness of the simplex algorithm in solving linear programs.
The maximum weight simple $b$-matching algorithm~\cite{Gabow83} is used to choose a better simple $b$-matching (column) to enter the schedule (basis). 

\subsection{Solving the MTFS Problem}
\label{sec:solve-mtfs}

To optimize $\theta$, we need an initial basic feasible solution to \eqref{eq:mtf-theta}. Suppose that each relay BS is reachable from at least one macro BS by following a sequence of arcs in $D$.
Let $D_1$ be a subgraph of $D$ such that only the first data stream of each link (the first arc of each set of parallel arcs in $D$)  is kept in $D_1$. 
We add a root vertex $v_r$ to $D_1$ and add an arc from $v_r$ to each macro BS vertex in $D_1$.
We perform a {\em breadth-first-search} (BFS) in $D_1$ starting from $v_r$. The result is a tree $T$ spanning $v_r$ and all BSs. 
Removing $v_r$ from $T$, we get a forest $T'$ that has exactly $\card{M(D)}$ arcs.
The initial schedule $S_0$ is constructed as follows: $S_0$ has $\card{M(D)}$ timeslots, each of which contains a different arc in $T'$. Moreover, it is required that the throughput of every relay BS is the same and the schedule takes unit time. This initial solution is unique. We convert the linear
program \eqref{eq:mtf-theta} to the standard form \eqref{eq:mtf-theta_std} by introducing $\card{M(D)}$
\emph{surplus variables} $s_i$.

\begin{IEEEeqnarray}{lrCl}
  \IEEEyesnumber\label{eq:mtf-theta_std}\IEEEyessubnumber*
  \textnormal{min} & &\trans{\vect{f}}\vect{x}&
  \\
  \text{s.t.}\quad& \mat{U}\vect{x}& = &\vect{g}
  \textnormal{ and }
   \vect{x} \ge \vect{0},
\end{IEEEeqnarray}
where
$
  \mat{U} \eqdef [\mat{U}^1|\mat{U}^2|\mat{U}^3]\eqdef
  \left[\begin{array}{c|c|c}
          \mat{A}^M & -\vect{1} &  -\mat{I}
          \\
          \trans{\vect{1}}  & 0 & \trans{\vect{0}}
        \end{array}\right],
$
$\trans{\vect{f}}=\bigl[\trans{\vect{0}} \,|\, {-1} \,|\, \trans{\vect{0}}\bigr]$,
$\trans{\vect{x}}\eqdef \bigl[\trans{(\vect{t}^S)}\,|\, \theta
\,|\,\trans{\vect{s}}\bigr]$, and
$\trans{\vect{g}}\eqdef\bigl[\trans{\vect{0}}\,|\,1\bigr]$. 
Alg.~\ref{alg:mtf-theta} shows the computation of the max-min throughput $\theta$.

\begin{algorithm}[!t]
\SetAlgoLined
Set the basis $\mat{B}$ according to the initial schedule $S_0$\;
\While{True} {
	Compute the dual variable $\trans{\vect{p}} = \trans{\vect{f}_{\mat{B}}} \mat{B}^{-1} $\;
	
	Set weight $w(e)$ to each arc $e = (v_i, v_j)_l \in E(D)$ where
	\begin{equation*}
	w(e) \eqdef
	\begin{cases}
	c(e) (p_j - p_i) & \text{if }v_i \in M(D)
	\\
	c(e) p_j         & \text{otherwise}.
	\end{cases} 
	\end{equation*}
	Do max weight simple $b$-matching on $D$ and let the max weight be $z$.  Compute $\eta_1 = -z - p_{\card{M(D)}+1}$\label{alg:mtf-a1}\;
Compute $\eta_2 = -1 + \sum_{k = 1}^{\card{M(D)}}p_k$\label{alg:mtf-a2}\;
Compute $\eta_3 = \min_{1 \le k \le \card{M(D)}} {p_k}$\label{alg:mtf-a3}\;
Compute $\eta = \min(\eta_1, \eta_2, \eta_3)$ and let the corresponding column be $\vect{u} \in \mat{U}$\;
\eIf {$\eta \ge 0$} {
\Return $\theta^* = \theta$ and $\mat{B}_{\theta^*} = \mat{B}$\;}
{Update $\mat{B}$ by replacing a column of $\mat{B}$ with $\vect{u}$ according to the simplex algorithm\;}
}
\caption{Compute the max-min throughput $\theta^*$}
\label{alg:mtf-theta}
\end{algorithm}

The basis $\mat{B}$ is a square matrix that consists of $\card{M(D)}+1$ columns from $\mat{U}$. $\vect{f}_{\mat{B}}$ are the elements of $\vect{f}$ corresponding to $\mat{B}$.
Lines \ref{alg:mtf-a1}, \ref{alg:mtf-a2} and \ref{alg:mtf-a3} compute the minimum reduced cost of a column in the matrices $\mat{U}^1, \mat{U}^2$ and $\mat{U}^3$ respectively. 
To decrease $-\theta$, we need to find a column of $\mat{U}$, $\vect{u}_{k}$ that has negative reduced cost $f_k - \trans{\vect{p}} \vect{u}_k < 0$ to enter the basis, according to the simplex algorithm. In each iteration of the algorithm, we find the column $\vect{u}$ in $\mat{U}$ that produces the minimum reduced cost $\eta$. If $\eta \ge 0$, then no columns can be used to decrease $-\theta$, thus we have reached the optimum. 

Let the max-min throughput be $\theta^*$ and the related basis be $\mat{B}_{\theta^*}$. To directly use $\mat{B}_{\theta^*}$ as the initial basis for the solution \eqref{eq:mtf}, we add an artificial scalar variable $y \ge 0$ to \eqref{eq:mtf} and replace the constraint $\mat{A}^M  \vect{t}^{S} \ge \vect{1} \theta^*$ with $\mat{A}^M  \vect{t}^{S} - \vect{1} y \ge \vect{1} \theta^* $. 
Since $\theta^*$ is the max-min throughput, the feasible $y$ must be 0. Hence, the optimal solution to \eqref{eq:mtf} is unaffected.
Again, we convert \eqref{eq:mtf} into the standard form of \eqref{eq:mtf-theta_std}, which is solvable with the revised simplex algorithm. 

In the standard form, $\mat{U}$ remains unchanged. We redefine
$\trans{\vect{f}} \eqdef \bigl[-\trans{\vect{c}} \,|\, 0 \,|\,
\trans{\vect{0}}\bigr]$, 
$\trans{\vect{x}}\eqdef \bigl[\trans{(\vect{t}^S)}\,|\, y
\,|\,\trans{\vect{s}}\bigr]$, and
$\trans{\vect{g}}\eqdef\bigl[\trans{\vect{1}}\theta^*\,|\,1\bigr]$.
The optimization algorithm is similar to
Alg.~\ref{alg:mtf-theta} and is outlined in Alg.~\ref{alg:mtf}. 
Since the basis $\mat{B}$ is a square matrix of dimension $\card{M(D)}+1$, it follows that the optimal schedule $S^*$ contains
no more than $\card{M(D)}+1$ timeslots. Additionally, since the links of a 
flow from a macro BS to a destination relay BS may not be scheduled in sequential
order, some transmission opportunities of the flow in the first few frames may be wasted. Therefore, maximum throughput is achieved in the long-term.

\begin{algorithm}[!t]
\SetAlgoLined
Set the basis $\mat{B} = \mat{B}_{\theta^*}$\;
\While{True} {
	Compute the dual variable $\trans{\vect{p}} = \trans{\vect{f}_{\mat{B}}} \mat{B}^{-1}$\;
	Set weight $w(e)$ to each arc $e = (v_i, v_j)_l \in E(D)$ where
	\begin{equation*}
	w(e) \eqdef 
	\begin{cases}
	c(e) (p_j - p_i) & \text{if }v_i \in M(D) \\
	c(e) (p_j + 1)   & \text{otherwise}.
	\end{cases}
	\end{equation*}
	Do max weight simple $b$-matching on $D$ and let the max weight be $z$. Compute $\eta_1 = -z - p_{\card{M(D)}+1}$\label{alg:mtf-b1}\;
	Compute $\eta_2 =  \sum_{k = 1}^{\card{M(D)}}p_k$\;
	Compute $\eta_3 = \min_{1 \le k \le \card{M(D)}} {p_k}$\;
	Compute $\eta = \min(\eta_1, \eta_2, \eta_3)$ and let the corresponding column be $\vect{u} \in \mat{U}$\;
	\eIf {$\eta \ge 0$} 
	{\Return the optimal schedule $S^*$ corresponding to $\mat{B}$\;}
	{Update $\mat{B}$ by replacing a column of $\mat{B}$ with $\vect{u}$\;}

}
\caption{Solving the MTFS problem}
\label{alg:mtf}
\end{algorithm}

\subsection{Reducing Simple $b$-matching to Matching}
To do maximum weight simple $b$-matching, we can either use dedicated algorithms such as~\cite{Gabow83}, or reduce it to a matching (equivalent to simple $1$-matching) problem~\cite{Edmonds65b}, for which highly efficient algorithms and implementations are available.
In this work, we use the state-of-the-art C++ implementation for {\em maximum weight perfect matching} on a general graph~\cite{Kolmogorov09}, where a perfect matching matches all vertices of a graph.
Thus, we need to reduce a maximum weight simple $b$-matching problem to a maximum weight perfect matching problem.

\subsubsection{Reduction for the case of MAX-SU-SM}
We use the reduction by Tutte~\cite{Tutte54}. Given a {\em strict graph}\footnote{A graph is {\em strict} if it has no loops or parallel edges between any pair of vertices.} $G$ whose edges have positive weights $\vect{w}$ and whose vertices have numbers  $b = \vect{r} \eqdef [r(v) | v \in V(G)]$, we create a graph $G'$ as follows. 
Each vertex $v \in V(G)$ is mapped to $r(v)$ vertices $v^{(1)} \dots v^{(r(v))}$.
Each edge $e = \{u, v\} \in E(G)$ with weight $w(e)$, is mapped to $r(u) \cdot r(v)$  edges $\big\{ \{u^{(i)}, v^{(j)}\} | \forall i, \forall j \big \}$ in $G'$, each of which is assigned the weight $w(e)$. 
Then a graph $G''$ is created by duplicating $G'$ and connecting each pair of symmetric vertices by an edge of zero weight.
An example of the reduction is shown in Fig.~\ref{fig:b-match-reduct}.
It is obvious that a maximum weight $b$-matching in $G$ can be deduced from a maximum weight perfect matching in $G''$.
An edge  $\{u^{(i)}, v^{(j)}\}$ in a matching is mapped to the edge $\{u, v\}$ in a $b$-matching.
\begin{figure}[!bpht]
	\centering
	\includegraphics[width=6cm]{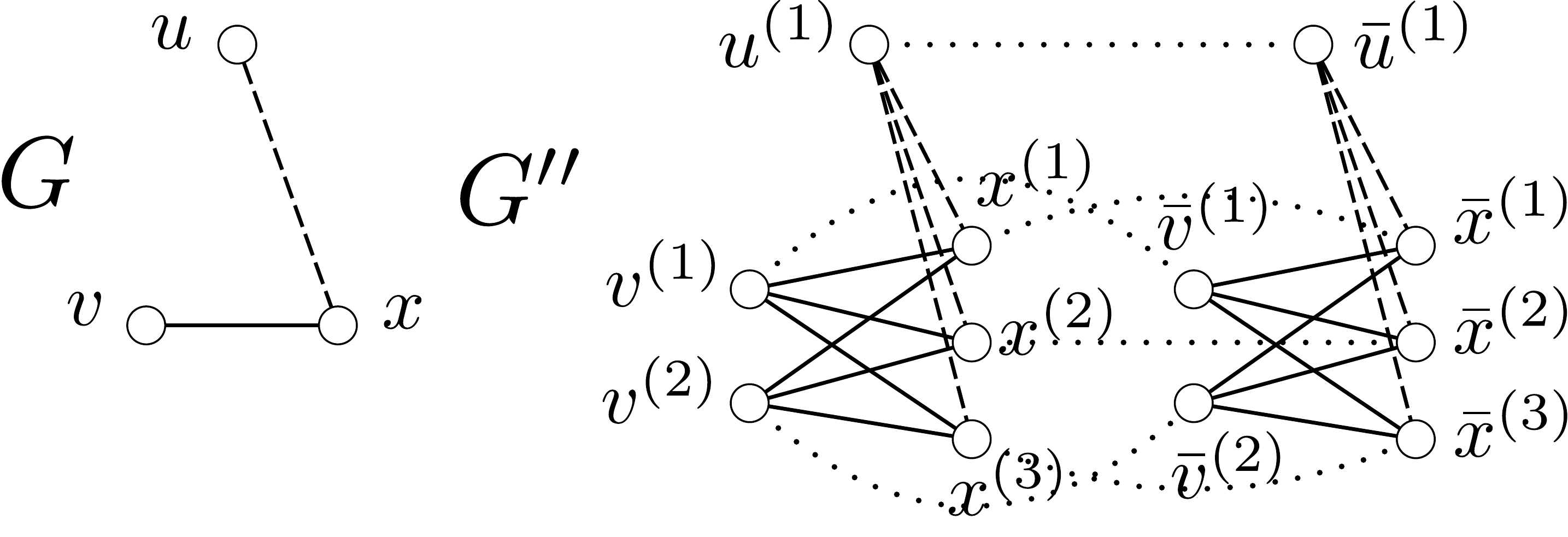}
	\caption{Reduction for the case of MAX-SU-SM. $u, v, x$ have 1, 2, 3 RF chains, respectively. The edges of the same style have the same weight. 
	Dotted edges have zero weights.}
	\label{fig:b-match-reduct}	
\end{figure}

\subsubsection{Reduction for the case of REAL-SU-SM}
We adapt the reduction in~\cite{Schrijver03} for this case. 
Given a multigraph $G$ whose edges have positive weights $\vect{w}$ and whose vertices have numbers $b = \vect{r} \eqdef [r(v) | v \in V(G)]$, we assume that $r(v) \le \deg(v)$ for any $v \in V(G)$ where $\deg(v)$ is the degree of $v$; otherwise we set $r(v) = \deg(v)$. We create a graph $G'$ as follows.
For each vertex $v$ in $G$, we create $r(v)$ vertices in $G'$, labeled as $v^{(1)} \dots v^{(r(v))}$. These vertices are called {\em outer vertices}.
For each edge $e = \{u, v\}$ in $G$, we add two {\em inner vertices} $e_u, e_v$ to $G'$, and $r(u) + r(v) + 1$ edges of weight $w(e)$, which are $\{u^{(i)}, e_u\}$,
 $\forall i$; $\{e_u, e_v\}$ and $\{e_v, v^{(j)}\}$, $\forall j$.
Then $G''$ is created by duplicating $G'$ and connecting each pair of symmetric outer vertices by an edge of zero weight.
An example of the reduction is shown in Fig.~\ref{fig:simple-match-reduct}.
It is obvious that a maximum weight simple $b$-matching in $G$ can be deduced from a maximum weight perfect matching in $G''$.
An edge $\{u^{(i)}, e_u\}$ in a matching is mapped to the edge $e$ in a simple $b$-matching.
\begin{figure}[!bpht]
	\centering
	\includegraphics[width=8cm]{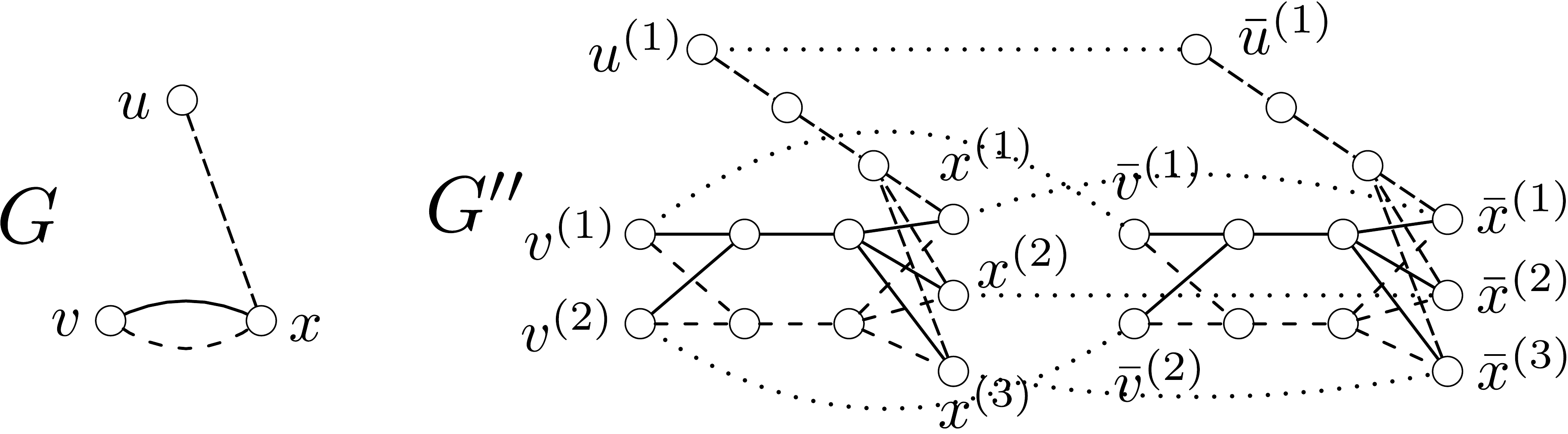}
	\caption{Reduction for the case of REAL-SU-SM. $u, v, x$ have 1, 2, 3 RF chains, respectively. The edges of the same style have the same weight. 
		Dotted edges have zero weights.}
	\label{fig:simple-match-reduct}	
\end{figure}

\section{Further Complexity Results On MTFS Scheduling}
\label{sec:complexity}

In this section, we will develop further computational complexity results on MTFS scheduling, for different duplexity modes, interference models and single-user spatial multiplexing (SU-SM) models.

\subsection{MTFS Under The Pairwise Link Interference Model is NP-hard}
\label{sec:mtfs-pi}
For both full-duplex and half-duplex scheduling, the MTFS problem is NP-hard if we assume an arbitrary pairwise link interference (PI) model.
\begin{thm}
   The MTFS problem is NP-hard under the pairwise link interference model for both half-duplex and full-duplex backhaul networks.
	\label{thm:mtfs-intf}
\end{thm}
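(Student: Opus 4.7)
The plan is to reduce from Maximum Independent Set (MIS) on arbitrary graphs, exploiting the direct correspondence between PI-feasible schedules and fractional colorings of the link conflict graph. Under the PI model, the arcs scheduled in any single instant must simultaneously form a simple $b$-matching of $D$ and an independent set in the conflict graph induced by the interference relation; it is this second requirement that introduces the combinatorial hardness that was absent in the full-duplex/NI case of \S\ref{s:fd-sched-fair}.

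Concretely, given an arbitrary graph $H=(V_H,E_H)$ with $\card{V_H}=n$, I would construct the following MTFS instance: take $n$ vertex-disjoint pairs $\{(b_i,m_i)\}_{i=1}^n$ consisting of one macro BS $b_i$ and one relay BS $m_i$, connected by a single unit-capacity arc $l_i=(b_i,m_i)$, with one RF chain per node. The macro BSs are joined by the zero-delay fiber of the base model. Because the arcs $l_i$ are pairwise vertex-disjoint, the PI relation from \S\ref{sec:sys-intf} is well-defined on every pair, and I would set $\mathrm{intf}(l_i,l_j)=1$ if and only if $\{i,j\}\in E_H$. No node in the construction both transmits and receives, so a schedule is feasible under the half-duplex constraint if and only if it is feasible under the full-duplex one; a single reduction therefore settles both cases of the theorem simultaneously.

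With this construction, an instantaneous set of active arcs is schedulable exactly when the corresponding vertex set of $H$ is an independent set, and any unit-time schedule decomposes into independent sets $I_k$ of $H$ with durations $s_k\ge 0$ summing to one. The throughput of $m_i$ is then $\sum_{k:\,i\in I_k}s_k$, and a short LP-duality argument shows that the max-min throughput satisfies $\theta^*=1/\chi_f(H)$, where $\chi_f(H)$ is the fractional chromatic number of $H$. Since deciding $\chi_f(H)\le q$ for rational $q$ is NP-hard---it is polynomial-time equivalent to maximum-weight independent set via the Gr\"otschel--Lov\'asz--Schrijver separation framework---solving MTFS under the PI model is NP-hard.

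The main obstacle is to justify that any abstract pairwise conflict pattern on vertex-disjoint links is realizable as a valid PI specification in the sense used by the paper; this is settled by appealing to the protocol-style interpretation of \S\ref{sec:sys-intf}, where arbitrary pairwise interference patterns are admitted, and by noting that standard geometric placements of narrow-beam transmitter/receiver pairs can be arranged so that exactly the desired pairs fall into each other's mainlobes with SINR below $\tau$. A secondary technical point---that NP-hardness of the first stage (computing $\theta^*$) is enough to establish NP-hardness of the full two-stage MTFS definition---is immediate, since any MTFS solver yields $\theta^*$ as a byproduct of its output schedule.
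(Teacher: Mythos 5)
Your proposal is correct and follows essentially the same route as the paper's own proof: one macro--relay pair with a single unit-capacity, unit-RF-chain arc per vertex of the source graph, interference mirroring the edge set, and the identity $\theta^*=1/\chi_f$ combined with the NP-hardness of computing the fractional chromatic number \cite{Groetschel81}. The only differences are cosmetic --- you additionally spell out the realizability of arbitrary PI patterns and the two-stage issue, which the paper leaves implicit.
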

\begin{proof}
	The proof can be done by relating the MTFS problem to computing the fractional chromatic number. See Appendix~\ref{sec:thm:mtfs-intf} for details.
\end{proof}

\subsection{Half-duplex MTFS is NP-hard}
\label{s:hd-mtfs}
Different from the polynomial-time solvable problem of full-duplex MTFS under the NI model, half-duplex MTFS is NP-hard, which will be proved in the following.
Then we will show a special case that allows a polynomial-time optimal solution to the half-duplex MTFS problem.
Since it is proved in \S\ref{sec:mtfs-pi} that half-duplex MTFS is NP-hard under the PI model, we assume the NI model in this subsection.

\subsubsection{Linear Programs for Half-duplex MTFS Problem}
Compared to the full-duplex case, the half-duplex scheduling has the additional half-duplex constraint---a node cannot work as transmitter and receiver simultaneously. Therefore, the matching-based optimization method that works successfully for the full-duplex scheduling cannot be applied directly.
For the half-duplex case, the data streams scheduled in each timeslot must be a {\em half-duplex subgraph} $J \subseteq D$ which is defined as follows.
\begin{mydef}[Half-duplex Subgraph] A half-duplex subgraph of a directed network $D$ is a subgraph $J \subseteq D$ such that (i) $J$ is a simple $b$-matching of $D$ with $b = [r(v) | v \in V(D)]$ and (ii) $J$ is a {\em directed bipartite graph}, i.e., $V(J)$ can be divided into two disjoint sets $V_1, V_2$ where each arc of $E(J)$ has the head in $V_2$ and the tail in $V_1$.
\end{mydef}
The constraint \emph{(i)} is due to the number of RF chains.
The constraint \emph{(ii)} reflects the half-duplex property, because the active nodes in a timeslot can be divided into the sender set $V_1$ and the receiver set $V_2$, where a data stream only goes from a sender to a receiver. 
Analogous to the node-matching matrix, we define the {\em node-hd-subgraph matrix} for the formulation of the half-duplex MTFS problem.
\begin{mydef} [Node-hd-subgraph Matrix] Given a directed network $D$, suppose that the number of all half-duplex subgraphs of $D$ is $K$. Then the node-hd-subgraph matrix $\mat{L} = [l_{i, j}]$ is a $|V(D)| \times K$ matrix. Denote the $i$-th vertex of $D$ as $v_i$, which is related to the $i$-th row of $\mat{L}$. 
	Denote the $j$-th half-duplex subgraph of $D$ as $J_j$, which is related to the $j$-th column of $\mat{L}$.
	Each element $l_{i, j}$ is equal to the sum capacity of all arcs in $J_j$ that enter $v_i$ minus the sum capacity of all arcs in $J_j$ that leave $v_i$.
\end{mydef}
\begin{figure}[!t]
	\begin{minipage}{0.12\textwidth}
		\begin{figure}[H]
			\includegraphics[width=2cm,left]{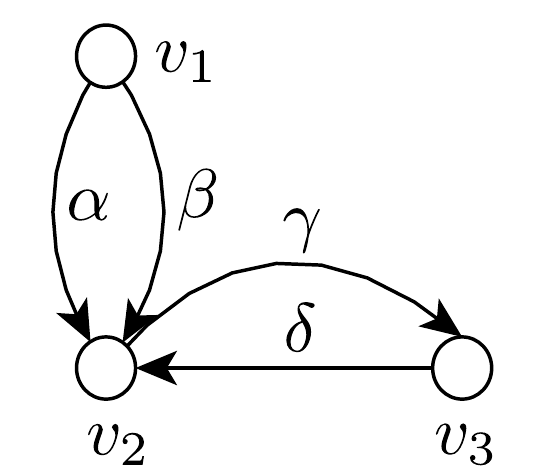}	
		\end{figure}
	\end{minipage} \hspace{-0.5cm}
	$\mat{L}$
	\begin{minipage}{0.3\textwidth} \footnotesize
		$\begin{blockarray}{cccccccc}
		& \alpha  & \beta  & \gamma   & \delta  & \alpha, \beta & \alpha, \delta & \beta, \delta   \\
		\begin{block}{c[ccccccc]}
		v_1  & -8 & -8 & 0   & 0  & -16 & -8  & -8 \\
		v_2  & 8  & 8  & -3  & 3  & 16  & 11  &  11 \\
		v_3  & 0  & 0  & 3   & -3 & 0   & -3  &  -3 \\
		\end{block}
		\end{blockarray}$
	\end{minipage}
	\caption{Node-hd-subgraph matrix $\mat{L}$. $v_1, v_2$ and $v_3$ have 2, 2 and 1 RF chain respectively. The arc capacities are: 
		$c(\alpha) = c(\beta) = 8$, $c(\gamma) = c(\delta) = 3$.}
	\label{node-hd-subgraph}
\end{figure}
Fig.~\ref{node-hd-subgraph} gives an example of the node-hd-subgraph matrix for a directed network. Similar to the definition of $\mat{A}^M$ in \S\ref{ss:sched-ori-opt}, $\mat{L}^M$ is the submatrix of $\mat{L}$ that only consists
of the rows related to relay BSs. 
The linear program formulation of the half-duplex MTFS problem is the same as \eqref{eq:mtf-theta} and \eqref{eq:mtf} except that $\mat{A}^M$ is replaced by $\mat{L}^M$.
Yet, different from the full-duplex MTFS problem, the half-duplex MTFS problem is NP-hard.
The intuitive reason is that the solution of these two problems requires computing a maximum weight simple $b$-matching and  a maximum weight half-duplex subgraph, respectively. The first can be done in polynomial time while the second is NP-hard. We will give a formal proof in the following.

\subsubsection{The Half-duplex MTFS Problem is NP-hard} \label{ss:hd-mtfs-np-comp}
\begin{mydef} [Maximum Weight Half-duplex Subgraph (MWHS) Problem] Given a directed network $D$ and a weight function $w(e)$ defined for each arc $e$, find a half-duplex subgraph $J \subseteq D$ such that $\sum_{e \in E(J)} w(e)$ is maximum.
\end{mydef}
Since the maximum weight simple $b$-matching problem can be solved in polynomial time, so can the full-duplex MTFS problem (see \S\ref{s:fd-sched-fair}). Analogously, if the MWHS problem could be solved in polynomial time, so could be the half-duplex MTFS problem.
Unfortunately, this is not the case. The MWHS problem is NP-hard for a directed network even if it is a directed acyclic graph (DAG). Furthermore, by extending the technique for proving NP-hardness of MWHS, we can prove that the half-duplex MTFS problem is also NP-hard.

\begin{lem} The MWHS problem is NP-hard for a directed network that is a DAG.
	\label{lem:MCHS-DAG}
\end{lem}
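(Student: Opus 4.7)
The plan is to prove the lemma by a polynomial-time reduction from a known NP-hard problem, and a natural source is MAX-CUT, because the half-duplex condition has a strong bipartite flavour: in a DAG, an arc set $J$ is a half-duplex subgraph if and only if the vertices it touches split into ``pure sources'' (transmitters) and ``pure sinks'' (receivers), equivalently if and only if $J$ contains no directed two-path $u \to v \to w$. I would make this ``no-2-path'' reformulation the first step, since it strips the problem of its $b$-matching decoration and localises the difficulty in a cleanly stated combinatorial constraint.

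The second step is the construction. Starting from a MAX-CUT instance $G = (V, E)$, I would build a DAG $D$ containing two role copies $v^{T}$ and $v^{R}$ of each $v \in V$ and, for every edge $\{u, v\} \in E$, the unit-weight arcs $(u^{T}, v^{R})$ and $(v^{T}, u^{R})$. Additional auxiliary vertices on higher topological layers, connected by heavy-weight ``coupling'' arcs, are used so that in any optimal half-duplex subgraph each original vertex is forced to commit to a single role, with its active copy encoding the side of the cut it lies on. I would assign a large enough RF chain number at every vertex so that the $b$-matching / degree constraint in the definition of a half-duplex subgraph is inactive, cleanly isolating the hardness inside the bipartition constraint. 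The correspondence I would then prove has two directions: any cut of $G$ of size $k$ lifts to a half-duplex subgraph of $D$ of weight $\Phi + k$ for a constant $\Phi$ depending only on the gadgets, and conversely any half-duplex subgraph of weight at least $\Phi + k$ can be decoded in polynomial time into a cut of $G$ of size at least $k$.

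The hard part will be the design of the coupling gadget under the acyclicity restriction. Because $D$ must be a DAG, consistency between $v^{T}$ and $v^{R}$ cannot be enforced by a short two-way cycle, so it has to be obtained indirectly through auxiliary vertices lying on distinct topological levels whose only profitable role assignment in a near-optimal half-duplex subgraph mirrors the role assignment of $v^{T}$ and $v^{R}$. Verifying that these gadgets compose across all vertices of $G$ without inducing cycles, and that any inconsistent assignment is strictly suboptimal by a margin exceeding any possible gain from the unit-weight edge arcs, is the delicate technical step. Once the gadget is in place, the polynomial size of the construction and the NP-hardness of MAX-CUT deliver the lemma.
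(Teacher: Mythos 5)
Your opening reformulation is sound: in the absence of binding degree constraints, a half-duplex subgraph of a DAG is exactly an arc set with no directed two-path, i.e.\ one whose touched vertices split into pure sources and pure sinks. But the proof as written has an essential gap, and you have correctly located it yourself: the coupling gadget is the entire content of the reduction, and it is not constructed. Without it, your instance is trivial --- every $v^{T}$ appears only as a tail and every $v^{R}$ only as a head, so the set of \emph{all} arcs $(u^{T},v^{R}),(v^{T},u^{R})$ is already a half-duplex subgraph of weight $2|E|$, independent of any cut of $G$. Worse, the natural gadget fails: to force ``$v$ transmits or $v$ receives, not both'' you want a heavy arc into $v^{T}$ (blocking transmission) mutually exclusive with a heavy arc out of $v^{R}$ (blocking reception), e.g.\ via an auxiliary vertex $g_v$ with arcs $(v^{R},g_v)$ and $(g_v,v^{T})$; but composed with the edge arcs $(u^{T},v^{R})$ and $(v^{T},u^{R})$ this produces the directed cycle $u^{T}\to v^{R}\to g_v\to v^{T}\to u^{R}\to g_u\to u^{T}$, violating acyclicity. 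Any consistency gadget between the two role copies of $v$ inherently wants information to flow both into the $T$-copy and out of the $R$-copy, which is exactly what a DAG forbids; it is not evident that a layered workaround exists, and you give none. So the argument does not go through as stated.

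The paper's proof avoids this trap by reducing from SAT rather than MAX-CUT. There the binary choice for each variable $x_l$ is made by a single selector vertex $r_l$ with one RF chain and two outgoing arcs $(r_l,p_l)$, $(r_l,n_l)$: whichever of $p_l,n_l$ receives the heavy selector arc is thereby forced into the receiver role and cannot transmit to the clause vertices $q_k$ below it. All arcs point strictly downward (selectors $\to$ literal vertices $\to$ clause vertices), so acyclicity is free, and the mutual exclusion comes from the single RF chain at $r_l$ rather than from a two-way coupling between copies. If you want to salvage your route, you would need either a genuinely acyclic XOR gadget (which is the whole difficulty) or to switch to a one-sided selection mechanism of the kind the paper uses.
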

\begin{proof} The proof is done by reduction from the SAT problem. For further details, see Appendix~\ref{sec:proof-lem-MCHS-DAG}.
\end{proof}

\begin{thm}
	The half-duplex MTFS problem is NP-hard for a general directed network.
	\label{thm:HD-MTFS-NP-complete}	
\end{thm}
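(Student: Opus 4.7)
The plan is to reduce the MWHS problem on a DAG, which is NP-hard by Lemma 1, to the half-duplex MTFS problem. The intuition is the same as the one made explicit in the text around Algorithm 2 and Section V.B.1: the column-generation pricing step for the MTFS linear program over half-duplex subgraphs is precisely an MWHS instance, so efficient MTFS would imply efficient MWHS. I want to turn this intuition into a Karp-style polynomial reduction.

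First I would take an arbitrary MWHS instance, i.e., a DAG $G$ with positive arc weights $w:E(G)\to\mathbb{Q}_+$, and build a directed network $D$ as follows. Treat the vertices of $G$ as relay BSs, copy every arc of $G$ into $D$ with capacity $c(e)=w(e)$, and set $r(v)=\deg(v)$ for every $v\in V(G)$ so that the RF-chain constraint never bites. I then attach a single macro BS $s$ with many RF chains, together with very high-capacity arcs from $s$ into every vertex of $G$ that has no $G$-incoming arc, so that every relay BS is reachable from $s$ and so that the $s$-to-$G$ traffic can be delivered in a negligible fraction of the unit time. Finally, I append, at each relay BS $v$, a private dedicated "sink" vertex with a high-capacity arc, so that any incoming flow can be absorbed without violating flow conservation at $v$.

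The engineering aim is that the schedule $S^{*}$ produced by any half-duplex MTFS solver essentially spends nearly all of the unit time on a single half-duplex subgraph $J^{*}$ inside the $G$-portion of $D$, with only an arbitrarily short prefix used for the macro-BS feed. Under this design the total downlink throughput delivered inside $G$ in the unit frame equals $\sum_{e\in E(J^{*})}c(e)=\sum_{e\in E(J^{*})}w(e)$. By choosing the capacities from $s$ and from the sinks large enough (polynomial in the input), one can ensure that the max-min fairness constraint is slack and that the network-throughput-maximizing schedule, restricted to $G$, must be supported on a maximum weight half-duplex subgraph. Reading $J^{*}$ off the schedule then solves the original MWHS instance, proving NP-hardness.

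The main obstacle, and the step I would have to justify with care, is precisely this "collapse to one slot" argument: the MTFS optimum is a convex combination of half-duplex subgraphs, and in principle the maximum network throughput could be attained by a nontrivial mixture rather than by a single MWHS optimizer. I would address this by a standard extreme-point / linearity argument—the objective is linear in the slot lengths $\vect{t}^{S}$, so the maximum over the simplex $\trans{\vect{1}}\vect{t}^{S}=1$ is always attained at a vertex, and the vertices that are feasible given the slackness of all other constraints correspond to putting all the unit time on one half-duplex subgraph, which must then be of maximum weight. Once this is in place, the reduction is polynomial, so Lemma 1 transfers and the half-duplex MTFS problem on a general directed network is NP-hard.
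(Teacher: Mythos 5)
There is a genuine gap, and it sits at the heart of your reduction. In the MTFS problem the network throughput $\sum_{v \in M(D)} h_v^S$ telescopes: every arc between two relay BSs contributes $+c(e)t_e$ to its head and $-c(e)t_e$ to its tail, so relay-to-relay arcs cancel and the objective equals the total capacity-time of macro-to-relay arcs only. The paper makes this explicit by defining the capacity vector $\vect{c}$ with $c_j = \sum_{\{e \mid e \in M_j,\, \mathrm{tail}(e) \in B(D)\}} c(e)$. In your construction the MWHS weights $w(e)$ are encoded as capacities of arcs \emph{inside} $G$, i.e.\ between relay BSs, so they never enter the objective at all; the throughput-maximizing schedule simply saturates the feed arcs out of $s$ and has no incentive to select a maximum-weight half-duplex subgraph of $G$. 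Deliberately making the max-min constraint slack makes this worse, not better: once that constraint is loose, nothing in either stage of the optimization forces any $G$-arc to be scheduled. Two smaller problems compound this. First, there is no flow-conservation constraint to ``violate'': a relay BS is allowed to accumulate data (that accumulation \emph{is} its throughput $h_v = \text{in} - \text{out}$), so the private sinks do nothing except add more relay BSs whose own throughput must exceed $\theta^*$, tightening rather than relaxing the problem. Second, even with a corrected objective, the extreme-point argument only shows that \emph{some} optimal basic solution is a vertex of the schedule polytope; to transfer NP-hardness you need the optimal \emph{value} of the MTFS instance to reveal the MWHS value, which your objective does not do. Note also that the paper itself flags the ``pricing step is MWHS'' observation as mere intuition --- hardness of a subroutine used by one particular algorithm does not imply hardness of the problem --- which is exactly why a separate formal reduction is needed.

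The paper's actual proof does not reduce from MWHS at all. It reduces directly from SAT, building a gadget with $4L$ single-RF-chain macro BSs $r_l^{(m)}$ whose outgoing arcs have capacity $K/2+1$ and clause vertices $q_k$ with unit-capacity incoming arcs, and then exploits the \emph{conjunction} of the two optimization stages: achieving max-min throughput $\theta = 1$ forces every clause vertex to act as a receiver at all times, while achieving network throughput $4L(K/2+1)$ forces every macro BS to act as a transmitter at all times; the half-duplex constraint then prevents $p_l^{(m)}$ and $n_l^{(m')}$ from transmitting simultaneously, which is what encodes a consistent truth assignment. If you want to salvage a reduction from MWHS instead, you would at minimum have to re-encode the weights onto macro-to-relay arcs (or into the max-min stage) so that the MTFS objective actually sees them --- at which point you are essentially rebuilding a gadget of the paper's type.
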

\begin{proof}
	See Appendix~\ref{sec:proof-thm-HD-MTFS} for the proof.
\end{proof}
In summary, assuming the NI model, the optimal scheduling problem of a mmWave backhaul networks can be solved in polynomial time when all BSs are full-duplex.
In contrast, the problem is NP-hard when all BSs are half-duplex, i.e., it is impossible to obtain an optimal schedule in polynomial time.

\subsection{Special Case: Half-duplex MTFS is Solvable in Polynomial Time} \label{sec:hd-mtfs-special-case}
We now study a special case of the half-duplex MTFS that is solvable in polynomial time. 
We refer to such backhaul networks as {\em uniform orthogonal backhaul networks}.

\begin{mydef}[Uniform Orthogonal Backhaul Network] A backhaul network that is represented by the directed network $D$ that satisfies the following  conditions:
	\begin{enumerate}
		\item There is no interference (NI model) between any pair of links.
		\item The MAX-SU-SM model is assumed for single-user spatial multiplexing.
		\item Each relay BS has the same number of RF chains, i.e., $r(v) \equiv r^M \in \mathbb{N}, \forall v \in M(D)$. 
		In addition, any macro BS has the RF chain number that is a multiple of $r^M$, i.e., for each $i=1 \dots |B(D)|$ and $n_i \in B(D)$,
		$r(n_i) = k_i \cdot r^M$, for some $k_i \in \mathbb{N}$.
	\end{enumerate}
\end{mydef}

Assuming the NI model, half-duplex MTFS is solvable in polynomial time if every node has a {\em single RF chain}, because the half-duplex constraint is automatically satisfied if every node can serve only one data stream.
Thus, the optimal schedule can be obtained with the optimal matching-based algorithm.
Next we will prove that half-duplex MTFS is also solvable in polynomial time for uniform orthogonal backhaul networks and provide an optimal algorithm.
First we look at the case that each node has the same number of RF chains $R$. In this case, the directed network $D$ is a multi-digraph, each arc of which belongs to a set of $R$ equivalent (same head, tail and capacity) parallel arcs. 
\begin{thm}
	Given a uniform orthogonal backhaul network $D$, each node of which has the same number of RF chains: $r(v) \equiv R$, $\forall v$, the half-duplex MTFS problem can be solved in polynomial time as follows:
	\begin{enumerate}
		\item Compute the optimal schedule $S$ with the matching-based optimal MTFS algorithm on $D$'s link network $L$ where the capacity of an arc in $L$ is the same as that of an arc in $D$ with the same head and tail, assuming each node has one RF chain.
		\item The optimal schedule $S^*$ consists of $R$ copies of $S$ running in parallel.
	\end{enumerate}
	\label{thm:hd-mtfs-uniform-rf}
\end{thm}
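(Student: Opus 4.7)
The plan is to prove the theorem in two directions: first that the proposed $S^*$ is a feasible half-duplex schedule on $D$ whose throughput vector is exactly $R$ times that of $S$ on $L$, and second that no feasible half-duplex schedule on $D$ can do better, via a König-type decomposition back into schedules on $L$.

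For feasibility and throughput of $S^*$, note that the algorithm in step~(1) runs the full-duplex matching-based MTFS algorithm of \S\ref{s:fd-sched-fair} on $L$ with $r(v)\equiv 1$. Each timeslot of $S$ is therefore a matching of the simple graph $L$, where every vertex is incident to at most one edge. Such a matching is trivially bipartite as a directed object—its active vertices split into a sender set and a receiver set—so it satisfies the half-duplex constraint. Under the MAX-SU-SM assumption with $r(u)=r(v)=R$, each link $l\in E(L)$ corresponds to $R$ parallel arcs of equal capacity in $D$; running $R$ copies of the matching in parallel in a single timeslot of $S^*$ consumes exactly $R$ RF chains at each active node, one per copy, and keeps the sender/receiver partition intact. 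Hence $S^*$ is half-duplex feasible on $D$, and for every vertex $v$, $h^{S^*}_v = R\cdot h^{S}_v$; in particular the max-min throughput and total network throughput of $S^*$ equal $R$ times the corresponding quantities of $S$.

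For optimality I would show: every feasible half-duplex unit-time schedule $S'$ on $D$ admits a feasible unit-time schedule $\tilde S$ on $L$ (with $r\equiv 1$) such that $\vect{h}^{\tilde S}=\vect{h}^{S'}/R$. Pick any timeslot $(J_k,t_k)$ of $S'$. The subgraph $J_k$ is a directed bipartite graph with sender set $V_1$ and receiver set $V_2$; the $b$-matching constraint with $b\equiv R$ makes its underlying undirected multigraph bipartite with maximum degree at most $R$. By König's edge-coloring theorem, its edges can be properly colored with at most $R$ colors, yielding $R$ color classes $M_{k,1},\dots,M_{k,R}$, each a matching in the multigraph. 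Because any two parallel arcs of the same link share both endpoints, they must fall in distinct color classes, so each $M_{k,r}$ projects to a matching of the simple link network $L$ with total capacity equal to the sum of its arc capacities in $D$. Replacing $(J_k,t_k)$ with the $R$ timeslots $(M_{k,1},t_k/R),\dots,(M_{k,R},t_k/R)$ preserves total schedule length and, summed over all original timeslots, produces $\tilde S$ satisfying
\begin{equation*}
h^{\tilde S}_v = \tfrac{1}{R}\sum_k t_k \sum_{r=1}^{R}(\text{throughput of } M_{k,r} \text{ at } v) = \tfrac{1}{R}\,h^{S'}_v.
\end{equation*}
Since $S$ is MTFS-optimal on $L$, $\min_v h^{\tilde S}_v\le \min_v h^{S}_v$, so $\min_v h^{S'}_v = R\min_v h^{\tilde S}_v \le R\min_v h^{S}_v = \min_v h^{S^*}_v$, giving the max-min optimality of $S^*$. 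Applying the same decomposition to any $S'$ that attains the max-min optimum yields a $\tilde S$ attaining it on $L$, so $\sum_v h^{\tilde S}_v\le\sum_v h^{S}_v$ (by the second stage of MTFS-optimality of $S$), which scales up to $\sum_v h^{S'}_v\le\sum_v h^{S^*}_v$. Polynomial runtime of the whole procedure follows from Theorem~\ref{thm:MTFS-polynomial-time} applied to $L$ in step~(1), plus the trivial replication in step~(2).

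The main obstacle will be the decomposition step: I must justify carefully that the König coloring of the bipartite multigraph $J_k$ produces matchings of the simple graph $L$ (not just of the multigraph) and that their combined capacity equals the flow carried by $J_k$. This is exactly where the MAX-SU-SM hypothesis and the equal-capacity-parallel-arcs structure enter; without them, different parallel arcs of a single link could have different capacities and the clean factor-of-$R$ scaling between $S'$ and $\tilde S$ would break, so this step is where the ``uniform orthogonal'' assumption is used in full.
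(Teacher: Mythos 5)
Your proof is correct, and it reaches the conclusion by a genuinely different route than the paper. Both arguments hinge on the same combinatorial fact --- a bipartite multigraph with uniform edge multiplicity $R$ and maximum degree at most $R$ decomposes into $R$ matchings of the underlying simple graph (the paper proves this via Hall's marriage theorem in its Lemma on maximum weight bipartite subgraphs; you invoke K\H{o}nig's edge-coloring theorem, which is the same content) --- but they deploy it at different levels. The paper embeds the fact inside its revised-simplex machinery: it rewrites the half-duplex MTFS linear programs with the node-hd-subgraph matrix $\mat{L}^M$ and shows that the pricing step (the maximum weight half-duplex subgraph problem) can be solved on the uniform network by computing a maximum weight matching on $L$ and replicating it $R$ times, so that every column ever entering the basis is an $R$-fold replicated matching and the optimal schedule inherits that form. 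You instead argue directly at the schedule level: a feasibility/achievability direction showing $\vect{h}^{S^*}=R\,\vect{h}^{S}$, and a converse that decomposes an arbitrary optimal half-duplex schedule on $D$, timeslot by timeslot, into a unit-time schedule on $L$ with throughput vector scaled by $1/R$. Your version is more self-contained (it does not require re-verifying the correctness of column generation for the half-duplex LP, only the already-established optimality of the algorithm on $L$), and it makes explicit where the MAX-SU-SM equal-capacity hypothesis is indispensable; the paper's version buys an actual polynomial-time pricing oracle for the half-duplex LP on uniform networks, which is slightly more algorithmic information than the bare optimality statement. The only points worth tightening in your write-up are (i) stating explicitly that the two directions together give $\theta^*_D = R\,\theta^*_L$, which is what makes your decomposed schedule $\tilde S$ feasible for the second-stage (network-throughput) linear program on $L$, and (ii) noting that on $L$ with one RF chain per node the half-duplex and full-duplex problems coincide, so invoking the full-duplex optimal algorithm in step~(1) is legitimate --- both are observations the paper also relies on.
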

\begin{proof}
	See Appendix~\ref{sec:thm:hd-mtfs-uniform-rf} for the proof.
\end{proof}

We can now relax the condition that all nodes in $D$ have the same number of RF chains.

\begin{cor}
	Given a uniform orthogonal backhaul network $D$, the optimal solution to the half-duplex MTFS problem can be solved in polynomial time as follows. Assume that each relay BS has the same number of RF chains, $r(v) \equiv r^M \in \mathbb{N}, \forall v \in M(D)$. In addition, any macro BS has an RF chain number that is a multiple of $r^M$, i.e., for each $i = 1 \dots \card{B(D)}$ and $n_i \in B(D)$, $r(n_i) = k_i \cdot r^M$ with $k_i \in \mathbb{N}$.
	\begin{enumerate}
		\item
		Replace each macro BS vertex $n_i$ by $k_i$ vertices $n_i^{(1)} \dots n_i^{(k_i)}$,  each of which has $r^M$ RF chains. The connection of $n_i^{(j)}$ to the relay BSs is the same as that of $n_i$ (same number of arcs with the same head and capacity). Let the resulting directed network be $D'$.
		\item The optimal half-duplex MTFS schedule for $D$ is equivalent to that for $D'$, which is obtained with the algorithm in Theorem~\ref{thm:hd-mtfs-uniform-rf}.
	\end{enumerate}
	\label{cor:uni-rf-sched}
\end{cor}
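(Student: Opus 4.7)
The plan is to reduce the half-duplex MTFS problem on $D$ to an equivalent instance on $D'$, on which Theorem~\ref{thm:hd-mtfs-uniform-rf} applies directly, since every vertex of $D'$ has exactly $r^M$ RF chains and the NI and MAX-SU-SM assumptions are inherited. Two things need to be verified: (a) the construction of $D'$ preserves feasibility in both directions, and (b) the throughput vector at the relay BSs coincides under the correspondence, so max-min fairness and total throughput are preserved.

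First I would set up a mapping from half-duplex subgraphs of $D$ to half-duplex subgraphs of $D'$. Given a half-duplex subgraph $J\subseteq D$, look at each macro BS $n_i$: at most $r(n_i)=k_i r^M$ arcs of $J$ are incident to $n_i$, and since the MAX-SU-SM model gives $d(l)=\min(r(u),r(v))=r^M$ for any link $l=(n_i,v)$ with $v\in M(D)$, each relay BS receives at most $r^M$ arcs from $n_i$ in $J$. Thus the multiset of arcs at $n_i$ can be packed greedily into the $k_i$ copies $n_i^{(1)},\dots,n_i^{(k_i)}$ so that each copy receives at most $r^M$ arcs; this is a straightforward bin-packing step since no single bin is ever forced to hold more than $r^M$ arcs. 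The resulting subgraph $J'\subseteq D'$ is a simple $b'$-matching (with $b'\equiv r^M$), and because downlink scheduling treats macro BSs as pure transmitters (all incoming arcs to $B(D)$ are removed per the optimization-goal setup), every copy $n_i^{(j)}$ is in transmit mode, so the bipartite (half-duplex) property carries over.

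Conversely, the map that identifies all copies $n_i^{(j)}$ back to $n_i$ sends any half-duplex subgraph $J'$ of $D'$ to a subgraph $J$ of $D$; the RF-chain bound $r(n_i)=k_i r^M$ is automatically respected because at most $r^M$ arcs were incident to each copy, and the half-duplex property is preserved since all $n_i^{(j)}$ are transmitters. Under both maps, arc capacities are unchanged and relay BS vertices are untouched, so the net throughput $h_v = (\text{in}) - (\text{out})$ at every $v\in M(D)$ is identical in $J$ and $J'$. Extending these maps linearly to schedules (convex combinations of simple $b$-matchings with timeslot lengths) shows that the feasible throughput vectors for the MTFS problem on $D$ and on $D'$ coincide.

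Consequently, the optimum of \eqref{eq:mtf-theta} and \eqref{eq:mtf} on $D$ equals that on $D'$, and an optimal schedule of $D'$ is mapped to an optimal schedule of $D$ by the collapsing map. Since $D'$ meets the hypotheses of Theorem~\ref{thm:hd-mtfs-uniform-rf}, the optimal schedule for $D'$ is computable in polynomial time, which yields the claimed algorithm. The main obstacle, in my view, is the bookkeeping step that guarantees the arc packing at each macro BS is always realizable within the per-copy RF-chain budget; I expect the per-link bound $d(l)\le r^M$ from MAX-SU-SM (together with the total bound $k_i r^M$) to make this a routine bin-packing argument, but care is needed to articulate it cleanly and to confirm that the half-duplex (bipartite) property is not broken when copies are reintroduced.
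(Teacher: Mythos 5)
Your proposal is correct and follows essentially the same route as the paper, whose own proof simply asserts that a half-duplex schedule for $D$ translates into one for $D'$ and vice versa and that $D'$ is again a uniform orthogonal backhaul network; you have merely spelled out the translation (the per-copy arc packing at each macro BS, preservation of the bipartite/half-duplex property, and invariance of the relay-BS throughput vector) that the paper leaves implicit. The bin-packing step you worry about is indeed routine, since the arcs are unit items, each copy is a bin of capacity $r^M$, and the total at $n_i$ is at most $k_i r^M$.
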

\begin{proof}
	The optimal half-duplex schedule for $D$ is equivalent to the optimal one for $D'$, since a half-duplex schedule for $D$ can be translated into one for $D'$ and vice versa.
		Moreover $D'$ satisfies the condition of a uniform orthogonal backhaul network.
\end{proof}

\section{Approximation Algorithms based on Fractional Weighted Coloring}
\label{sec:app-fra-color}
As explained in \S\ref{sec:complexity}, the MTFS problem is NP-hard if there is pairwise link interference or the backhaul network is half-duplex. For both cases, we must rely on approximation algorithms.
Two such algorithms F$^3$WC-FAO and F$^3$WC-LSLO are based on the method {\em fractional weighted vertex coloring} of {\em conflict graphs}, as proposed by Wan	~\cite{Wan09}. The MTFS problem can be transformed into a fractional weighted vertex coloring problem since we can embody all four types of constraints in a conflict graph: 1) pairwise link interference, 2) number of data streams for a link restricted by spatial diversity, 3) number of data streams incident to a node restricted by the number of RF chains, and 4) half-duplex.

\subsection{Conflict Graph}\label{sec:conflict_graph}
Conflict graph is a powerful tool for modeling scheduling constraints. It is an undirected simple graph (it has neither loops nor parallel edges) denoted by $C$, in which each vertex represents a data stream and each edge represents that two data streams cannot be scheduled simultaneously. The conflict graph is derived from the {\em expanded network} $H$, a directed graph that explicitly models the RF chains. 
$H$ is the collective notation of four variants: $H_{\text{FD}}^\text{R}, H_{\text{FD}}^\text{M}, H_{\text{HD}}^\text{R}$ and $H_{\text{HD}}^\text{M}$, depending on the modeling.
Each vertex of $C$ is one-to-one mapped to each arc of $H$,  $V(C) = E(H)$.

\subsubsection{Full-Duplex Scheduling}
Let us first consider the most general backhaul network which is subjected to the PI model and the REAL-SU-SM model. Given a backhaul network $D$, we assume that the RF chain number $r(v) \le \deg(v), \forall v \in V(D)$. Otherwise, we set $r(v) = \deg(v)$ as the extra RF chains are redundant. The expanded network $H_{\text{FD}}^\text{R}$ is created by mapping each vertex $v \in V(D)$ into vertices $v^{(1)} \dots v^{(r(v))}$. 
Each arc $l_i = (u, v)_i \in E(D)$ (the $i$-th data stream from $u$ to $v$) is mapped to $r(u) \cdot r(v)$ arcs $\{(u^{(j)}, v^{(k)})_i | \forall j, k\}$, each with capacity $c(l_i)$. We define the {\em expanded arcs of a link} $(u, v)$ in $L$ as $X\big((u, v) \big) = \{(u^{(j)}, v^{(k)})_i |  \forall i, j, k\}$.
We define the {\em expanded arcs of a data stream} $(u, v)_i$ in $D$ as $X\big((u, v)_i\big) = \{(u^{(j)}, v^{(k)})_i | \forall j, k\}$. An example for the expanded network is shown in the middle of Fig.~\ref{fig:ex-net-real}.

\begin{figure*}[hbpt]
	\centering
	\includegraphics[width=13cm]{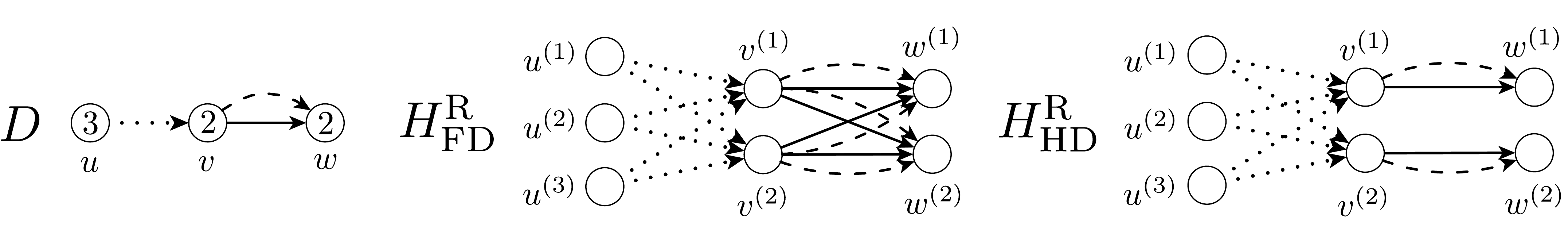}
	\caption{Transforming a directed network $D$ into an expanded network $H_{\text{FD}}^\text{R}$ and  $H_{\text{HD}}^\text{R}$ (full-duplex/half-duplex and REAL-SU-SM model).
		 The number inside a node is its number of RF chains. The arcs of the same style have the same capacity.}
	\label{fig:ex-net-real}	
\end{figure*}
The conflict graph $C$ is constructed as follows. Let a {\em complete graph} $K(V)$ be an undirected graph such that there is an edge between each pair of vertices in $V$. The edges of $C$ are constructed by first adding the union of the edge sets of a number of complete graphs. They are 1) the ones formed by the arcs incident to each vertex in $H_{\text{FD}}^\text{R}$, $K(\delta_{H_{\text{FD}}^\text{R}}(v)), \forall v$, and
2) the ones formed by the expanded arcs of each data stream $X(e), \forall e \in E(D)$.
Then we add the edges representing pairwise link interference. For each pair of interfering links in $L$, say $l$ and $l'$, we add to $C$ the edges $\big\{\{ e, e' \} |  e \in X(l),  e' \in X(l') \big \}$.

If the MAX-SU-SM model is assumed instead of the REAL-SU-SM model, then the expanded network $H_{\text{FD}}^\text{M}$ contains fewer arcs than $H_{\text{FD}}^\text{R}$.
Again each vertex $v \in V(D)$ is mapped into $r(v)$ vertices. If there is a link $(u, v) \in E(L)$ of capacity $c$,
then $D$ contains $\min(r(u), r(v))$ arcs (data streams) from $u$ to $v$ with the same capacity $c$. 
The link is mapped into $r(u) \cdot r(v)$ arcs $\{(u^{(j)}, v^{(k)}) | \forall j, k\}$ in $H_{\text{FD}}^\text{M}$, all having capacity $c$.
The {\em expanded arcs of a link} $(u, v)$ in $L$ are defined as $X\big((u, v) \big) = \{ (u^{(j)}, v^{(k)}) | \forall j, k \}$.
An example for the expanded network is shown in the middle of Fig.~\ref{fig:ex-net-max}.
\begin{figure*}[hbpt]
	\centering
	\includegraphics[width=13cm]{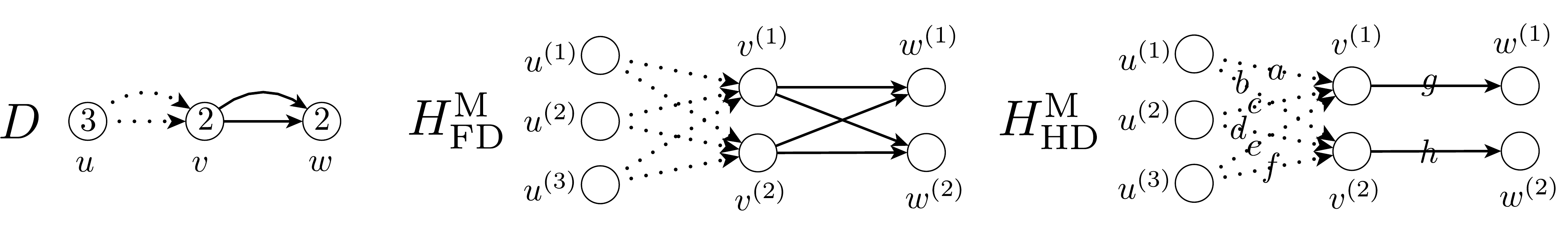}
	\caption{Transforming a directed network $D$ into an expanded network $H_{\text{FD}}^\text{M}$ and $H_{\text{HD}}^\text{M}$ (full-duplex/half-duplex and the MAX-SU-SM model).
		The number inside a node is its number of RF chains. The arcs of the same style have the same capacity.}
	\label{fig:ex-net-max}	
\end{figure*}

 We first add the edges of the complete graphs formed by the 
arcs incident to each vertex in $H_{\text{FD}}^\text{M}$. Then for each pair of interfering links in $L$, say $l$ and $l'$, we add to $C$ the edges $\big\{\{ e, e' \} | e \in X(l),  e' \in X(l') \big\}$.

\subsubsection{Half-Duplex Scheduling}
Again let us first consider the most general backhaul network subject to the PI model and the REAL-SU-SM model.
The expanded network $H_{\text{HD}}^\text{R}$ is more sparse than the full-duplex counterpart $H_{\text{FD}}^\text{R}$.
Given a backhaul network $D$, we set $r(v) = \min \Big(\max\big(\deg_-(v), \deg_+(v)\big), r(v) \Big)$ where $\deg_-(v)$ and $\deg_+(v)$ are the number of incoming and outgoing arcs of $v$ in $D$. The reason is that a higher number of RF chains is unnecessary.
$H_{\text{HD}}^\text{R}$ is created by first mapping the vertices in $D$ the same way as before.
Then each arc $l_i = (u, v)_i \in E(D)$ is mapped as follows. 
If $r(u) \ne r(v)$, $l_i$ is mapped to $r(u) \cdot r(v)$ arcs $\{(u^{(j)}, v^{(k)})_i | \forall j, k\}$, each with capacity $c(l_i)$ the same way as for $H_{\text{FD}}^\text{R}$. 
Otherwise, $r(u) = r(v)$, $l_i$ is mapped to $r(u)$ arcs $\{(u^{(j)}, v^{(j)})_i | \forall j\}$, each with capacity $c(l_i)$.
An example for the expanded network is shown in the right side of Fig.~\ref{fig:ex-net-real}.

For a vertex $v$ in $D$, we denote $\delta_{H_{\text{HD}}^\text{R}}^-(v)$ and $\delta_{H_{\text{HD}}^\text{R}}^+(v)$
as the arcs in $H_{\text{HD}}^\text{R}$ that enter or leave the vertices $v^{(j)}$ for all $j$, respectively.
The conflict graph $C$ is first constructed with the method for $H_{\text{FD}}^\text{R}$.
Then we add to $C$ the edges $\big \{\{e, e'\} | e \in \delta_{H_{\text{HD}}^\text{R}}^-(v),  e' \in \delta_{H_{\text{HD}}^\text{R}}^+(v), \forall v \in V(D)\big \}$. These edges model the half-duplex constraint.

If the MAX-SU-SM model is assumed, then the expanded network $H_{\text{HD}}^\text{M}$ is even more sparse than $H_{\text{HD}}^\text{R}$. 
Each link $(u, v) \in E(L)$ with capacity $c$ is mapped as follows.
If $r(u) \ne r(v)$, $(u, v)$ is mapped to $r(u) \cdot r(v)$ arcs $\{(u^{(j)}, v^{(k)}) | \forall j, k\}$, each with capacity $c$. Otherwise, it is mapped to $r(u)$ arcs $\{(u^{(j)}, v^{(j)}) | \forall j\}$, each with capacity $c$.
An example for the expanded network is shown in the right side of Fig.~\ref{fig:ex-net-max}.

\begin{figure}[hbpt]
	\centering
	\includegraphics[width=4cm]{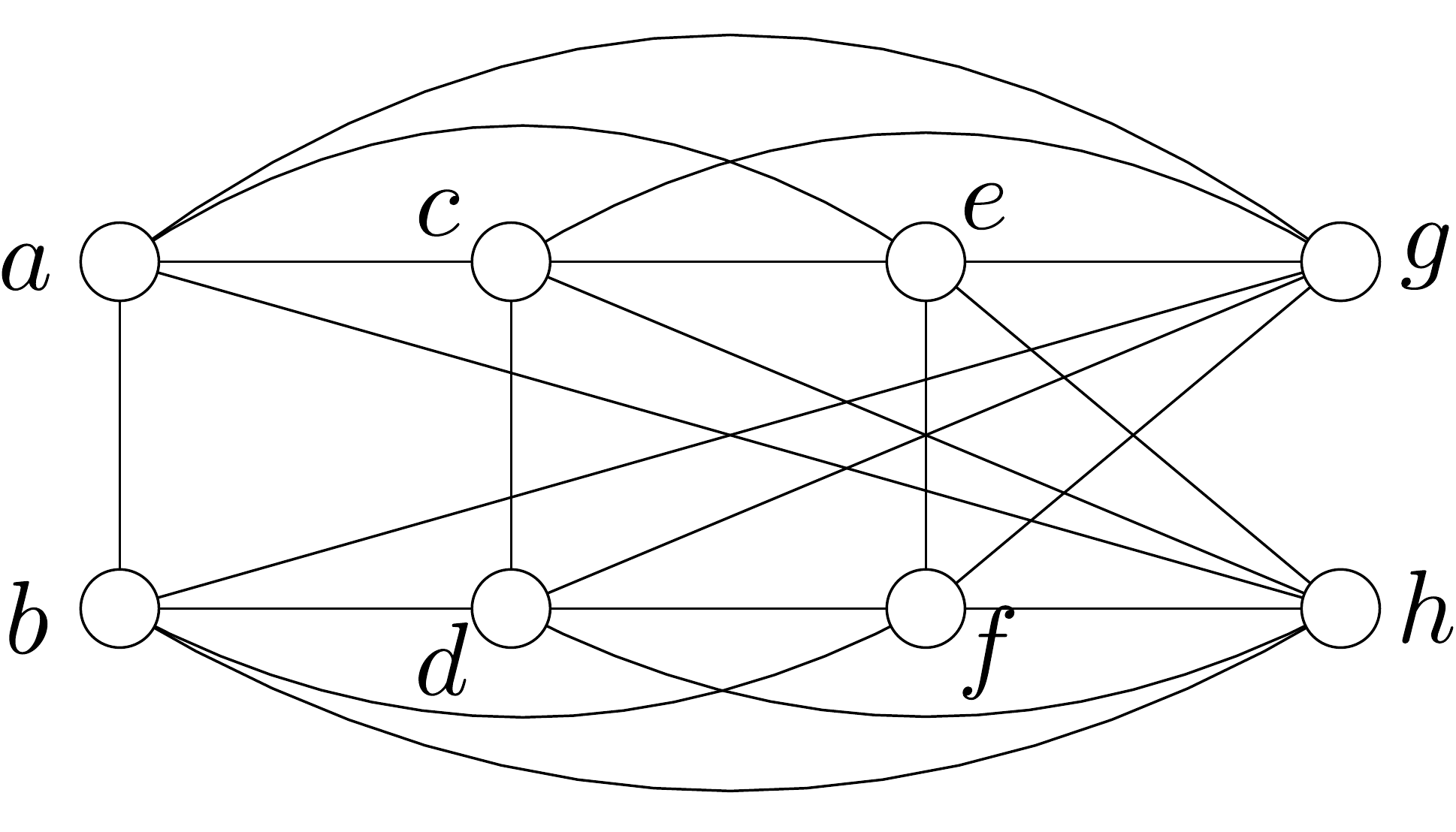}
	\caption{The conflict graph for $H_{\text{HD}}^\text{M}$ in Fig.~\ref{fig:ex-net-max}}
	\label{fig:conflict-graph}	
\end{figure}

The conflict graph $C$ is first constructed with the method for $H_{\text{FD}}^\text{M}$.
Then we add to $C$ the edges $\big\{\{e, e'\} | e \in \delta_{H_{\text{HD}}^\text{M}}^-(v),  e' \in \delta_{H_{\text{HD}}^\text{M}}^+(v), \forall v \in V(D) \big\}$. The conflict graph for $H_{\text{HD}}^\text{M}$ in Fig.~\ref{fig:ex-net-max} is shown in Fig.~\ref{fig:conflict-graph}.	

A sparse expanded network leads to a conflict graph with fewer vertices and hence shorter execution time for the algorithms. 
We will prove in the following why the sparse expanded networks $H_{\text{HD}}^\text{R}$ and $H_{\text{HD}}^\text{M}$ can be used for half-duplex scheduling.

\begin{thm}
	Given a directed network $D$, then any half-duplex subgraph of $D$ can be represented by a matching in $H_{\text{HD}}^\text{M}$.
	\label{thm:sparse-exp-net}
\end{thm}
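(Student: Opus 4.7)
The plan is to prove the statement constructively: given any half-duplex subgraph $J \subseteq D$, I will exhibit an injective map from $E(J)$ into $E(H_{\text{HD}}^\text{M})$ such that the image is a matching, i.e., no two chosen expanded arcs share any vertex $v^{(j)}$ of $H_{\text{HD}}^\text{M}$. Concretely, for each arc $e = (u,v) \in E(J)$ I must pick copy indices $j$ at $u$ and $k$ at $v$ such that the expanded arc $(u^{(j)}, v^{(k)})$ exists in $H_{\text{HD}}^\text{M}$ and, across all chosen arcs, the indices used at each original vertex are pairwise distinct. Because $J$ is a directed bipartite graph (senders $V_1$, receivers $V_2$), no vertex is both a head and a tail, so it suffices to ensure distinct indices separately at each $v \in V(J)$.

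The construction will partition $E(J)$ into the \emph{same-RF} arcs $E_=$ (those with $r(u) = r(v)$) and the \emph{different-RF} arcs $E_{\ne}$ (those with $r(u) \ne r(v)$). The split is forced by the definition of $H_{\text{HD}}^\text{M}$: a different-RF link has the full $r(u) \cdot r(v)$ expanded arcs and thus allows the two endpoint indices to be chosen independently, whereas a same-RF link is represented only by the diagonal arcs $(u^{(j)}, v^{(j)})$, so the endpoint indices are forced to coincide. For each integer $r$, I will consider the subgraph $J_r$ of $J$ induced by the same-RF arcs whose endpoints have exactly $r$ RF chains. Since $J$ is bipartite and each vertex of $J_r$ has degree at most $\deg_J(v) \le r(v) = r$, $J_r$ is a bipartite multigraph of maximum degree at most $r$. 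By König's edge-coloring theorem it admits a proper edge coloring with $r$ colors $\{1,\dots,r\}$; this coloring assigns each same-RF arc a single index that can be used simultaneously at both endpoints, yielding a valid diagonal expanded arc and automatically distinct indices at every vertex for the whole set $E_=$.

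Finally, I will handle $E_{\ne}$ greedily. At each vertex $v \in V(J)$, let $S(v)$ and $T(v)$ be the sets of same-RF and different-RF arcs incident to $v$; since $|S(v)| + |T(v)| = \deg_J(v) \le r(v)$, the number of indices in $\{1,\dots,r(v)\}$ still free after the König step is at least $|T(v)|$. Assigning a bijection from $T(v)$ into these free indices (done independently at each endpoint of every different-RF arc, which is allowed because the two index choices are decoupled for $r(u) \ne r(v)$) completes the map. Collecting the resulting expanded arcs gives a set in which every vertex $v^{(j)}$ of $H_{\text{HD}}^\text{M}$ is touched at most once, which is exactly a matching. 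The main obstacle is the coupling of endpoint indices on same-RF arcs, and König's theorem — applicable precisely because the half-duplex constraint forces $J$ to be bipartite — is the tool that resolves it; once that is settled, the different-RF arcs add no further coupling and a simple counting argument suffices.
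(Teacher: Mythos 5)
Your proof is correct and follows essentially the same route as the paper's: both isolate the only genuinely coupled case---arcs between vertices with equal RF-chain numbers, which are forced onto the diagonal copies $(u^{(j)},v^{(j)})$---and resolve it with K\H{o}nig's edge-coloring (equivalently, matching-decomposition) theorem for bipartite multigraphs of bounded degree, then place the remaining arcs on leftover copies. Your explicit count $|S(v)|+|T(v)|\le r(v)$ for the unequal-RF arcs is in fact more careful than the paper's brief assertion that an unconnected copy $w^{(k)}$ always exists, so nothing is missing.
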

\begin{proof}
	See Appendix~\ref{sec:thm:sparse-exp-net} for the proof.
\end{proof}

\subsection{General Procedure of Fractional Weighted Coloring Based Approximation Algorithms}
A fractional weighted coloring based approximation algorithm consists of three steps: (i) computing the {\em data stream time vector} 
$\vect{t} = [t_e | e \in E(H)] = [t_v | v \in V(C)]$, (ii) sorting the vertices $V(C)$ and performing F$^3$WC with the given ordering, and (iii) scaling the schedule.
We use the results of~\cite{Wan09} and adapt two approximation algorithms based on fractional weighted coloring. 
The difference of the two algorithms lies in the linear programs for computing the link time vector $\vect{t}$ and the ordering of $V(C)$ for coloring. The coloring step uses the so-called {\em first-fit fractional weighted coloring} (F$^3$WC) algorithm from~\cite{Wan09}, listed in Alg.~\ref{alg:f3wc}.

\begin{algorithm}[!t]
	\SetAlgoLined
	\SetKwInOut{Input}{Input}
	\SetKwInOut{Output}{Output}
	
	\Input{$C$, $\vect{t} \in \mathbb{R}_+^{V(C)}$, and an ordering of $V(C)$.}
	\Output{A fractional weighted coloring $\Pi$ of $(C, \vect{t})$.}
	
	$\Pi \leftarrow \emptyset$\;
	$U \leftarrow \{v \in V(C) | t_v > 0\}$\;
	
	\While{$U \ne \emptyset$} {
		$I \leftarrow$ the first-fit MIS (maximal independent set) of $U$\;
		$\lambda \leftarrow \min_{v \in I} t_v$\;
		add $(I, \lambda)$ to $\Pi$\;
		\For{each $v \in I$} {
			$t_v \leftarrow t_v - \lambda$\;
			\If{$t_v = 0$}
			{remove $v$ from $U$\;}
		} 
	}
	output $\Pi$\;
	\caption{First-fit fractional weighted coloring.}
	\label{alg:f3wc}
\end{algorithm}

How to compute $\vect{t}$ depends on the specific algorithm.
The {\em minimum makespan} scheduling for $\vect{t}$ is the same as the {\em minimum fractional weighted coloring} of $(C, \vect{t})$. The latter is defined as a set of $K \in \mathbb{N}$ pairs $(I_i, \lambda_i)$ where each $I_i$ is an independent set (a set of nonadjacent vertices) of $C$  and $\lambda_i \in \mathbb{R}_+$ for $1 \le i \le K$ satisfying that $\sum_{1 \le i \le K, v \in I_i} \lambda_i = t_v, \forall v \in V(C)$ and the sum $\sum_{i = 1}^K \lambda_i$ is the minimum. But the problem of finding a minimum fractional weighted coloring is NP-hard~\cite{Groetschel81}. Let $P$ be the {\em independence polytope} of $C$, i.e., the convex hull of the incidence vectors
of the independent sets of $C$. Then any point in $P$ corresponds to a feasible unit time schedule. The minimum fractional weighted coloring problem can be expressed as a linear program with the help of $P$.

We assume that an algorithm provides a $\gamma$-approximate ($\gamma > 1$) independent polytope $Q^\circ$, 
i.e., $Q^\circ \subseteq P \subseteq \gamma Q^\circ$. 
Specifically, we have two options---F$^3$WC-FAO or F$^3$WC-LSLO with $\gamma = \alpha^*, Q^\circ = Q$ 
and $\gamma = 2\beta^*, Q^\circ = Q'$, respectively (the definition of these variables will be clear in the following). Step (i) is to solve the following two linear programs.
\begin{IEEEeqnarray}{lrClr}
	\label{eq:q0app1}
	\IEEEyesnumber \IEEEyessubnumber* 
	\textnormal{max} & \theta & & & \\
	\textnormal{s.t.} &\sum_{\mathclap{e\in\delta_{H}^-(U(v))}} c(e) t_e
	-\sum_{\mathclap{e\in\delta_{H}^+(U(v))}} c(e) t_e& \ge & \theta
	\quad\forall\,v\in M(D)\IEEEeqnarraynumspace \label{eq:q0app-maxmin-tput}
	\\
	& \vect{t} & \in & Q^\circ \IEEEeqnarraynumspace \label{eq:q0app-in-q0},
\end{IEEEeqnarray}
where $U(v) = \{v^{(i)} \in V(H) | \forall i\}$.

With the max-min throughput solution $\theta$, we go on to compute $\vect{t}$ for the maximum network throughput.
\begin{IEEEeqnarray}{lC}
	\IEEEyesnumber \IEEEyessubnumber*
	\textnormal{max} &\sum_{\mathclap{e \in \{\delta_{H}^+(U(v)) | v \in B(D) \}}} c(e) t_e
	\nonumber\\
	\textnormal{s.t.}& \eqref{eq:q0app-maxmin-tput} \text{ and } \eqref{eq:q0app-in-q0}. \nonumber \IEEEeqnarraynumspace 
\end{IEEEeqnarray}
Step (ii) is to sort $V(C)$ with the given method and then to perform the F$^3$WC algorithm (Alg.~\ref{alg:f3wc}) 
with the computed $\vect{t}$ and vertex ordering. 
Since, it is guaranteed by step (i) and (ii) that the schedule length after performing the F$^3$WC algorithm is no more than one, we perform the last step to scale the schedule length to exactly unit time. The goal is to  improve performance by fully utilizing the available time resource.

\subsection{Fixed and Arbitrary Ordering (F$^3$WC-FAO)} \label{sec:hd-f3wc-fao}
Assume that $\langle v_1 \dots  v_n \rangle$ is an arbitrary but fixed ordering of $V(C)$ where $n = \card{V(C)}$.
We denote $v_i < v_j$ if $i < j$.
Let $V_i$ be the set of vertices of $v_i$ and all its {\em smaller} neighbors (neighbors in $\{v_1 \dots  v_{i-1}\}$).
Define the {\em inductive independence polytope} $Q$ of $C$ by the ordering $\langle v_1 \dots v_n \rangle$ as
\begin{equation}
Q \eqdef \Big \{ \vect{t} \in \mathbb{R}_+^{V(C)} \Big| \max_{1 \le i \le n} t(V_i) \le 1 \Big \},
\end{equation}
where $t(V_i) = \sum_{v \in V_i} t_v$. $Q$ is an approximation of the independence polytope $P$.

\subsection{Largest Surplus Last Ordering (F$^3$WC-LSLO)} \label{sec:hd-f3wc-lslo}
The largest surplus last ordering of $V(C)$ is done by first transforming the undirected graph $C$ into a directed graph 
$C^d$ by imposing a certain orientation on each edge. We specify the following orientation.

Suppose that the vertices of the directed network $D$ have an ordering. That is, given two different vertices $w, w' \in V(D)$, if $w$ comes before $w'$ in the ordering, we denote $w < w'$. 
Given two different vertices $u^{(i)}$ and $v^{(j)}$ of the expanded network $H$, 
we denote $u^{(i)} < v^{(j)}$ if and only if $u < v$ or ($u = v$ and $i < j$).
For the MAX-SU-SM model, given two different vertices  $(u^{(i)}, v^{(j)})$ and $(s^{(k)}, t^{(l)}) \in V(C^d)$, $(u^{(i)}, v^{(j)}) < (s^{(k)}, t^{(l)})$, if and only if $u^{(i)} < s^{(k)}$ or 
$(u^{(i)} = s^{(k)}$ and $v^{(j)} <  t^{(l)})$.
For the REAL-SU-SM model, given two different vertices $(u^{(i)}, v^{(j)})_m$ and $(s^{(k)}, t^{(l)})_n \in V(C^d)$,
$(u^{(i)}, v^{(j)})_m < (s^{(k)}, t^{(l)})_n$ if and only if $(u^{(i)}, v^{(j)}) < (s^{(k)}, t^{(l)})$ or
$\big( (u^{(i)}, v^{(j)}) = (s^{(k)}, t^{(l)})$ and $m < n \big)$.

The orientation is chosen according to the following rules for each edge in $C$. Note, the subscripts $m, n$ are taken as empty for the MAX-SU-SM model.
\begin{enumerate}
	\item An edge between $(u^{(i)}, v^{(j)})_m$ and a vertex of the form $(v^{(k)}, x^{(l)})_n$ 
	such that $u \ne x$ has the orientation from the first to the second.
	\item Otherwise, an edge between two vertices has the orientation from the small one to the large one.
\end{enumerate}

Let $D'$ be a digraph. 
For a vertex $u \in V(D')$, let $N^{in}(u)$ denote the set of in-neighbors of $u$ in $D'$, and let $N^{in}[u]$ denote $\{u\} \cup N^{in}(u)$.
$N^{out}(u)$ and $N^{out}[u]$ are defined correspondingly.
For any $\vect{t} \in \mathbb{R}_+^{V(D')}$, the {\em surplus} of a vertex $u$ is defined as $t(N^{in}(u)) - t(N^{out}(u))$.
The largest surplus last ordering is constructed as follows. Let $\vect{t} \in \mathbb{R}_+^{V(C^d)}$. Initialize $D'$ to $C^d$. For $i = n$ down to 1, let $v_i$ be a vertex of the largest surplus in $(D', \vect{t})$ and then delete $v_i$ from $D'$ and the element $t_{v_i}$ from $\vect{t}$.
The ordering of $\langle v_1 \dots  v_n \rangle$ is the largest surplus last ordering of $(C^d, \vect{t})$.
The {\em independence polytope} $Q'$ of $C^d$ is defined as 
\begin{equation}
Q' = \Big \{ \vect{t} \in \mathbb{R}_+^{V(C^d)} \Big | \max_{u \in V(C^d)} t (N^{in}[u]) \le 1/2 \Big \},
\end{equation}
which is another approximation of the independence polytope $P$.

\subsection{Approximation Ratios In Terms of Max-Min Throughput}
The following theorem presents the worst-case approximation ratios in terms of max-min throughput of the two algorithms F$^3$WC-FAO and F$^3$WC-LSLO.
\begin{thm}
	The algorithms F$^3$WC-FAO and F$^3$WC-LSLO solve the MTFS problem by producing a unit-time schedule.
	They achieve a max-min throughput $\theta' \ge \theta^*/\alpha^*$ and $\theta' \ge \theta^* / (2 \beta^*)$ respectively,
	where $\theta^*$ is the optimum and
	\begin{itemize}
		\item for the case of a full-duplex network, PI and REAL-SU-SM model:
		\begin{flalign*}
			\alpha^* &\le \max \Big( 1, \max_{l \in E(L)} \big( \sum_{l' | \mathrm{intf}(l', l) = 1} d(l') \big) \Big) + 2 &\\
			\beta^* &\le \max \Big( 1, \max_{l \in E(L)} \big( \sum_{l' |l' < l, \mathrm{intf}(l', l) = 1} d(l') \big) \Big) + 2,
		\end{flalign*}
		\item for the case of a full-duplex network, PI and MAX-SU-SM model:
		\begin{flalign*}
			\alpha^*&\le \max_{l \in E(L)} \big( \sum_{l' | \mathrm{intf}(l', l) = 1} d(l') \big) + 2 &\\
			\beta^* &\le \max_{l \in E(L)} \big( \sum_{l' |l' < l, \mathrm{intf}(l', l) = 1} d(l') \big) + 2,
		\end{flalign*}
		\item for the case of a half-duplex network and PI model:
		\begin{flalign*}
					\alpha^* &\le \max_{l \in E(L)} \big(r(l) +  \sum_{l' | \mathrm{intf}(l', l) = 1} d(l') \big) & \\
					\beta^* &\le \max_{l = (u, v) \in E(L)} \big(r(u) +  \sum_{l' | l' < l, \mathrm{intf}(l', l) = 1} d(l') \big) +1
		\end{flalign*}
		where $r(l) = r(u) + r(v)$.
	\end{itemize}
	\label{thm:perf-f3wc}
\end{thm}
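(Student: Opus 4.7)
The plan is to invoke the framework of Wan~\cite{Wan09}: once we exhibit a polytope $Q^\circ$ with $Q^\circ\subseteq P\subseteq \gamma Q^\circ$, where $P$ is the independence polytope of the conflict graph $C$, the first-fit fractional weighted coloring Alg.~\ref{alg:f3wc} run on a time vector $\vect{t}\in Q^\circ$ produces a fractional coloring of total weight at most $1$; consequently, the LP in \eqref{eq:q0app1} yields max-min throughput at least $\theta^{*}/\gamma$. For F$^3$WC-FAO, $Q^\circ=Q$ is the inductive independence polytope and one may take $\gamma=\alpha^{*}$, the \emph{inductive independence number}
$
\alpha^{*}\eqdef \max_{1\le i\le n}\max_{I}\,|I|,
$
where $I$ ranges over independent sets of $C$ contained in $V_{i}$ and containing $v_{i}$. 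For F$^3$WC-LSLO with the orientation and ordering defined in \S\ref{sec:hd-f3wc-lslo}, Wan's analysis gives $\gamma=2\beta^{*}$ with
$
\beta^{*}\eqdef \max_{u\in V(C^{d})}\max_{I}\,|I|,
$
where $I$ ranges over independent sets of $C$ contained in $N^{in}[u]$. Unit length follows from the final scaling step, together with the fact that the F$^3$WC output has weight at most $1$ under $\vect{t}\in Q^\circ$.

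The combinatorial work is then to bound $\alpha^{*}$ and $\beta^{*}$ in each of the three cases by analyzing which vertices of $C$ can coexist in an independent set containing a fixed vertex $v=e$ (an expanded arc corresponding to a data stream on some link $l\in E(L)$). I would fix $e\in V(C)$, pick a maximum independent set $I\ni e$ inside $V_{i}$ (resp.\ $N^{in}[u]$), and partition $I\setminus\{e\}$ into (a) vertices coming from the \emph{same link} $l$ as $e$, (b) vertices coming from \emph{interfering links} $l'$ with $\mathrm{intf}(l',l)=1$, and (c) vertices coming from links sharing an endpoint with $l$. The edges we added to $C$ in \S\ref{sec:conflict_graph} restrict each of these parts sharply: (b) contributes at most $\sum_{l':\mathrm{intf}(l',l)=1} d(l')$ vertices because between any two expanded arcs $e\in X(l)$, $e'\in X(l')$ there is an edge in $C$, so at most one expanded arc per parallel data stream of $l'$ can be independent of $e$; the complete-graph edges at each node of $H$ bound (c); the complete-graph edges on $X(l)$ (together with the capacity of $l$ itself in the REAL-SU-SM case) bound (a); and in the half-duplex case the extra edges between $\delta_{H}^{-}(v)$ and $\delta_{H}^{+}(v)$ force that the contribution from links touching an endpoint of $l$ is bounded by $r(l)=r(u)+r(v)$ rather than by summing degrees twice.

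For F$^3$WC-FAO (fixed, arbitrary ordering) the ordering is unrestricted, so the bound must account for interfering links on \emph{both} sides, giving the unrestricted sum $\sum_{l'\mid\mathrm{intf}(l',l)=1} d(l')$ in the three displayed bounds. For F$^3$WC-LSLO the orientation of $C^{d}$ and the use of $N^{in}[u]$ mean that only interferers $l'<l$ contribute (or only those lying on the ``incoming'' side of the orientation rule), yielding the restricted sum $\sum_{l'\mid l'<l,\,\mathrm{intf}(l',l)=1} d(l')$. The small additive constants ($+2$, or the degenerate $\max(1,\cdot)+2$ in the REAL-SU-SM case where a single stream of $l$ may dominate) come from the contribution of $e$ itself plus at most one further expanded arc at the two endpoints of $l$; in the half-duplex bound the term $r(l)$ (resp.\ $r(u)+1$ for LSLO) absorbs the contribution of all other expanded arcs incident to the endpoints of $l$ because the half-duplex edges of \S\ref{sec:conflict_graph} force any independent set at a node to lie entirely within $\delta_{H}^{-}$ or entirely within $\delta_{H}^{+}$.

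The technical obstacle is case~(c) in the half-duplex setting: showing that among independent vertices that share an endpoint with $l$, the half-duplex edges collapse what would otherwise be an $r(u)\cdot r(v)$-type count into the linear quantity $r(l)$. I would argue this endpoint-by-endpoint: at node $u$, any subset of $N^{in}[e]$ (or $V_{i}$ for FAO) that uses arcs in $\delta_{H}^{-}(u)\cup\delta_{H}^{+}(u)$ and is independent in $C$ must lie on one side of the in/out partition, and inside that side the $K(\delta_{H}(u))$-edges further reduce the count to at most $r(u)$; summing over the two endpoints of $l$ gives $r(l)$ (and for LSLO only $r(u)$ survives due to the orientation rule that makes $v$-endpoint arcs ``later''). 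Once this and the analogous bounds for (a) and (b) are assembled, adding the contributions yields the three displayed bounds on $\alpha^{*}$ and $\beta^{*}$, which, combined with Wan's approximation guarantee, proves the theorem.
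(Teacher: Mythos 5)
Your proposal follows essentially the same route as the paper's proof: it invokes Wan's sandwiching $Q^\circ\subseteq P\subseteq\gamma Q^\circ$ with $\gamma=\alpha^*$ (inductive independence number) and $\gamma=2\beta^*$ (local independence number), deduces $\theta\ge\theta^*/\gamma$ from feasibility of the scaled optimal point plus the final scaling step, and bounds $\alpha^*,\beta^*$ by the same local count of independent sets in a vertex's (in-)neighborhood, decomposed into same-link, interfering-link, and shared-endpoint contributions with the half-duplex edges collapsing the endpoint count to $r(u)+r(v)$. The only quibble is a minor misattribution: the bound of $d(l')$ independent arcs per interfering link $l'$ comes from the node cliques $K(\delta_H(\cdot))$ and the per-data-stream cliques $X(l'_i)$, not from the cross edges between $X(l)$ and $X(l')$ (those only make all of $X(l')$ neighbors of $e$); this does not affect correctness.
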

\begin{proof}
	See Appendix~\ref{sec:thm:perf-f3wc} for the proof.
\end{proof}

\section{Approximation Algorithm of Parallel Data Stream Scheduling}
\label{sec:pds}
This section proposes an effective approximation algorithm for half-duplex MTFS scheduling under the NI model.
The PDS (Parallel Data Stream Scheduling) approximation algorithm (listed in Alg.~\ref{alg:pls}) extends the optimal half-duplex MTFS algorithm in \S\ref{sec:hd-mtfs-special-case} to cover the situation that an optimal MTFS schedule cannot be found in polynomial time. It is based on the idea that the parallel data streams between a pair of BSs are always scheduled simultaneously.
An example of the graph transformation step (Line~\ref{alg:graph-trans-begin} to \ref{alg:graph-trans-end}) of the PDS algorithm is shown in Fig.~\ref{fig:exp-PLS}.
\begin{algorithm}[htbp]
	Create a network $D^e$ based on the directed network $D$. 
	$D^e$ copies the relay BS vertices and the arcs between them from $D$ while keeping the values of RF chain number and capacity unchanged.
	Let the minimum data stream number of any link in the link network $L$ be $d_{min} = \min_{l \in E(L)} d(l)$. Each macro BS vertex $v$ in $D$ is mapped into $s(v)$ macro BS vertices $v^{(1)} \ldots v^{(s(v))}$ in $D^e$ where 
	\begin{equation}
	s(v) \eqdef \lfloor {r_D(v)} / {d_{min}} \rfloor.  
	\label{eq:eNB-split-PLS}
	\end{equation}
	The RF chain number of each macro BS vertex $v^{(i)}$ is defined as
	\begin{equation}
	r_{D^e}(v^{(i)}) \eqdef \begin{cases}
	d_{min} & \text{if } i < s(v)\\
	m(v) = r_D(v) - [s(v) - 1] d_{min} & \text{otherwise}.  \\ 
	\end{cases}
	\label{eq:r_De-PLS}
	\end{equation}
	For each $i$, create $d = \min(r_{D^e}(v^{(i)}), r_{D^e}(w), d(v, w))$ arcs $(v^{(i)}, w)_j$ in $D^e$
	such that $c_{D^e}((v^{(i)}, w)_j) = c_D((v, w)_j),$ $\forall j = 1 \dots d$, for each neighbor $w$ of $v$ in $D$\; \label{alg:graph-trans-begin}
	Make a copy $D^s$ of $D^e$ and replace each set of parallel arc with a single arc. For each arc $(u, v)$ in $D^s$, define the 
	capacity function associated with $D^s$ as $c_{D^s}((u, v)) \eqdef \sum_j c_{D^e}((u, v)_j)$.
	Define the RF chain number function associated with $D^s$ as $r_{D^s}(u) \eqdef 1, \forall u \in V(D^s)$\; 
	Compute the optimal MTFS schedule $S$ for $D^s$ with the capacity  $c_{D^s}$ and RF chain number  $r_{D^s}$ using the method in \S \ref{sec:solve-mtfs}\; \label{alg:graph-trans-end}
	Create the final schedule $S^*$ based on $S$, by mapping the activation of an arc in $D^s$ into 
	the simultaneous activation of parallel arcs in $D$\;
	\caption{PDS algorithm.}
	\label{alg:pls}
\end{algorithm}

\begin{figure}[hbpt]
	\centering
	\includegraphics[width=8.8cm]{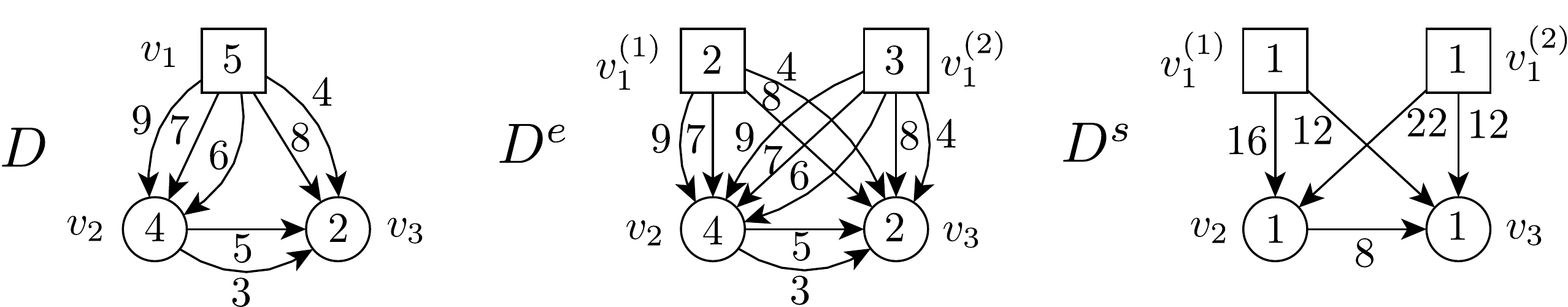}
	\caption{Example of the graph transformations in PDS. 
		$\Square$ and $\Circle$ represents macro BS and relay BS, respectively.
		The number inside a node $v$ is $r_G(v)$ and the number next to an arc $e$ is $c_G(e)$ where $G$ is the related graph.}
	\label{fig:exp-PLS}	
\end{figure}

\begin{thm}
	Suppose that the optimal max-min throughput of the half-duplex MTFS problem on a directed network $D$ is $\theta^*$ under the NI model.
	Let the max-min throughput obtained with the PDS algorithm be $\theta$ and $r_{min} = \min_{u \in V(D)}{r_D(u)}$ be the minimum RF chain number of any BS. 
	Let $r_{max}^M = \max_{u \in M(D)}{r_D(u)}$ be the maximum RF chain number of any relay BS,
	and $d_{min} = \min_{l \in E(L)} d(l)$ be the minimum data stream number of any link.
	We have $\theta \ge {\theta^*} / {\gamma^*}$, where 
	\begin{itemize}
		\item $\gamma^* = \max(r_{max}^M, \max_{v \in B(D)} m(v) ) \le \\ \max(r_{max}^M, 2d_{min} - 1)$, where $m(v)$ is defined in~\eqref{eq:r_De-PLS}, if the REAL-SU-SM model is assumed;
		\item $\gamma^* = \frac{\max(r_{max}^M, \max_{v \in B(D)} m(v) )} {r_{min}} \le \frac{\max(r_{max}^M, 2r_{min} - 1 )} {r_{min}}$, if the MAX-SU-SM model is assumed.
	\end{itemize}
	\label{thm:pls-perf}
\end{thm}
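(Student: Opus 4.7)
The strategy is to compare the PDS output on $D^s$ to the optimum on $D$ by simulating an optimal $D$-schedule on $D^s$ with a bounded slowdown. Two structural facts organize the argument. First, since every vertex of $D^s$ has $r_{D^s}(\cdot)=1$, the half-duplex constraint coincides with the matching constraint on $D^s$, so the matching-based optimization of \S\ref{sec:solve-mtfs} applied to $D^s$ returns an optimal unit-time MTFS schedule in polynomial time. Second, PDS lifts this schedule back to $D$ by simultaneously activating, at each $D^s$-timeslot, all parallel arcs of the bundled $D^e$-link, which preserves the per-relay-BS throughput. It therefore suffices to exhibit a feasible unit-time $D^s$-schedule whose max-min throughput is at least $\theta^*/\gamma^*$.

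To build such a witness, take an optimal unit-time half-duplex schedule $S^*$ of $D$ with max-min throughput $\theta^*$ and process its timeslots independently. Within a timeslot $\tau$ with active half-duplex subgraph $J_\tau$ of length $t_\tau$, the vertices split by half-duplex into senders $V_1$ and receivers $V_2$. I would assign every simultaneously activated parallel-arc bundle $(v,w)$ incident to a macro BS $v$ to one of the pieces $v^{(i)}$ of $v$ in $D^e$ so that each piece's RF-chain budget $r_{D^e}(v^{(i)})\in\{d_{min},m(v)\}$ is respected; since $\sum_i r_{D^e}(v^{(i)})=r_D(v)$, a first-fit assignment succeeds, and when a bundle of $k>m(v)$ parallel arcs must be split across several pieces, the split preserves the total number of RF chains consumed at the receiver. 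After bundling parallel arcs in $D^e$, the resulting $D^s$-subgraph is bipartite across $V_1$/$V_2$ and has maximum vertex degree at most $\max(r_{max}^M,\max_v m(v))$: each relay BS $w$ touches at most $r_D(w)\le r_{max}^M$ bundles (each bundle consumes at least one of its RF chains), and each macro-BS piece $v^{(i)}$ touches at most $r_{D^e}(v^{(i)})\le\max_v m(v)$ bundles. K\"onig's edge-coloring theorem then decomposes this $D^s$-subgraph into at most $\gamma^*=\max(r_{max}^M,\max_v m(v))$ matchings of $D^s$; running each matching for time $t_\tau$ delivers on every bundle at least as much data as $S^*$ delivered on the corresponding link, because under REAL-SU-SM the full bundle capacity $\sum_{j=1}^{d} c(l_j)\ge\sum_{j=1}^{k} c(l_j)$ whenever the $k$ highest-capacity parallel arcs were chosen in $S^*$. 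Summing over all $\tau$ yields a $D^s$-schedule of total length at most $\gamma^*$; scaling it to unit length produces a feasible witness with max-min throughput $\ge\theta^*/\gamma^*$. The bound $m(v)\le 2d_{min}-1$ is immediate from~\eqref{eq:r_De-PLS}.

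For the MAX-SU-SM case the additional factor $1/r_{min}$ comes from a capacity speed-up: every parallel arc of a link has identical capacity $c$ and every $D^s$-bundle activation carries $\min(r_{D^e}(u),r_{D^e}(v),d(u,v))\cdot c\ge r_{min}c$, whereas $S^*$ transmits at most $kc$ on that link with $k\le d(u,v)$. Shortening each matching's duration by a factor of $r_{min}$ thus still suffices to carry all of $S^*$'s per-slot data, yielding $\gamma^*\le\max(r_{max}^M,\max_v m(v))/r_{min}$; on a connected backhaul $d_{min}=r_{min}$ under MAX-SU-SM, so $m(v)\le 2r_{min}-1$.

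The main obstacle will be the bipartite edge-coloring step. When some bundle has $k>m(v)$ parallel arcs and must be split across several pieces of the same macro BS, each split creates a new $D^s$-arc at the receiving relay BS, and one must show that the receiver's degree still respects $r_{max}^M$; this follows because splitting preserves the total number of RF chains consumed at the receiver. A second subtlety is the capacity-overshoot argument: a bundled $D^s$-activation uses the \emph{full} parallel-arc count created in $D^e$, which may exceed what $S^*$ had active, so one must verify this overshoot is feasible at both endpoints---this is precisely what the degree-based coloring bound guarantees.
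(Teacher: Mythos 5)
Your route is genuinely different from the paper's. The paper never touches the optimal schedule $S^*$ of $D$: it sandwiches both $\theta$ and $\theta^*$ between the optima of two \emph{uniform orthogonal} networks built from $D^e$ (all nodes given the minimum RF-chain number versus all nodes given $r_{max}^{D^e}=\max(r_{max}^M,\max_{v\in B(D)}m(v))$ chains) and invokes Theorem~\ref{thm:hd-mtfs-uniform-rf}, which says the optimum of such a uniform network scales linearly in the common RF-chain number; the factor $\gamma^*$ then falls out in a few lines. You instead simulate $S^*$ on $D^s$ timeslot by timeslot and control the slowdown by K\H{o}nig edge coloring. That is viable in principle (it is close in spirit to the bipartite-decomposition argument inside Lemma~\ref{lem:max-bipartite-subgraph}), and your degree bounds and the inequality $m(v)\le 2d_{min}-1$ from \eqref{eq:r_De-PLS} are fine, but two steps as written do not go through.

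First, the throughput of a relay BS is data entering \emph{minus} data leaving. If you activate each bundle at its full $D^s$ capacity $\sum_j c(l_j)$ for the full duration $t_\tau$ while $S^*$ used only the top $k$ arcs, you overshoot on the \emph{outgoing} links of intermediate relay BSs, which strictly decreases their net throughput; your witness schedule could then fall below $\theta^*/\gamma^*$. You flag the overshoot only as an RF-feasibility concern at the endpoints, but the real danger is to the objective. The fix is easy---shorten each bundle's activation so it delivers exactly what $S^*$ delivered on the arcs assigned to it (a sub-matching is a matching)---but it must be stated. Second, the MAX-SU-SM speed-up is wrong as sketched: running every matching for $t_\tau/r_{min}$ delivers $d_i c\,t_\tau/r_{min}$ on a bundle of $d_i$ arcs, which is less than the $\kappa_b c\,t_\tau$ that $S^*$ delivered whenever that bundle absorbs $\kappa_b>d_i/r_{min}$ of the active arcs, and $\kappa_b$ can be as large as $d_i$. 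You need per-bundle durations $\kappa_b t_\tau/d_i$ and then the \emph{weighted} (fractional) version of bipartite edge coloring, which bounds the schedule length by the maximum total duration incident to any vertex; that maximum is at most $\max(r_{max}^M,\max_{v}m(v))\,t_\tau/r_{min}$ since every $d_i\ge r_{min}$, and the claimed $\gamma^*$ is recovered. With those two repairs your argument closes, at the cost of being considerably heavier than the paper's monotonicity-plus-Theorem~\ref{thm:hd-mtfs-uniform-rf} sandwich.
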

\begin{proof}
	See Appendix~\ref{sec:thm:pls-perf} for the proof.
\end{proof}

Let us consider some special cases of the MAX-SU-SM model.
If each relay BS in $D$ has the same RF chain number $r^M$ and each macro BS has an RF chain number that is a multiple of $r^M$, then the PDS algorithm attains the optimal MTFS schedule. On the other hand, if each relay BS has $r^M$ RF chains and any macro BS has at least $r^M$ RF chains, then PDS has a worst-case performance ratio of $1/2$ for the max-min throughput.

\begin{cor}
	Assume the NI model and the MAX-SU-SM model. 
	Given a directed network $D$, assume that each relay BS has the same number of RF chains $r^M$ and any macro BS has at least $r^M$ RF chains. The PDS algorithm achieves the max-min throughput $\theta > \theta^*/2$ where $\theta^*$ is the optimum for the half-duplex MTFS problem.
\end{cor}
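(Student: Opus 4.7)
The plan is to derive this corollary as a direct specialization of Theorem~\ref{thm:pls-perf} in the MAX-SU-SM case, by plugging in the structural assumptions of the corollary into the explicit bound $\gamma^* \le \max(r_{max}^M, 2r_{min}-1)/r_{min}$ and showing that the resulting ratio is strictly less than $2$.

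First I would translate the hypotheses into values of the relevant parameters. Since every relay BS has exactly $r^M$ RF chains, $r_{max}^M = r^M$. Because every macro BS has at least $r^M$ RF chains and every relay BS has exactly $r^M$, we get $r_{min} = r^M$. Under the MAX-SU-SM model, $d(l) = \min(r(u),r(v))$ for each link $l = (u,v) \in E(L)$; inter-relay links then satisfy $d(l)=r^M$, and macro-to-relay links satisfy $d(l)=r^M$ as well, yielding $d_{min} = r^M$.

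Next I would compute $m(v)$ for a macro BS $v$ from~\eqref{eq:r_De-PLS}. Writing $r_D(v) = q\,r^M + \rho$ with $0 \le \rho < r^M$ gives $s(v) = q$ and hence $m(v) = r_D(v) - (s(v)-1)d_{min} = r^M + \rho$, so $m(v) \le 2r^M - 1$. Consequently $\max_{v\in B(D)} m(v) \le 2r^M - 1$, and therefore
\begin{equation*}
\gamma^* \;\le\; \frac{\max(r^M,\; 2r^M-1)}{r^M} \;=\; \frac{2r^M-1}{r^M} \;=\; 2 - \frac{1}{r^M} \;<\; 2.
\end{equation*}
Plugging this into Theorem~\ref{thm:pls-perf} yields $\theta \ge \theta^*/\gamma^* > \theta^*/2$, which is the desired bound.

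There is no real obstacle here: the only subtlety worth double-checking is the case $\rho = 0$, where $s(v)=q$ and $m(v)=r^M$, so the maximum in the numerator of $\gamma^*$ is attained by $r_{max}^M$ and gives $\gamma^* = 1$ (and $\theta \ge \theta^*$, which comfortably implies the claim). In every case the strict inequality $2 - 1/r^M < 2$ delivers the strict inequality in the conclusion.
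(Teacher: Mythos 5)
Your proof is correct and follows essentially the same route as the paper, which simply invokes Theorem~\ref{thm:pls-perf} to get $\theta \ge \frac{r^M}{2r^M-1}\theta^* > \theta^*/2$; your version just spells out the intermediate computation of $r_{min}$, $d_{min}$, and $m(v)$ that the paper leaves implicit.
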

\begin{proof}
	According to Theorem~\ref{thm:pls-perf}, $\theta \ge \frac{r^M}{2r^M-1} \theta^* > \theta^*/2.$
\end{proof}

In summary, under the NI model, the three algorithms: PDS, 
 F$^3$WC-FAO and F$^3$WC-LSLO are respectively $\frac{1}{\max(r_{max}^M, \max_{v \in B(D)} m(v))} \ge \frac{1}{r_{max}}$,
$1/\max_{l \in E(L)} (r(l))$ and $1/(2r_{max}+2)$-approximate algorithms for the half-duplex MTFS problem, where $r_{max} = \max_{v \in V(D)} r_D(v)$. A $\rho$-approximate ($\rho \le 1$) algorithm achieves a max-min throughput $\theta$ that is at least $\rho$ times that of the optimal value $\theta^*$, 
$\theta \ge \rho \theta^*$.
Theoretically, PDS has the best performance and F$^3$WC-LSLO has the worst.
\section{Extension To Integrated Access and Backhaul}
\label{sec:extension}
To date, 3GPP is investigating the standardization of Integrated Access and Backhaul (IAB) for mmWave cellular networks~\cite{3GPP-IAB}.
Yet designing a high-performance IAB network is still an open problem~\cite{PoleseGZRGCZ20, SahaD19}.
This paper offers joint routing and scheduling algorithms for IAB networks with optimal or guaranteed QoS.
Both the optimal algorithm and approximation algorithms proposed in this paper can readily be applied to the scenario of integrated backhaul and access (IAB) networks. 
Due to the graph-based network modeling, our approach is applicable to both IAB networks and backhaul networks.
 However, the runtime efficiency may be an issue, if the IAB network includes numerous user equipments (UE).
 
The proposed algorithms in this paper solve a downlink optimization problem. Yet with slight modification, they can solve an uplink or a joint uplink and downlink optimization problem. 
By doing so, the optimal algorithms still retain their optimality while the approximation algorithms keep their approximation ratios. 
A joint uplink and downlink optimization may use the resources better than two separate optimizations.
Conceptually, every algorithm in this paper has two parts, the routing part and the data stream conflict resolving part, which are performed either sequentially or intertwined. 
The routing part is a linear program that finds an efficient routing scheme for arbitrary throughput requirements on sources and destinations. So it naturally supports an uplink or a joint uplink and downlink optimization. 
The data stream conflict resolving part uses either the matching technique for the optimal algorithms or the conflict graph technique for the approximation algorithms.

In addition, our algorithms can be extended to solve other problems than MTFS. These include problems that can be formulated as a linear program whose variables are the active time of data streams and QoS metrics.
For example, we can optimize for the constraint that each relay BS has a minimum throughput requirement.
Another example is to optimize the energy consumption as it can be translated into the minimization of total transmission time in a schedule. We do not further elaborate on them as the extension is straightforward. 

\section{Numerical Evaluation} 
\label{s:eval}
In this section, we evaluate the proposed optimal and approximation algorithms for the MTFS problem in terms of max-min throughput, network throughput and execution time.

\subsection{Evaluation Setting}

We simulate an mmWave backhaul network, which consists of $n \times n$ relay BSs and $j \times k$ macro BSs.
The relay BSs are placed on the intersections of $n$ horizontal and $n$ vertical grid lines. The distance between two neighboring grid lines is $d^g$. The grid plane is divided into $j \times k$ equal rectangles and a macro BS is placed at each rectangle center (see Fig.~\ref{fig:backhaul-grid} for an example). We assume channel reciprocity in the simulation.
The capacity of a link is computed with the formula of Shannon capacity.
This is the value if one RF chain is used to serve the link on both ends.
 The received power is given by 
$p_{\text{rx}} = p_{\text{tx}} + g_\text{x} - PL$ where $p_{\text{tx}}$ is the transmission power, $g_\text{x}$ is the directivity gain and $PL$ is the path loss. 
We assume a carrier frequency of 28 GHz. The channel state of a link is simulated according to the statistical model derived from the real-world measurement~\cite{Akdeniz14}. There are three possible channel states---LOS (line-of sight), NLOS (non line-of-sight) or outage.
We only keep the links that are in LOS or NLOS state and have an SNR higher than $5$ dB.
The simulation parameters are listed in Tab.~\ref{tb:sim-param}.

\begin{figure}
	\centering
	\includegraphics[width=2.5cm,bb=0 0 20 20]{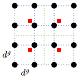}
	\caption{An example backhaul network consisting of $4 \times 4$ relay BSs and $2 \times 2$ macro BSs.}
	\label{fig:backhaul-grid}	
\end{figure}

\begin{table}
	\setlength\extrarowheight{2pt}
	\caption{Simulation parameters}
	\centering
	\footnotesize
	\label{tb:sim-param}
	\begin{tabular}{|p{3.8cm}| p{4.2cm} |}
		\hline
		\textbf{ Parameter }	 &	\textbf{ Value } 		\\
		\hline
		\hline
		Distance between 2 grid lines, $d^g$ 				& 80 m	\\ \hline
		Carrier frequency, $f$ & 28 GHz \\ \hline
		\multirow{2}{*}{\shortstack[l]
			{Path loss parameters\footnote{$\xi$ represents the shadowing effect.
					It is a normally distributed random variable with zero mean and
					$\sigma$ standard deviation.} $\alpha, \beta, \sigma$ in \\ $PL ( d )  = \alpha + 10 \beta \log _ { 10 } { d } + \xi$}} 	& LOS: $\alpha = 61.4, \beta = 2, \sigma = 5.8$ \\ 
		& NLOS: $\alpha = 72, \beta = 2.92, \sigma = 8.7$ \\ \hline
		Transmission power, $p_{\text{tx}}$						& 30 dB \\ \hline
		Directivity gain, $g_\text{x}$		& 30 dB \\ \hline
		Bandwidth, $b$		& 1 GHz \\ \hline
		Noise $N_0 = kT_0 +F + 10 \log_{10} b$ & $kT_0 = -174$ dBm/Hz, $F = 4$ dB \\ \hline
		Min SINR threshold for reception, $\tau$	& $5$ dB \\ \hline
		Number of data streams, $K$ &  $K \sim \max\{\textrm{Poisson}(\lambda), 1\}, \lambda = 1.8$  \\ \hline
		Beamwidth, $\phi$ & $\phi = 20^{\circ}$  \\ \hline
		Correlation coefficient in the exponential correlation matrix, $r$ & $r = 0.9$  \\ \hline
	\end{tabular}
\end{table}

For the PI model, we simulate the pairwise link interference according to the model in \S\ref{sec:sys-intf}. As illustrated in Fig.~\ref{fig:intf-links}, the 4 links $(t_1, r_1), (t_1, r_2), (t_2, r_1), (t_2, r_2)$ are assumed to be independent.

To simulate the REAL-SU-SM model, we assume that the maximum number of data streams supported by a link is Poisson distributed with the mean value $1.8$ (Tab.~\ref{tb:sim-param}), following the empirical model of~\cite{Akdeniz14}. 
The total capacity of a link increases sublinearly to the number of data streams and is simulated according to the exponential correlation matrix model in~\cite{Loyka01} by choosing the correlation coefficient $r = 0.9$. A comparison of the total capacity
of parallel data streams for the REAL-SU-SM and MAX-SU-SM models is shown in Fig.~\ref{fig:cap-max-real}.
\begin{figure}
	\centering
	\includegraphics[width=5cm]{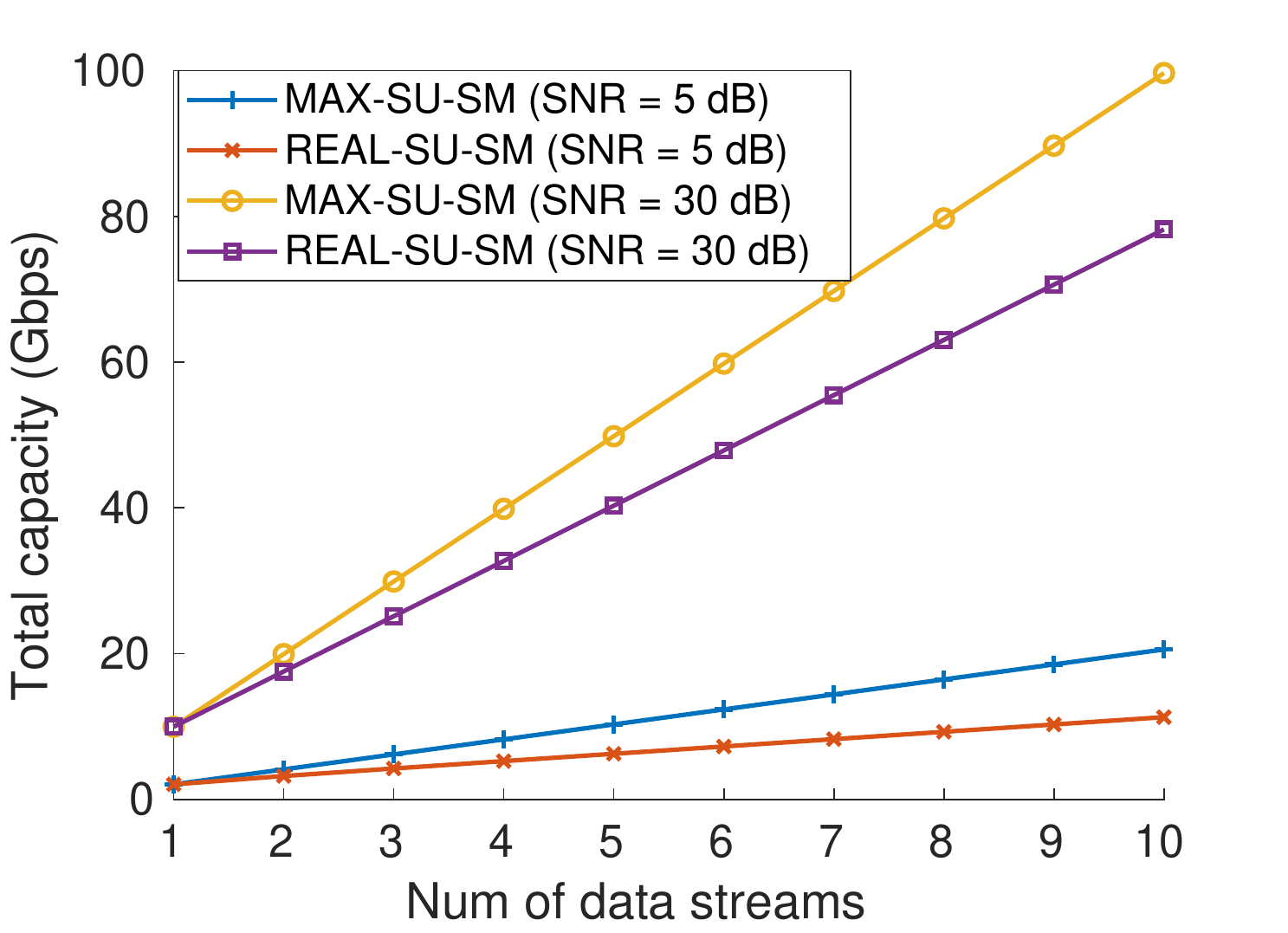}
	\caption{Comparison of the total capacity of parallel data streams for the REAL-SU-SM and MAX-SU-SM models.}
	\label{fig:cap-max-real}	
\end{figure}

\begin{figure}[!htbp]
	\begin{minipage}{0.13\textwidth}
		\begin{figure}[H]
			\includegraphics[width=2cm,left]{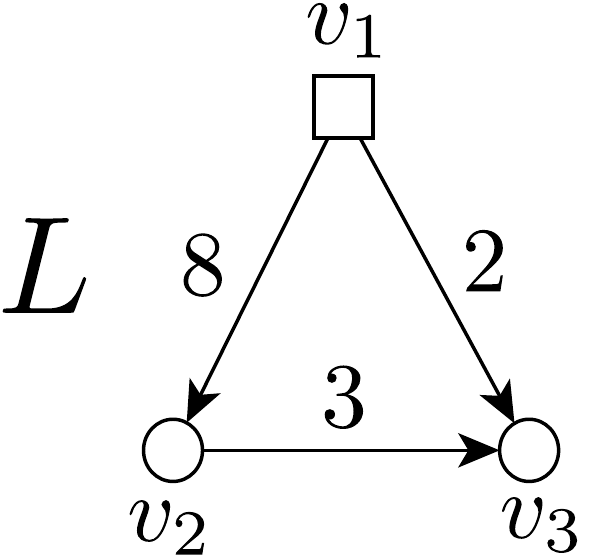}
		\end{figure}
	\end{minipage}
	\begin{minipage}{0.5\textwidth} \footnotesize
		\begin{tabular}{ l | c || c | c } \hline
			algo & OPT-FD-MTFS & \multicolumn{2}{c}{OPT-HD-MTFS} \\ \hline
			slot & \#1 & \#1 & \#2 \\ \hline
			\multirow{3}{*}{sched}  &  $v_1 \rightarrow v_2$  &  $v_1 \rightarrow v_2$  & $v_2 \rightarrow v_3$ \\
			&  $v_1 \rightarrow v_3$  & $v_1 \rightarrow v_2$  &  $v_2 \rightarrow v_3$ \\ 
			& $v_2 \rightarrow v_3$   & \\ \hline
			time & 1 & 0.4286 & 0.5714 \\
			\hline
		\end{tabular}
	\end{minipage}
	\caption{$v_1$ is the macro BS, and $v_2$ and $v_3$ are relay BSs. Each node has 2 RF chains. We assume NI and MAX-SU-SM models.
		The optimal max-min throughput for full-duplex MTFS and half-duplex MTFS problems are $5$ and $3.43$ respectively.}
	\label{fig:exp-diff-fd-hd}	
\end{figure}

\begin{figure*}[!t]
	\centering
	\subfigure[1 macro BS]{\includegraphics[width = 5.5cm]{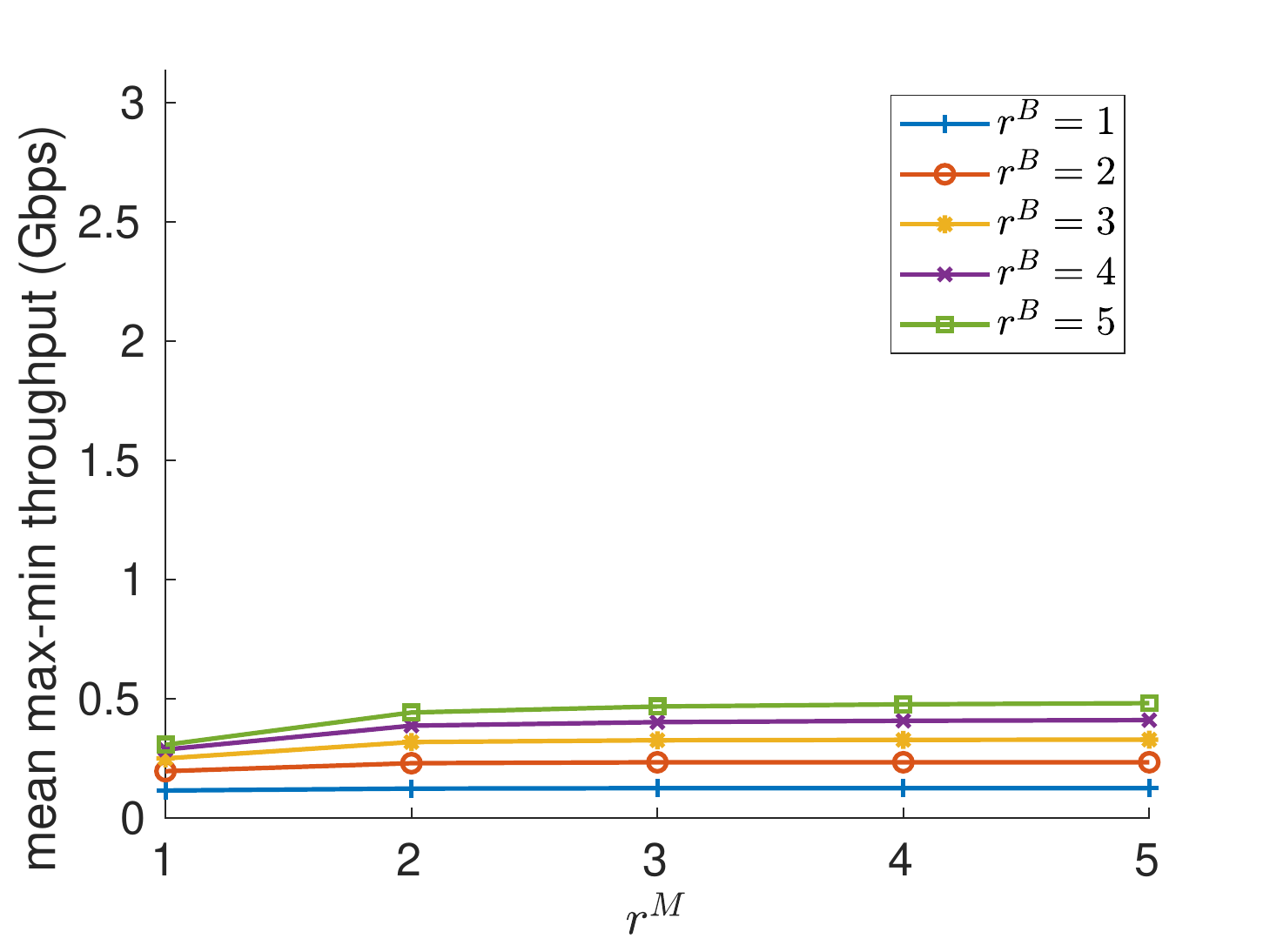}}
	\subfigure[2 macro BSs]{\includegraphics[width = 5.5cm]{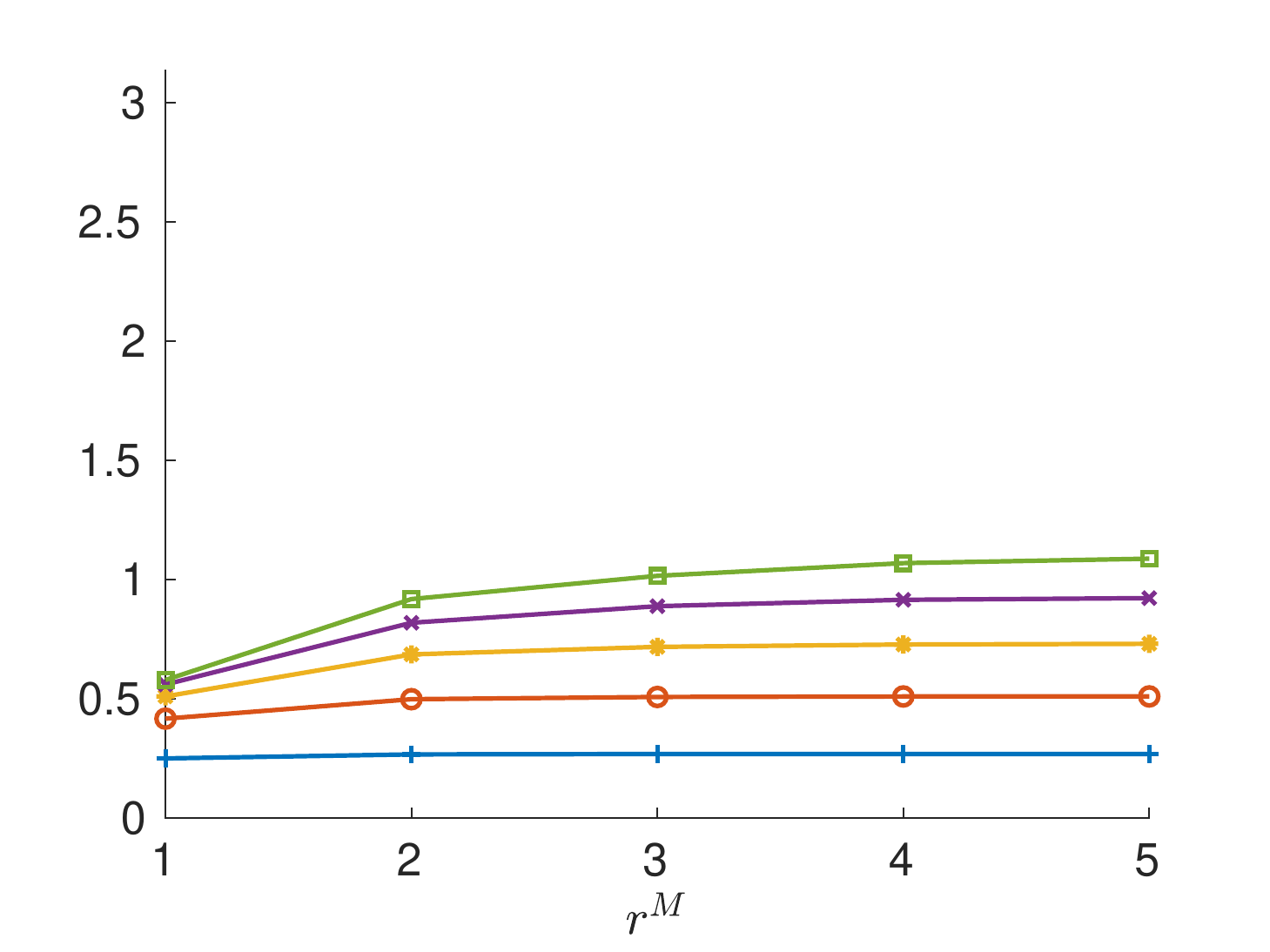}}
	\subfigure[4 macro BSs]{\includegraphics[width = 5.5cm]{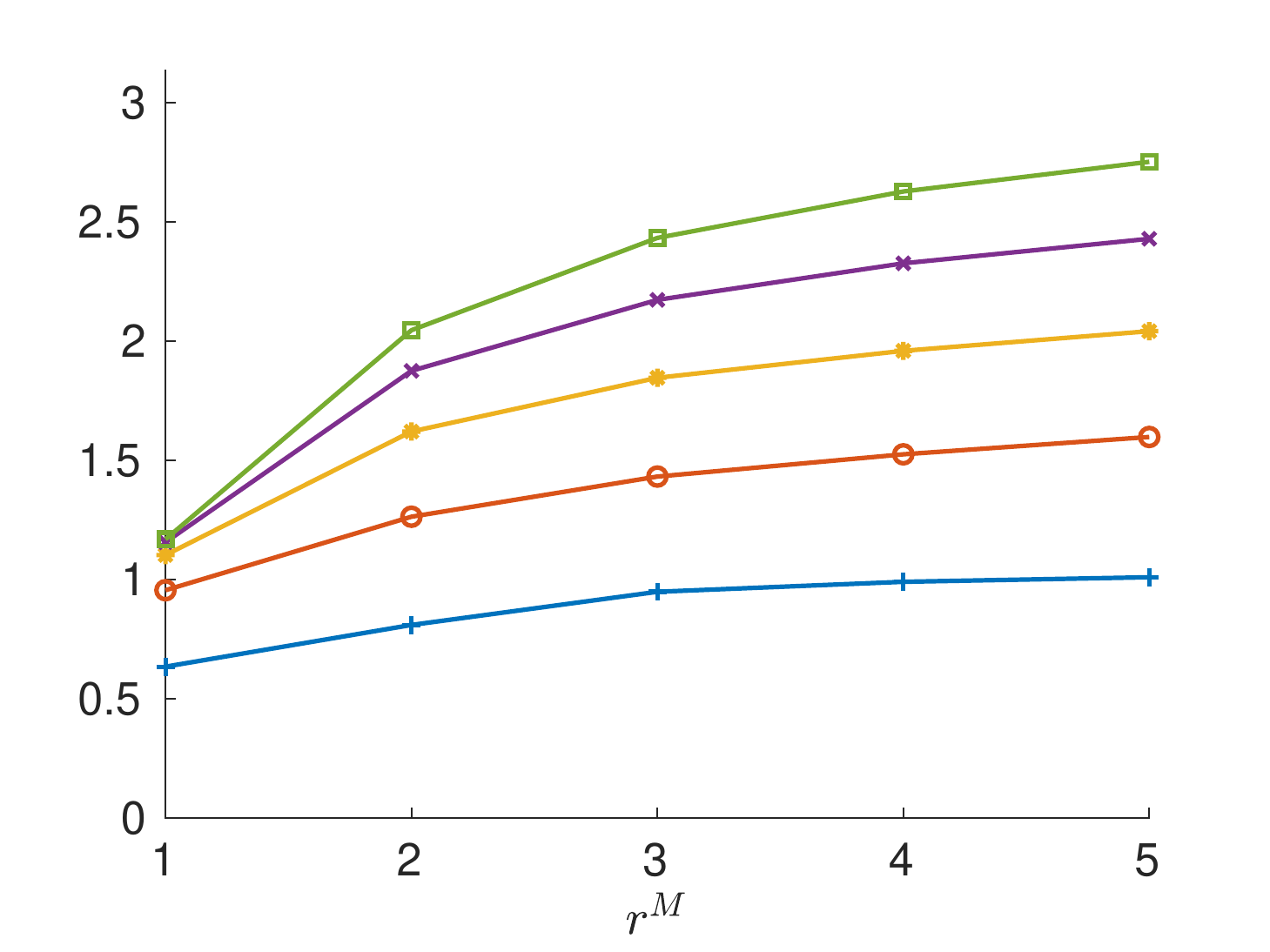}} 
	\caption{Max-min throughput of the OPT-FD-MTFS algorithm for the REAL-SU-SM model and for different number of RF chains.}
	\label{fig:opt-fd-mtfs-theta}
\end{figure*}

The proposed algorithms are implemented in MATLAB, except that we use the C++ program Blossom V for minimum cost perfect matching~\cite{Kolmogorov09} and Gurobi~\cite{gurobi} for linear programming.

We evaluate the optimal algorithms OPT-FD-MTFS and OPT-HD-MTFS as well as three approximation algorithms---F$^3$WC-FAO, F$^3$WC-LSLO, and PDS, for 10 backhaul networks with $10 \times 10$ relay BSs. 
OPT-FD-MTFS works for full-duplex scheduling under the NI model while OPT-HD-MTFS works for half-duplex scheduling of uniform orthogonal backhaul networks. F$^3$WC-FAO and F$^3$WC-LSLO are generally applicable for any combination of half-duplex/full-duplex, NI/PI model and MAX-SU-SM/REAL-SU-SM model while PDS only works for half-duplex scheduling under the NI model.

We place 1, $2 \times 1$ or $2 \times 2$ macro BSs in each network. The macro BSs and the relay BSs have the same number of RF chains $r^B$ and $r^M$ respectively, while $r^B$ and $r^M$ range from 1 to 5.

\begin{figure}
	\centering
	\subfigure[OPT-FD-MTFS]{\includegraphics[width = 4.3cm]{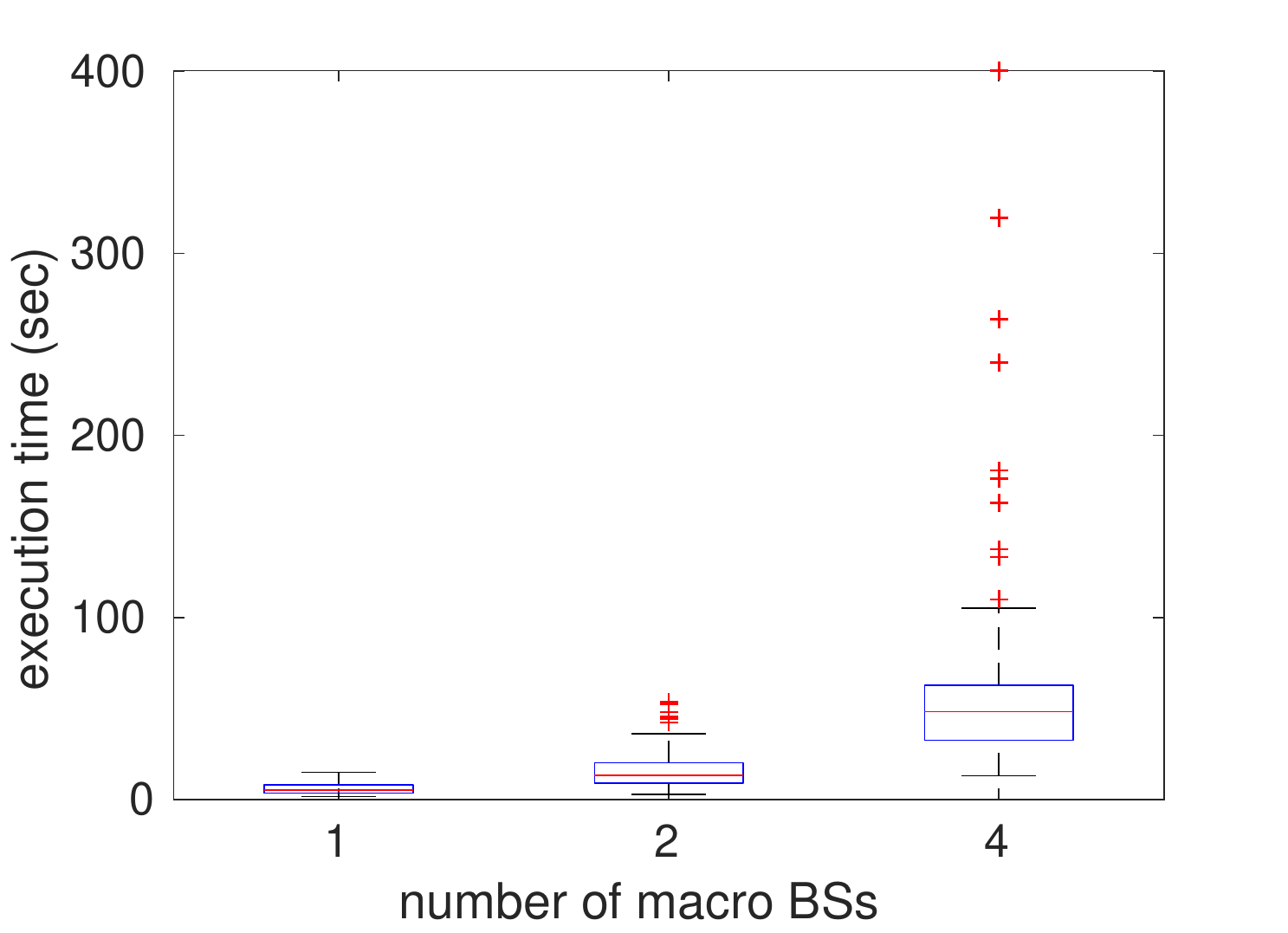}}
	\subfigure[OPT-HD-MTFS]{\includegraphics[width = 4.3cm]{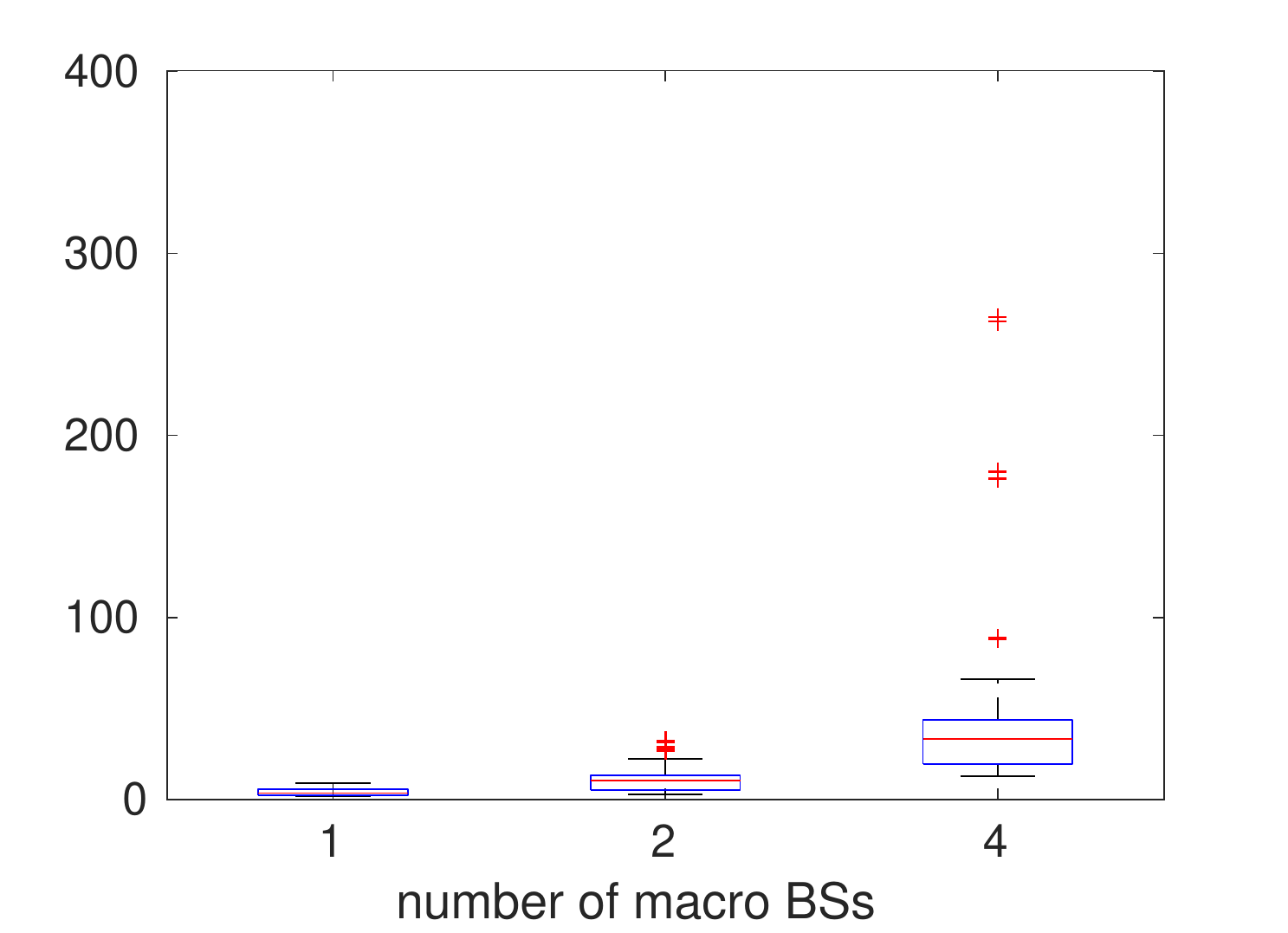}}
	\caption{Execution time of optimal algorithms for uniform orthogonal backhaul networks.}
	\label{fig:fd-hd-exetime}
\end{figure}

\subsection{Optimal Algorithms}
\label{ss:opt-algo}
Both full-duplex and half-duplex optimal schedules can be computed efficiently for uniform orthogonal backhaul networks. 
Surprisingly, for such networks, the max-min throughput of both OPT-HD-MTFS and OPT-FD-MTFS schedules are usually the same. 
 We believe that the close performance of max-min throughput for both half-duplex and full-duplex scheduling is due to the good connectivity of the backhaul network which allows plenty of scheduling possibilities. A simple network in Fig.~\ref{fig:exp-diff-fd-hd} shows that the performance gap can be large.

The max-min throughput $\theta^*$ of OPT-FD-MTFS for the REAL-SU-SM model is shown in Fig.~\ref{fig:opt-fd-mtfs-theta} for various number of macro BSs and RF chains. 
Generally, $\theta^*$ increases with the number of RF chains of relay BS ($r^M$) and of macro BS  ($r^B$), as well as the number of macro BSs. 
When the number of macro BSs and $r^B$ are fixed, $\theta^*$ gradually saturates despite the increase of $r^M$. In such cases, the bottleneck is at the links between macro BSs and relay BSs.
To achieve higher performance in $\theta^*$, we need to increase all three variables.
Yet, adding macro-BSs would be very costly. Adding more RF chains to each macro-BS while increasing the relay BSs that are neighbors to these macro BSs seems like a more cost-effective approach.
Moreover, the average throughput per relay BS is from 1x to 1.96x of the max-min throughput. This shows that in a dense network, we can achieve a rather equal distribution of throughput among relay BSs. 
As expected, $\theta^*$ of OPT-FD-MTFS for the MAX-SU-SM model is greater than or equal to that of the REAL-SU-SM model. The difference increases with $r^M$ and $r^B$ (Tab.~\ref{tab:opt-fd-mtfs-real-max}), which shows that multiple RF chains are especially beneficial to a rich multi-path channel.
\begin{table}[htbp]
	\caption{The rate of $\theta^*$ of MAX-SU-SM to that of REAL-SU-SM.}	
	\begin{tabular}{ c|c|c|c|c|c } 
			Avg. rate & $r^M = 1$ & $r^M= 2$ & $r^M = 3$ & $r^M= 4$ & $r^M= 5$ \\ \hline
			$r^B = 1$& 1.00 & 1.00 & 1.00 & 1.01 & 1.00 \\ \hline
			$r^B = 2$& 1.00 & 1.01 & 1.03 & 1.06 & 1.08 \\ \hline
			$r^B = 3$& 1.00 & 1.01 & 1.05 & 1.08 & 1.10 \\ \hline
			$r^B = 4$& 1.00 & 1.02 & 1.08 & 1.11 & 1.14 \\ \hline
			$r^B = 5$& 1.00 & 1.03 & 1.10 & 1.14 & 1.17 
	\end{tabular}
	\label{tab:opt-fd-mtfs-real-max}
\end{table}

The distributions of execution time of OPT-FD-MTFS and OPT-HD-MTFS for uniform orthogonal backhaul networks are shown in Fig.~\ref{fig:fd-hd-exetime}. OPT-HD-MTFS achieves almost the same performance in max-min throughput and network throughput as OPT-FD-MTFS, yet it runs much faster than the latter, by shortening the execution time by 27\% on average and by 79\% in the best case. The reason is due to the step of merging RF chains in the OPT-HD-MTFS (same as PDS) algorithm which leads to a smaller (in terms of vertices and arcs) graph on which matching is performed.  
Recall that in general cases, the HD-MTFS problem is NP-hard.
In addition, we observe from Fig.~\ref{fig:fd-hd-exetime} that the execution time increases with the number of macro BSs for both algorithms. In addition, the execution time of OPT-FD-MTFS also increases with the number of RF chains at BSs due to the growth of the graph for matching.

\subsection{Full-Duplex Approximation Algorithm}

\begin{figure}
	\centering
	\includegraphics[width=7cm]{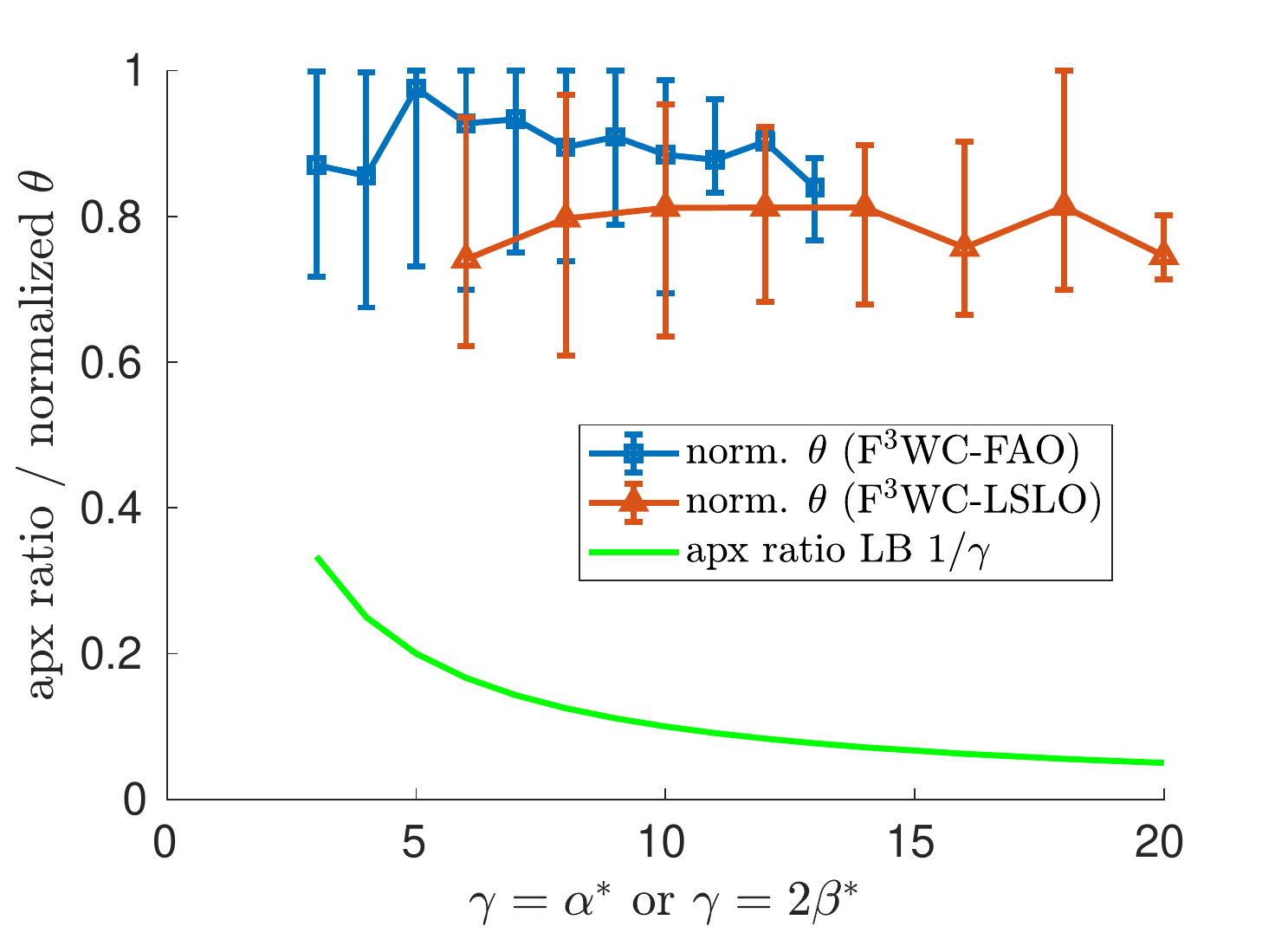}
	\caption{Max-min throughput of full-duplex approximation algorithms for REAL-SU-SM and PI model normalized to that of OPT-FD-MTFS, and the lower bounds of the approximation ratios. Median, $5\%$ and $95\%$ percentiles are shown in the errorbars.}
	\label{fig:fd-apx-theta}	
\end{figure}

If there is mutual interference between links in a backhaul network, we cannot use the optimal full-duplex MTFS scheduling algorithm. However, two fractional weighted coloring based approximation algorithms proposed in~\S\ref{sec:app-fra-color} can be applied. Because the MTFS problem is NP-hard under the PI model, we use the performance of OPT-FD-MTFS as an upper bound.
Fig.~\ref{fig:fd-apx-theta} shows the results for the REAL-SU-SM model. We observe that mmWave backhaul networks are noise-limited instead of interference-limited. 
On average, there are 611 directional links in an evaluated backhaul network, among which only 21 pairs of links are interfering, although we choose a relatively large beamwidth of $20^\circ$.
Despite considering the interference, both algorithms achieve on average more than $70\%$ of the optimal max-min throughput for the ideal interference-free case. In general, F$^3$WC-FAO outperforms F$^3$WC-LSLO in terms of max-min throughput. Besides, the theoretical approximation ratios of Theorem~\ref{thm:perf-f3wc} significantly underestimate the actual performance of the F$^3$WC algorithms. The results for the MAX-SU-SM model are omitted as they are similar.

\begin{figure}
	\centering
	\includegraphics[width=7cm]{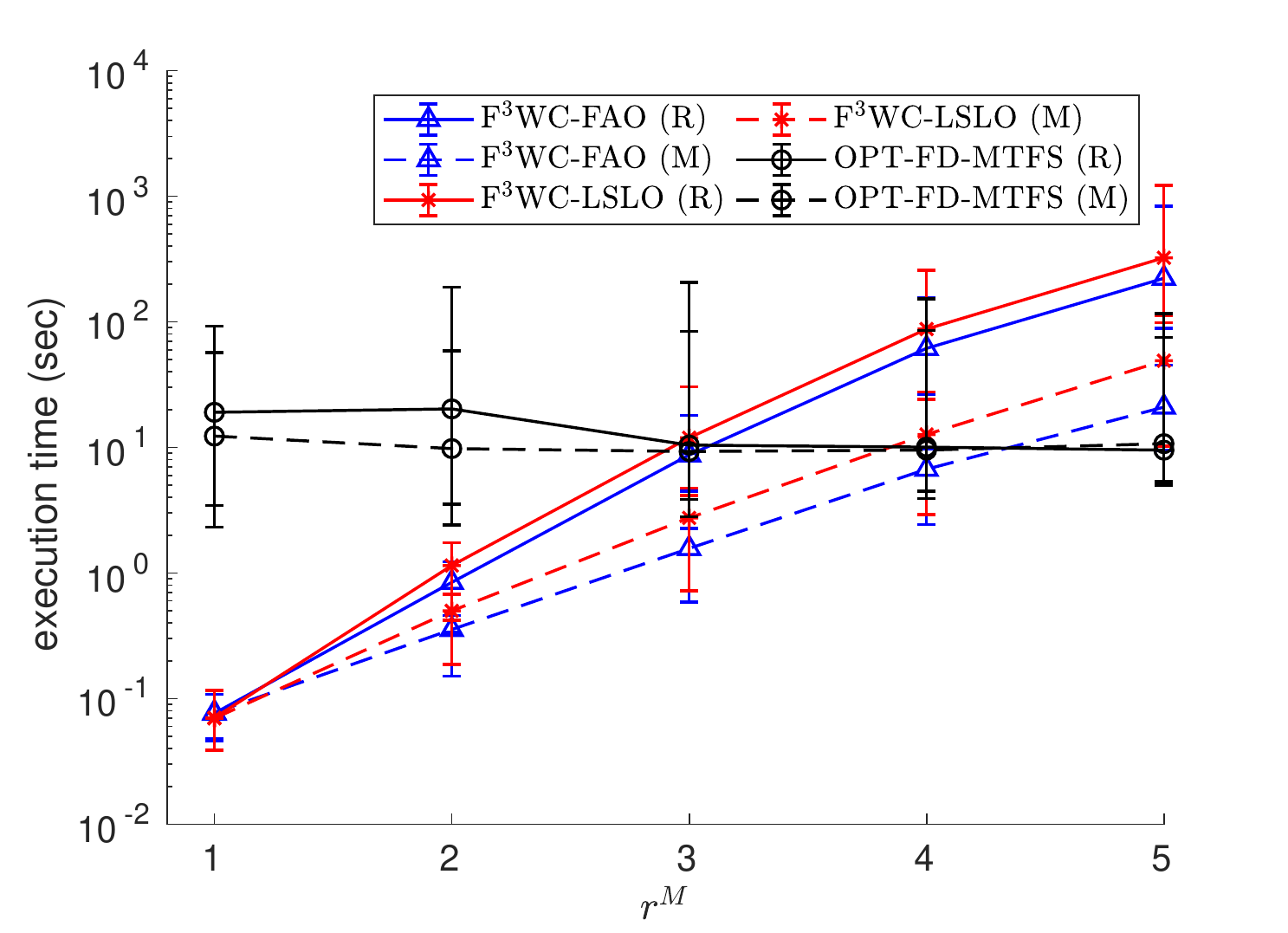}
	\caption{Execution time comparison of approximation algorithms and OPT-FD-MTFS for full-duplex scheduling. Median, $5\%$ and $95\%$ percentiles are shown in the errorbars. (R) and (M) stands for REAL-SU-SM and MAX-SU-SM, respectively.}
	\label{fig:fd-apx-runtime}	
\end{figure}
The execution time of the F$^3$WC algorithms and OPT-FD-MTFS are shown for two SU-SM models in Fig.~\ref{fig:fd-apx-runtime}.
In general, it takes OPT-FD-MTFS less than 100 seconds to schedule a backhaul network with 100 relay BSs and the execution time even decreases with $r^M$. Thus, it is practical to compute the optimal schedule for full-duplex backhauls if interference can be ignored.
The approximation algorithms are more efficient than OPT-FD-MTFS when $r^M$ is small. Yet the execution time goes up quickly with $r^M$, especially for the REAL-SU-SM model.
The reason is due to the large number of vertices in the conflict graph $|V(C)|$ which is equal to the number of arcs in the expanded network $H$ (see \S\ref{sec:conflict_graph}). A F$^3$WC algorithm needs to solve a linear program of $|V(C)| + 1$ variables. For example, with $r^M = 5$ and the REAL-SU-SM model, the linear program has about 30,000 variables, which takes a long time to solve.
For future work it would be interesting to investigate how to shrink the conflict graph, in order to improve the runtime.

\subsection{Half-Duplex Approximation Algorithms}

\begin{figure}
	\centering
	\subfigure[PI and REAL-SU-SM]{\includegraphics[width = 6cm]{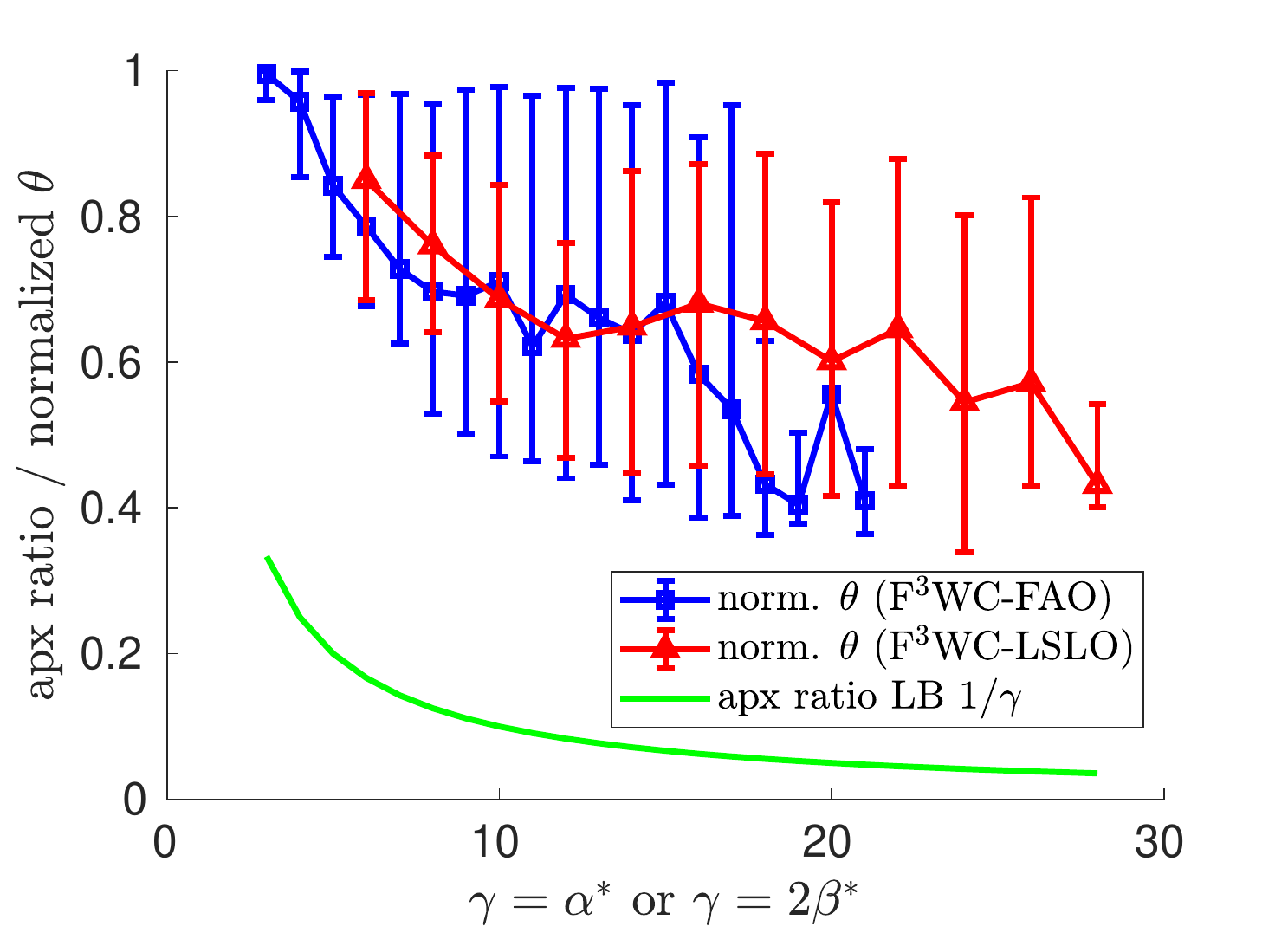}} 
	\subfigure[NI and REAL-SU-SM]{\includegraphics[width = 6cm]{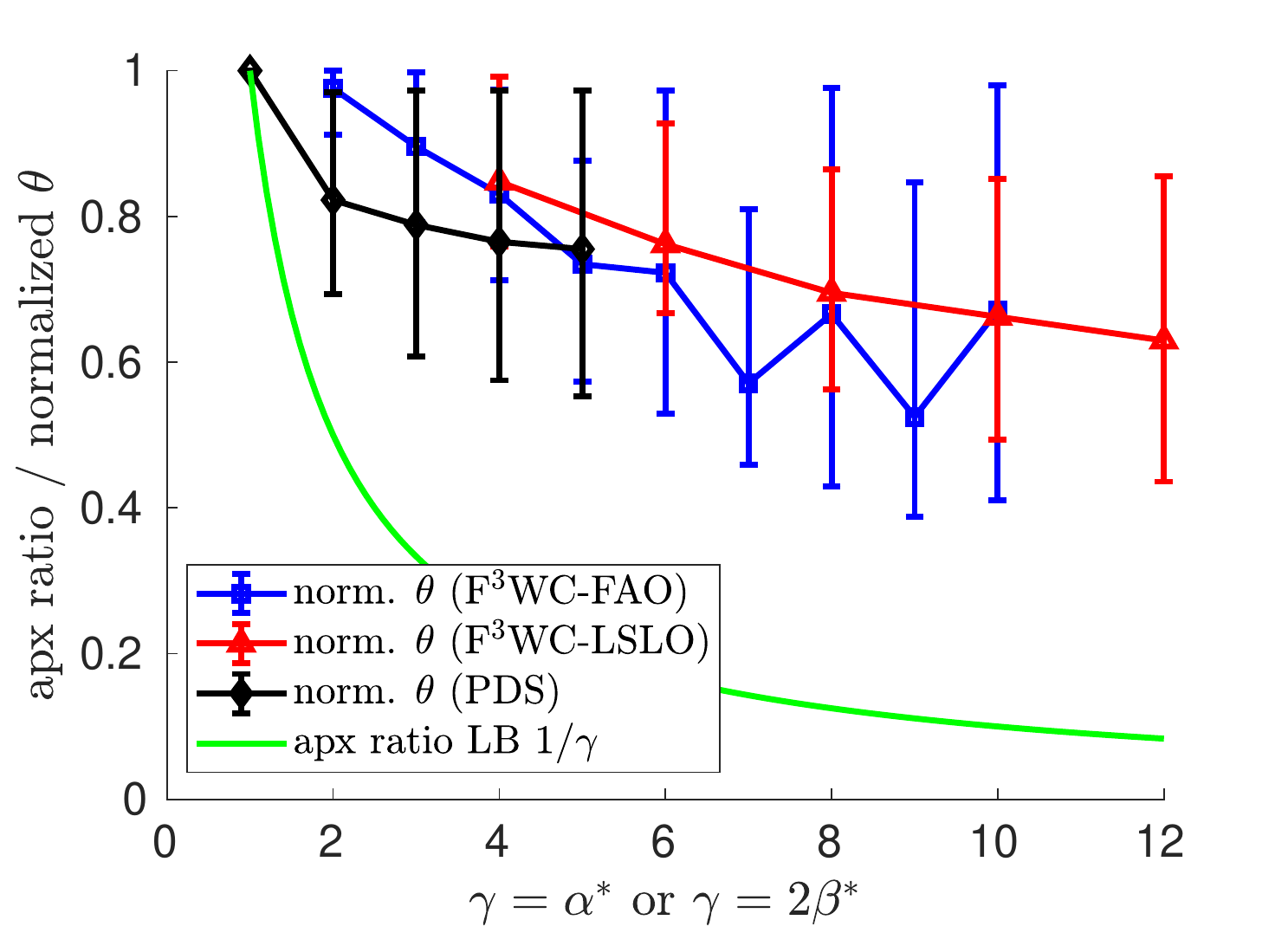}}
	\caption{Max-min throughput of half-duplex approximation algorithms normalized to that of OPT-FD-MTFS, and the lower bounds of the approximation ratios. Median, $5\%$ and $95\%$ percentiles are shown in the errorbars.}
	\label{fig:hd-apx-theta}	
\end{figure}

F$^3$WC-FAO, F$^3$WC-LSLO and PDS are 3 approximation algorithms for half-duplex MTFS scheduling.
The first two work for all cases while PDS only works for the NI model.
We show in \S \ref{ss:opt-algo} that the optimal max-min throughput of the half-duplex MTFS problem is the same or very close to that of full-duplex MTFS for uniform orthogonal backhaul networks. Therefore, we use the max-min throughput of OPT-FD-MTFS as the reference for the evaluation of half-duplex approximation algorithms.
Fig.~\ref{fig:hd-apx-theta}(a) and \ref{fig:hd-apx-theta}(b) show the results for the PI and NI models assuming the REAL-SU-SM model.
All three algorithms attain far better performance than the theoretical lower bounds. 
The two F$^3$WC algorithms have similar performance. Under the NI model, PDS has the best max-min throughput, being higher than $80\%$ on average. The performance of PDS is even better for the MAX-SU-SM model. For example, it is guaranteed to reach the optimal when a backhaul network is uniform orthogonal.

\begin{figure}
	\centering
	\includegraphics[width=7cm]{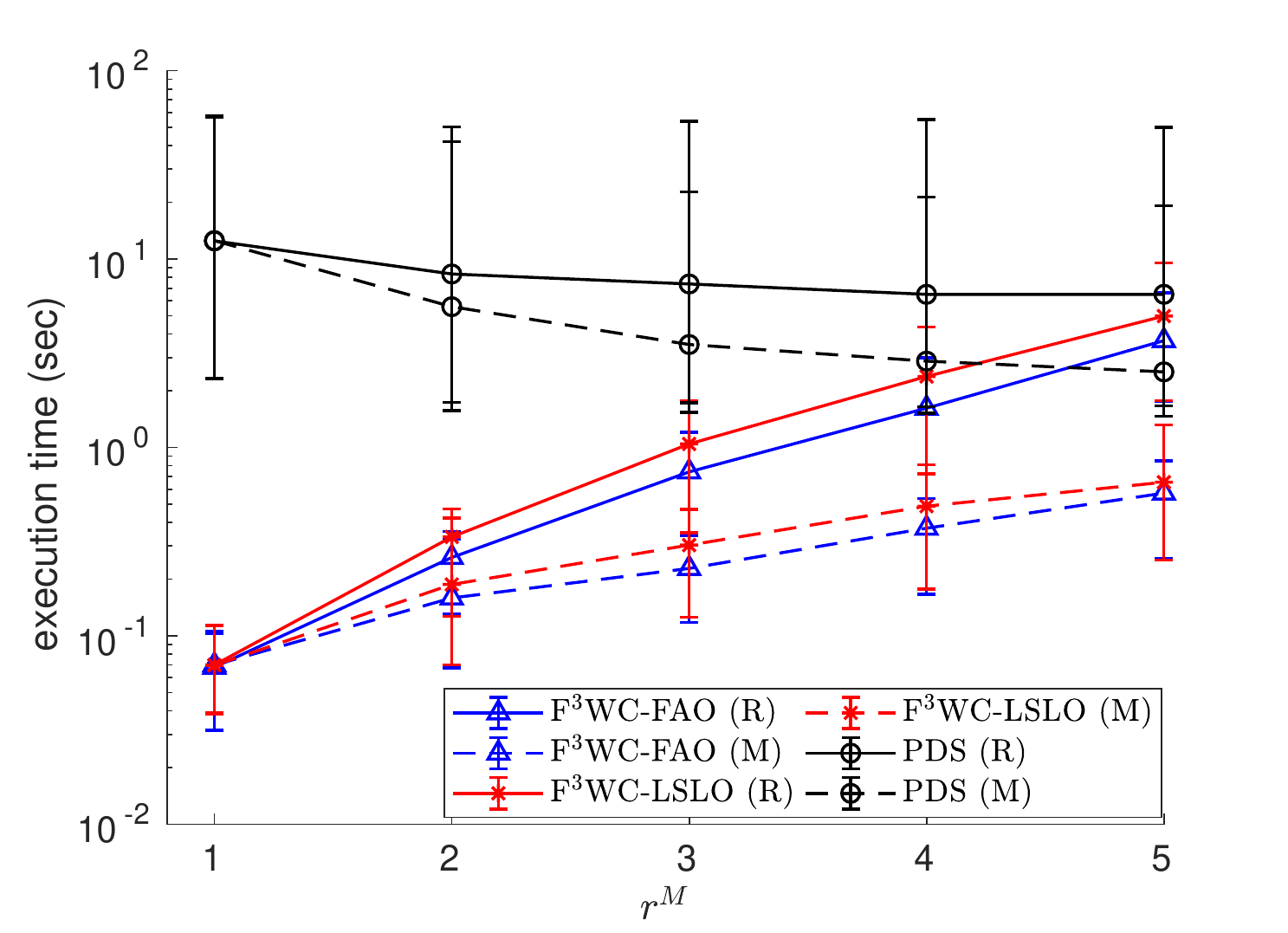}
	\caption{Execution time of the approximation algorithms for half-duplex scheduling. Median, $5\%$ and $95\%$ percentiles are shown in the errorbars. (R) and (M) stands for REAL-SU-SM and MAX-SU-SM, respectively.}
	\label{fig:hd-apx-runtime}	
\end{figure}

Fig.~\ref{fig:hd-apx-runtime} displays the time efficiency of the three approximation algorithms. 
They are all relatively efficient, requiring no more than two minutes. In comparison, F$^3$WC algorithms run faster because we use the property that a directed network can be sparsely expanded under the condition of half-duplex scheduling, which leads to a small conflict graph. We again observe the trend that the execution time of F$^3$WC goes up with $r^M$ while that of PDS goes down. In addition, the execution time of the REAL-SU-SM model is larger than that of the MAX-SU-SM model. This is due to a larger conflict graph for F$^3$WC and an increase in time for maximum weight matching for PDS.

In summary, the evaluation shows that a mmWave backhaul network is generally noise-limited even for a relatively large beamwidth of $20^\circ$. The optimal max-min throughput in practical backhaul networks is quite similar for both full-duplex and half-duplex scheduling. 
PDS is an ideal approximation algorithm for half-duplex scheduling under the NI model as it achieves near optimal performance within practical time. Finally, the two F$^3$WC algorithms have similar max-min throughput. 
They are competitive in execution time for small backhaul networks with a small number of RF chains and half-duplex scheduling.

\section{Conclusion}
\label{s:conclude}
In this article, we studied the scheduling of mmWave backhaul networks assuming a general system model of multiple macro BSs, relay BSs and RF chains as well as interference between links and realistic single-user spatial multiplexing.
Under the assumption of full-duplex radios and interference-free links, we found an optimal joint routing and scheduling method---{\em schedule-oriented optimization} based on matching theory. It can solve any problem formulated as a linear program whose variables are data stream activation durations and QoS metrics. The method is demonstrated to be efficient in practice, capable of solving the maximum throughput fair scheduling (MTFS) problem within a few minutes for a backhaul network of 4 macro BSs, 100 relay BSs and 5 RF chains at each node. 
However, for the more realistic assumption of half-duplex radios or pairwise link interference, we proved that the MTFS problem is NP-hard. 
Subsequently, the paper proposed a number of approximation algorithms with provable performance bounds for the MTFS problem. 
The PDS algorithm works for half-duplex scheduling under the NI (no interference) model. It achieves the optimal performance for uniform orthogonal backhaul networks and about 80\% of the optimum for general backhaul networks.
The F$^3$WC algorithms adapted to our problem are more general than PDS as they support any combination of full-duplex/half-duplex, REAL-SU-SM/MAX-SU-SM model and PI/NI model. Their performance is in general more than half of the optimum. 
In summary, the paper presents optimal and approximation algorithms  that are highly practical for scheduling mmWave cellular networks.

\section*{Acknowledgement}
This work has been performed in the context of the DFG Collaborative Research Center (CRC) 1053 MAKI and the LOEWE center emergenCITY.
It was also supported in part by the Minister of Science and Technology of Taiwan under Grant 104-2911-I-011-503 and the Region of Madrid through TAPIR-CM (S2018/TCS-4496). 
\appendix

\subsection{Proof of Theorem~\ref{thm:MTFS-polynomial-time}}
\label{sec:proof_MTFS-polynomial-time}
\begin{proof}
The proof applies the technique used in~\cite{Nemhauser91} for proving
that fractional edge coloring can be solved in polynomial time by the
ellipsoid algorithm. Specifically, a linear program is solvable in
polynomial time if the separation problem of its dual problem can be
solved in polynomial time. The separation problem of a linear program
$J$ is to determine whether a given solution satisfies all constraints
of $J$ or a violated constraint is identified.

If we can solve both
linear programs of \eqref{eq:mtf-theta} and \eqref{eq:mtf} in
polynomial time, then we can solve the MTFS problem in polynomial
time.
We first prove that \eqref{eq:mtf-theta} can be solved in polynomial
time. The dual of \eqref{eq:mtf-theta} is
\begin{IEEEeqnarray}{lrCl}
  \IEEEyesnumber\label{eq:mtf-theta-dual}\IEEEyessubnumber*
  \textnormal{min} & q & &
  \\
  \text{s.t.}\quad&
  \trans{\vect{p}}\mat{A}^M-q\trans{\vect{1}}& \le &\trans{\vect{0}}\label{eq:mtf-theta-dual-1}
  \\
  & \trans{\vect{p}}\vect{1}& = &1\label{eq:mtf-theta-dual-2}
  \\
  & \vect{p}& \ge &\vect{0}\label{eq:mtf-theta-dual-3}.
\end{IEEEeqnarray}
Let $D$ be the directed network.
Given a solution $(\vect{p}, q)$, \eqref{eq:mtf-theta-dual-2} and
\eqref{eq:mtf-theta-dual-3} can be checked in polynomial time, since
the total number of constraints in \eqref{eq:mtf-theta-dual-2} and
\eqref{eq:mtf-theta-dual-3} is $\card{M(D)}+1$ and $\vect{p}$ contains $\card{M(D)}$
elements.

To check \eqref{eq:mtf-theta-dual-1}, we use the polynomial-time maximum
weighted simple $b$-matching algorithm~\cite[Chap.~33]{Schrijver03}. A constraint of
\eqref{eq:mtf-theta-dual-1} is of the form
$\trans{\vect{p}}\vect{a}^M_k \le q$, where $\vect{a}^M_k$
is the $k$-th column of $\mat{A}^M$ (corresponding to a simple $b$-matching of $D$). 
Define a weight function $w: E(D) \mapsto \mathbb{R}$.
We set the weights to each arc $e = (v_i, v_j)_l \in E(D)$ ($e$ is the $l$-th arc from vertex $v_i$ to vertex $v_j$):
\begin{IEEEeqnarray}{rCl}
  w(e)=
  \begin{cases}
    c(e) (p_j - p_i) & \text{if }v_i \in M(D)
    \\
    c(e) p_j         & \text{otherwise.}
  \end{cases}\IEEEeqnarraynumspace\label{eq:ftr-setw}
\end{IEEEeqnarray}

Then we perform maximum weighted simple $b$-matching on $D$. Let the maximum weight be $w = \max_{k} \trans{\vect{p}}\vect{a}^M_k $.
If $w\le q$, then $(\vect{p}, q)$ satisfies
\eqref{eq:mtf-theta-dual-1}. Otherwise it gives a violated constraint.

According to Theorem~3.10 in~\cite{Groetschel81}, for a linear program
$J$, if we can solve the separation problem of its dual $J^*$ in
polynomial time, then we can solve both $J$ and $J^*$ in polynomial
time with the ellipsoid algorithm. This proves that
\eqref{eq:mtf-theta} can be solved in polynomial time.

Similarly, we next prove that \eqref{eq:mtf} can be solved in
polynomial time. The dual of \eqref{eq:mtf} is
\begin{IEEEeqnarray}{lrCl}
  \IEEEyesnumber\label{eq:mtf-dual}\IEEEyessubnumber*
  \textnormal{min} & \theta^*\trans{\vect{p}}\vect{1}+ q & &
  \\
  \text{s.t.}\quad&
  \trans{\vect{p}}\mat{A}^M+q\trans{\vect{1}}
  & \ge &\trans{\vect{c}}\label{eq:mtf-dual-1}
  \\
  & \vect{p}& \le &\vect{0}.
\end{IEEEeqnarray}
Given a tuple $(\vect{p}, q)$, we set the following weights to
each arc $e = (v_i,v_j)_l \in E(D)$ 
\begin{IEEEeqnarray}{rCl}
  w(e) =
  \begin{cases}
    c(e) (p_i - p_j) & \text{if }v_i \in M(D) \\
    c(e) (1 - p_j)   & \text{otherwise.}
  \end{cases}\IEEEeqnarraynumspace\label{eq:ftr-setw2}
\end{IEEEeqnarray}
Then we perform maximum weighted simple $b$-matching on $D$. Depending on whether
the maximum weight satisfies $w = \max_k (c_k -  \trans{\vect{p}} \vect{a}^M_k) \le q$, the constraints of
\eqref{eq:mtf-dual-1} are satisfied or a violated one is
identified. With the same argument as above, \eqref{eq:mtf} can be
solved in polynomial time. This completes the proof.
\end{proof}

\subsection{Proof of Theorem~\ref{thm:mtfs-intf}}
\label{sec:thm:mtfs-intf}
\begin{proof}
	As is well-known that it is NP-hard to find the {\em fractional chromatic number} $\chi_f(G, \vect{1})$ (minimum fractional weighted coloring assuming each vertex has weight 1) for an arbitrary graph $G$~\cite{Groetschel81}.
	Given a graph $G$, we create a directed network $D$ as follows. $D$ has $2|V(G)|$ vertices and $|V(G)|$ arcs.
	For each $v \in V(G)$, we create a pair of vertices $v^B$ and $v^M$ representing a macro BS and a relay BS, and an arc $(v^B, v^M)$ in $D$.
	For each edge $\{u, v\} \in E(G)$, we specify that the two arcs $(u^B, u^M)$ and $(v^B, v^M)$ in $D$ interfere with each other.
	In addition, we assume that every vertex in $D$ has one RF chain and every arc in $D$ have unit capacity.
	Then it is obvious, that the optimal max-min throughput $\theta^* = 1/\chi_f(G, \vect{1})$. This proves that the MTFS problem is NP-hard under the PI model. This result applies for both half-duplex and full-duplex scheduling as it makes no difference when the RF chain number  is one.
\end{proof}

\subsection{Proof of Lemma~\ref{lem:MCHS-DAG}} \label{sec:proof-lem-MCHS-DAG}
\begin{proof}
We reduce the satisfiability (SAT) problem~\cite{Garey99}, which is NP-hard, to the MWHS problem on a DAG. 
Let $Z = C_1 \wedge \dots \wedge C_K$ be a boolean expression to satisfy.
$Z$ consists of $K$ clauses and each clause $C_k$ is of the form $y_1 \vee \cdots \vee y_J$, 
where $k \in \{1 \dots K\} \eqdef [1:K]$. Note $J$ is the number of literals in $C_k$ and dependent on $k$.
Suppose $Z$ contains in total $L$ boolean variables $x_1 \dots x_L$, then the literals $y_j \in \{x_1, \neg x_1 \dots  x_L, \neg x_L\}$ for $j \in [1:J]$.
We construct a directed network $D$ as follows. Note, $D$ is a strict digraph.
Let $W, Q$ be two disjoint vertex sets with $W = W_1 \cup \cdots \cup W_L$ and $Q = \{q_1 \dots q_K\}$, where $W_l = \{p_l, n_l, r_l\}$, for $l \in [1:L]$. 
Let $V(D) = W \cup Q$, so $D$ has $3L+K$ vertices. Next, we construct the arc set $E(D)$. 
For each clause $C_k$, we define the arc set 
\begin{align*} 
E_k \eqdef &\{(p_l, q_k) \mid \exists y_j \text{ in } C_k \text{ such that } y_j = x_l\}  \\
	  \cup & \{ (n_l, q_k) \mid \exists y_j \text{ in } C_k \text{ such that } y_j = \neg x_l\}.
\end{align*} 
In addition, for each variable $x_l$, we define the arc set
\begin{equation*}
A_l \eqdef \{(r_l, p_l), (r_l, n_l)\}.
\end{equation*}
The arc set of $D$ is
\begin{equation*}
	E(D) = E_1 \cup \cdots \cup E_K \cup A_1 \cup \cdots \cup A_L.
\end{equation*}
The weight is set as $w(e) = 1, \forall e \in E(D)$.
We define the RF chain number function $r$ as: 
\[   
r(v) \eqdef 
\begin{cases}
\max\{\deg^+(v), 1\}, &\text{ if }v = p_l \text{ or }v = n_l, \\
1, &\text{ otherwise,}
\end{cases}
\]
where $\deg^+(v)$ is the outdegree of vertex $v$. 
Obviously, $D$ is a DAG. 
An example for constructing $D$ from a SAT problem is shown in Fig~\ref{fig:sat-2-F}.
To complete the proof, we need to show: \newline
{\bf Claim: } $Z$ is satisfiable if and only if $D$ has a half-duplex subgraph with total weight of $K + L$.

\begin{figure}[htbp]
\centering
	\includegraphics[width=4cm]{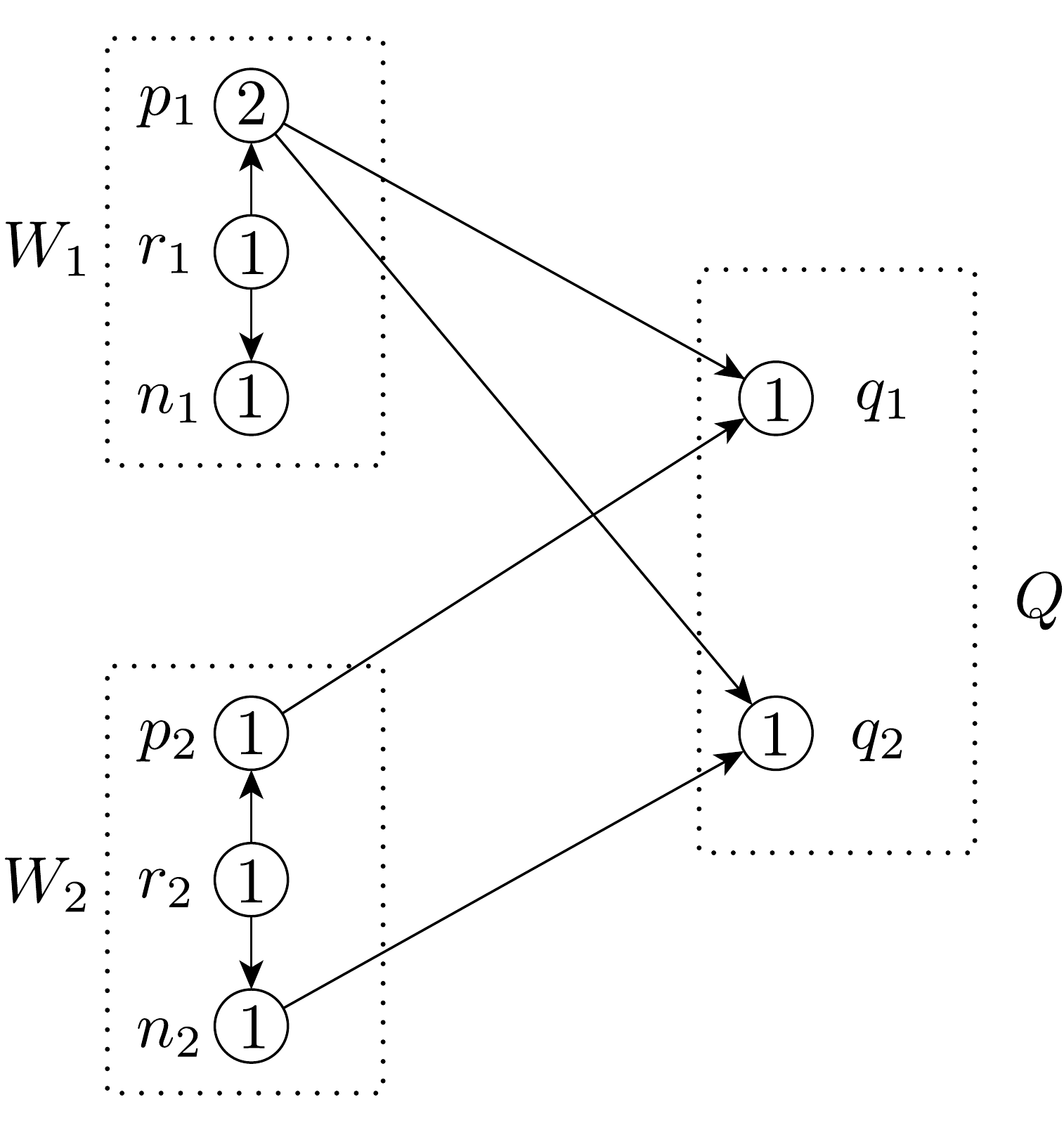}
\caption{The DAG directed network $D$ for $Z = (x_1 \vee x_2) \wedge (x_1 \vee \neg x_2)$. The number in a vertex $v$ is $r(v)$.
$w(e) = 1$ for each arc $e$.}
\label{fig:sat-2-F}	
\end{figure}

Now we prove the claim. Suppose $Z$ is satisfiable. We will select a set of arcs $E \subseteq E(D)$. For each variables $x_l = \text{true}$, we add to $E$ all arcs leaving $p_l$ and the arc $(r_l, n_l)$. For each variable $x_l = \text{false}$, we add to $E$ all arcs leaving $n_l$ and the arc $(r_l, p_l)$. $E$ satisfies the degree and half-duplex constraints on each vertex $w \in W$. Since $Z$ is satisfied, for each $k$, there is at least one arc in $E$ that has one end in $W$ and the other end at $q_k$. We remove arcs from $E$ that are incident to $Q$ until each $q_k$ is incident to exactly one arc. Now $E$ is a half-duplex subgraph of $D$ with total weight $K + L$. 

Conversely, suppose $E$ is a half-duplex subgraph of $D$, then the maximum weight of arcs in $E$ that are between $W$ and $Q$ is $K$ and the maximum weight of arcs in $E$ that are between vertices in $W$ is $L$. If $D$ has a half-duplex subgraph $E$ with total weight $K + L$, then there are exactly $L$ arcs between vertices in $W$, one for each $W_l$. If there is an arc $(r_l, p_l) \in E$, we set $x_l = \text{false}$, otherwise, if there is an arc $(r_l, n_l) \in E$, we set $x_l = \text{true}$. With this assignment $Z$ is satisfied, since $Q$ is incident to exactly $K$ arcs in $E$.
Thus, the MWHS problem is NP-hard on a general directed network that is a DAG.
\end{proof}

\subsection{Proof of Theorem~\ref{thm:HD-MTFS-NP-complete}}
\label{sec:proof-thm-HD-MTFS}
\begin{proof}
Similar to Lemma~\ref{lem:MCHS-DAG}, we prove by reducing the SAT problem~\cite{Garey99}, which is NP-hard, to the full-duplex MTFS problem on a directed network. 
Let $Z = C_1 \wedge \cdots \wedge C_K$ be a boolean expression to satisfy.
$Z$ consists of $K$ clauses and each clause $C_k$ is of the form $y_1 \vee \cdots \vee y_J$, where $k \in \{1 \dots K\} \eqdef [1:K]$.
Note $J$ is the number of literals in $C_k$ and dependent on $k$. Suppose $Z$ contains in total $L$ boolean variables $x_1 \dots x_L$, then the literals
$y_j \in \{x_1, \neg x_1 \dots x_n, \neg x_n\}$ for $j \in [1:J]$.
  We construct a directed network $D$ as follows. Note $D$ is a strict digraph. The construction is more complex than in the proof of Lemma~\ref{lem:MCHS-DAG}, 
  which is necessary for the transformation between a SAT problem and an optimal schedule.
  Let $W, Q$ be two disjoint vertex sets with $W = W_1 \cup \cdots \cup W_L$ and $Q = \{q_1, \dots, q_k\}$, 
  where $W_l = \{p_l^{(1)}, p_l^{(2)}, n_l^{(1)}, n_l^{(2)}, r_l^{(1)}, r_l^{(2)}, r_l^{(3)}, r_l^{(4)}\}$, for $l \in [1:L]$.
  Let $V(D) = W \cup Q$, so $D$ has $8L + K$ vertices. 
  Next, we construct the arc set $E(D)$. 
  Let the $4L$ vertices $r_l^{(m)}, \forall m \in [1:4]$ be macro BSs and all the other vertices be relay BSs.
  For each clause $C_k$, we define the arc set
  \begin{align*} 
  E_k \eqdef  &\{(p_l^{(1)}, q_k), (p_l^{(2)}, q_k)  \mid \exists y_j \text{ in } C_k \text{ such that } y_j = x_l\}  \\
  	\cup  & \{(n_l^{(1)}, q_k), (n_l^{(2)}, q_k)  \mid \exists y_j \text{ in } C_k \text{ such that } y_j = \neg x_l\},
  \end{align*} 
  and set $c(e) = 1, \forall e \in E_k$. In addition, for each variable $x_l$, we define the arc set
  \begin{align*}
  A_l \eqdef \{ (r_l^{(1)}, p_l^{(1)}), (r_l^{(1)}, n_l^{(1)}), (r_l^{(2)}, p_l^{(2)}), (r_l^{(2)}, n_l^{(2)}), \\
  (r_l^{(3)}, p_l^{(2)}), (r_l^{(3)}, n_l^{(1)}), (r_l^{(4)}, p_l^{(1)}), (r_l^{(4)}, n_l^{(2)}) \}.
  \end{align*}
  The capacity of all arcs leaving macro BSs is set to $c(e) = K/2+1, \forall e \in A_l$. The reason for choosing the value $K/2+1$ is that it is a sufficiently large capacity such that the constructed schedule $S$ in the following achieves the optimal max-min throughput of 1.
  The arc set of $D$ is
  \begin{equation*}
	E(D) = E_1 \cup \cdots \cup E_K \cup A_1 \cup \cdots \cup A_L.
  \end{equation*}  
We define the RF chain number function $r$ as: 
\[   
r(v) \eqdef 
\begin{cases}
\max\{\deg^+(v), 2\}, &\text{ if }v = p_l^{(m)} \text{ or }v = n_l^{(m)}, \\
1, &\text{ otherwise,}
\end{cases}
\]
where $\deg^+(v)$ is the outdegree of the vertex $v$.
An example for constructing $D$ from a SAT problem is shown in Fig.~\ref{fig:sat-2-F2}. 
To complete the proof, we need to show: \newline
{\bf Claim: }$Z$ is satisfiable if and only if $D$ has a unit time half-duplex schedule that achieves the max-min throughput $\theta = 1$ 
and the network throughput $\alpha = 4L(K/2+1)$. 
\begin{figure}[htbp]
\centering
	\includegraphics[width=6cm]{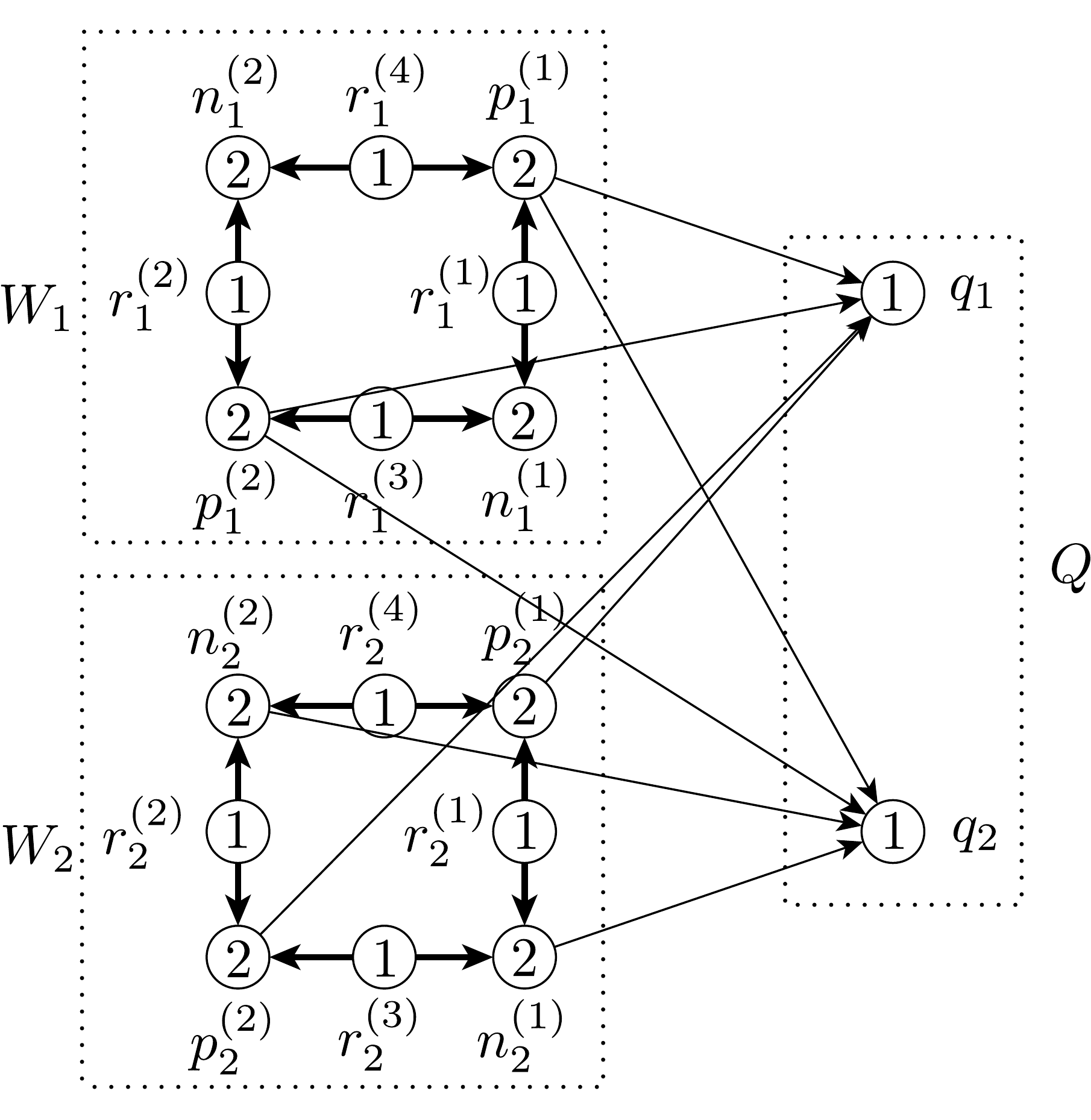}
\caption{The directed network $D$ for $Z = (x_1 \vee x_2) \wedge (x_1 \vee \neg x_2)$. The number in a vertex $v$ is the value $r(v)$. Vertices $r_i^{(j)}$ are macro BSs. The thick arcs $e$ have capacity $c(e) = K/2+1 = 2$ and the thin arcs $e'$ have capacity $c(e') = 1$.}
\label{fig:sat-2-F2}	
\end{figure}

Now we prove the claim. Suppose $Z$ is satisfiable, we create a unit time half-duplex schedule $\set{S}$ that consists of two slots $S_1, S_2 \subseteq E(D)$, each with length 0.5. We first create the arc set $S_1$, For each variable $x_l$, we define the arc set
\[ 
	E_l^{(1)} \eqdef 
	\begin{cases}   \delta^+(p_l^{(1)}) \cup A_l^{(1)}, &\text{if } x_l = \text{true}, \\
                   \delta^+(n_l^{(1)}) \cup B_l^{(1)}, &\text{otherwise,}
    \end{cases}
\]
where $\delta^+(v)$ is the set of arcs that leave vertex $v$, $A_l^{(1)} = \{(r_l^{(1)}, n_l^{(1)}), (r_l^{(2)}, p_l^{(2)}), 
(r_l^{(3)}, p_l^{(2)}), (r_l^{(4)}, n_l^{(2)}) \}$ and $B_l^{(1)} = \{(r_l^{(1)}, p_l^{(1)}), (r_l^{(2)}, n_l^{(2)}),
(r_l^{(3)}, p_l^{(2)}), (r_l^{(4)}, n_l^{(2)}) \}$. Initially,
\begin{equation*}
	S_1 = E_1^{(1)} \cup \cdots \cup E_L^{(1)}.
\end{equation*}
$S_1$ satisfies the degree and half-duplex constraints on each vertex $w \in W$. 
Since $Z$ is satisfied, for each $k$, there is at least one arc in  $S_1$ that has one end in $W$ and the other end at $q_k$. We remove arcs from $S_1$ that are incident to $Q$ until each $q_k$ is incident to exactly one arc. 

$S_2$ is symmetric to $S_1$ in the sense that it can be created from $S_1$: $S_2$ is obtained by scanning the arcs in $S_1$ and
replacing each occurrence of $r_l^{(1)}$ and $r_l^{(2)}$, $r_l^{(3)}$ and $r_l^{(4)}$, $p_l^{(1)}$ and $p_l^{(2)}$, $n_l^{(1)}$ and $n_l^{(2)}$ with each other. 

It is obvious that the schedule $\set{S}$ gives the max-min throughput $\theta = 1$ and the network throughput $\alpha = 4L(K/2+1)$.

Conversely, suppose that a unit time half-duplex schedule $\set{S}'$ achieves the max-min throughput $\theta = 1$ and the network throughput $\alpha = 4L(K/2+1)$. The network throughput $\alpha$ is maximum since $D$ has in total $4L$ single-RF-chain macro BSs and each arc leaving a macro BS has capacity $K/2+1$. So each macro BS must  be always active as a sender in $\set{S}'$. 
In addition, since $\theta = 1$, each relay BS $q_k$ achieves the throughput at least one. Since each $q_k$ has single RF chain and any incoming arc to it has capacity one, $q_k$  must  be always active as a receiver in $\set{S}'$.
We pick an arbitrary slot $S$ from $\set{S}'$. Since among vertices of $W_l$, 4 macro-BS-to-relay-BS arcs are active at any time, it is impossible to have any pair of vertices $p_l^{(m)}$ and $n_l^{(m')}$ ($m, m' \in \{1, 2\}$) active as senders at the same time. 
Otherwise, a macro BS must be inactive which is contradictory to the property of being always active.
Finally, we can set the variables $x_l$ as follows: if none of the 4 vertices $p_l^{(m)}$ and $n_l^{(m')}$ is active as a sender, we set $x_l$ arbitrarily; if one or two of the vertices $p_l^{(m)}$  are active as senders, we set $x_l = \text{true}$; otherwise one or two of the vertices $n_l^{(m')}$  must be active as senders, we set $x_l = \text{false}$. With this assignment $Z$ is satisfied.
Thus, half-duplex MTFS problem is NP-hard for a general directed network.
\end{proof}

\subsection{Proof of Theorem~\ref{thm:hd-mtfs-uniform-rf}}
\label{sec:thm:hd-mtfs-uniform-rf}
To prepare the proof of Theorem~\ref{thm:hd-mtfs-uniform-rf}, let us first prove the following lemma.
\begin{lem} Assume that an undirected loopless multigraph $G$ has the property that between any pair of vertices $u, v \in V(G)$, there are either $R \in \mathbb{N}$ edges of the same  weight $w(\{u, v\})$ or zero edges. The maximum weight biparite subgraph $J \subseteq G$ such that each vertex $v \in V(J)$ has degree $\deg_J(v) \le R$, can be found in polynomial time as follows:
	\begin{enumerate}
		\item Create a simple graph $G'$ for $G$: between each pair of vertices $u, v \in V(G)$, if there are $R$ edges, we remove $R-1$ of them. Let the resulting graph be $G'$.
		\item Find the maximum weight matching $M$ of $G'$ with the weight function $w$. $J$ is a graph whose edge set is the multiset $(M, R)$, i.e., $R$-time repetition of $M$.
	\end{enumerate}
	\label{lem:max-bipartite-subgraph}
\end{lem}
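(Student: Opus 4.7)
The plan is to first verify that the $J$ produced by the algorithm is feasible, and then to use a fractional-relaxation argument to show it has maximum weight. For feasibility: since $M$ is a matching in $G'$, each vertex incident to $M$ appears in exactly one edge of $M$, so the multiset consisting of $R$ copies of every edge of $M$ gives $\deg_J(v) = R$ at matched vertices and $\deg_J(v) = 0$ elsewhere, which meets the degree bound. The underlying simple graph of $J$ equals $M$, which is acyclic and hence bipartite, so $J$ itself is bipartite. Polynomial-time computability is immediate: $G'$ has at most $\binom{|V(G)|}{2}$ edges, and maximum weight matching on a general simple graph is polynomial (e.g., Gabow's algorithm).

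For optimality, I would take an arbitrary feasible bipartite subgraph $J^* \subseteq G$ with bipartition $(A,B)$ and $\deg_{J^*}(v) \le R$ for all $v$. For each pair $\{u,v\}$ with a parallel class in $G$, let $k_{uv} \in \{0, 1, \dots, R\}$ denote the number of copies of that class appearing in $J^*$, so $w(J^*) = \sum_{\{u,v\}} k_{uv}\, w(\{u,v\})$. Because $J^*$ is bipartite, the support $H = \{\{u,v\} : k_{uv} > 0\}$ is a bipartite subgraph of $G'$. Setting $y_{uv} = k_{uv}/R$ turns the degree bound $\sum_{u} k_{uv} \le R$ into $\sum_u y_{uv} \le 1$, so $y$ is a fractional matching supported on $H$. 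Since $H$ is bipartite, its fractional matching polytope is integral, so $y$ decomposes as a convex combination $y = \sum_i \lambda_i\, \chi^{M_i}$ of indicator vectors of matchings $M_i \subseteq H \subseteq G'$. Then
\[
	\tfrac{1}{R}\, w(J^*) \;=\; \sum_{e \in E(H)} y_e\, w(e) \;=\; \sum_i \lambda_i\, w(M_i) \;\le\; w(M),
\]
since $M$ is the maximum-weight matching of $G'$; multiplying by $R$ yields $w(J^*) \le R\, w(M) = w(J)$.

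The main obstacle is the integrality step: $G'$ itself need not be bipartite, so the integrality of the bipartite fractional matching polytope does not apply directly to $G'$. The crucial observation is that the bipartiteness assumed on $J^*$ forces the support of the corresponding fractional matching $y$ to be bipartite, which is exactly what is needed to decompose $y$ into integer matchings. Without restricting attention to that support, odd cycles in $G'$ would admit half-integral extreme points and break the bound.
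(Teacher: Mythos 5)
Your proof is correct, and it takes a genuinely different route to the optimality bound than the paper does. The paper argues combinatorially: given a feasible bipartite subgraph $K$ with $\deg_K(v)\le R$, it completes $K$ to an $R$-regular bipartite graph $K'$, repeatedly extracts perfect matchings via Hall's marriage theorem, and thereby decomposes $K$ into at most $R$ matchings of $G'$, each of weight at most $w(M)$ --- essentially a hands-on instance of K\H{o}nig's edge-coloring theorem. You instead scale the edge-multiplicity vector of $J^*$ by $1/R$ to obtain a fractional matching supported on the bipartite simple graph $H\subseteq G'$, invoke the integrality of the bipartite fractional matching polytope to write it as a convex combination of matchings $M_i\subseteq G'$, and conclude $w(J^*)/R=\sum_i\lambda_i\,w(M_i)\le w(M)$. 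The two arguments are morally the same fact (a bipartite multigraph of maximum degree $R$ decomposes into $R$ matchings, integrally in the paper, fractionally in yours), but yours trades the explicit regularization-and-induction for a standard polyhedral theorem, which makes the write-up shorter at the cost of citing total unimodularity rather than only Hall's theorem. Your closing observation is exactly the right one and matches the paper's implicit logic: the integrality (resp.\ decomposability) is applied to the bipartite support $H$ inherited from the feasible solution, not to $G'$ itself, which need not be bipartite; the feasibility and polynomial-time parts of your argument also agree with the paper's. The only cosmetic point worth tightening is that the convex combination should be taken with $\sum_i\lambda_i=1$ (allowing the empty matching as one of the $M_i$) so that the final inequality $\sum_i\lambda_i\,w(M_i)\le\max_i w(M_i)\le w(M)$ is airtight.
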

\begin{proof}
	Since $M$ is a matching of $G'$, then $M$ is a bipartite graph such that $\deg_M(v) = 1, \forall v \in V(M)$. Since $J$ is a graph whose edge set is the multiset $(M, R)$, $J$ is a bipartite subgraph of $G$ such that $\deg_J(v) \le R, \forall v \in V(J)$.
	
	Let $K$ be a bipartite subgraph of $G$ such that $\deg_K(v) \le R, \forall v \in V(K)$. Since $K$ is bipartite, its vertices have a bipartition $[U, V]$. We assume without loss of generality that $\card{U} \ge \card{V}$. Then we add $\card{U} - \card{V}$ new vertices to $V$, and add edges between $U$ and $V$ to $K$ until we get a $R$-regular bipartite graph $K'$. {\em Regular} means that each vertex has the same degree, $\deg_{K'}(v) = R, \forall v \in V(K')$. 
	Since $K'$ is a $R$-regular bipartite graph, any subset $S \subseteq U$ is connected with at least $\card{S}$  vertices in $V$ according to the pigeonhole principle. Then according to the Hall's marriage theorem~\cite{Wilson96}, $K'$ contains a matching $N$ with cardinality $\card{U}$. Removing $N$ from $K'$, we get a $(R-1)$-regular bipartite graph. Inductively, we have proved that $K'$ can be decomposed into $R$ matchings. 
	Therefore, $K$, a subgraph of $K'$, can be decomposed into at most $R$ matchings.
	Each matching is a subgraph of $G'$.  Since $M$ is a maximum weight matching of $G'$, $J$ is a maximum weight bipartite subgraph of $G$ such that each vertex $v \in V(J)$ satisfies $\deg_J(v) \le R$. The algorithm is polynomial-time because the maximum weight matching on a graph can be solved in polynomial time~\cite{Edmonds65b}.
\end{proof}

\subsubsection*{Proof of Theorem~\ref{thm:hd-mtfs-uniform-rf}}
\begin{proof}
The linear program formulation of the half-duplex MTFS problem is as follows.
\begin{IEEEeqnarray}{lrCl}
	\IEEEyesnumber\label{eq:hd-mtf-theta}\IEEEyessubnumber*
	\textnormal{max} & \theta & &
	\\
	\textnormal{s.t.}\quad& \mat{L}^{M}\vect{t}^S& \geq & \vect{1}\theta
	\\
	& \trans{\vect{1}}\vect{t}^S& = &1
	\textnormal{ and }
	\vect{t}^S \geq \vect{0},
\end{IEEEeqnarray}
\begin{IEEEeqnarray}{lrCl}
	\IEEEyesnumber\label{eq:hd-mtf}\IEEEyessubnumber*
	\textnormal{max} &\trans{\vect{c}}\vect{t}^S & &
	\\
	\textnormal{s.t.}\quad& \mat{L}^{M}\vect{t}^S& \geq & \vect{1} \theta^*
	\\
	& \trans{\vect{1}}\vect{t}^S& = &1
	\textnormal{ and }
	\vect{t}^S \geq \vect{0},
\end{IEEEeqnarray}	
We prove by solving \eqref{eq:hd-mtf-theta} and \eqref{eq:hd-mtf}, which give the optimal schedule for the half-duplex MTFS problem. 
The method is similar to that of Alg.~\ref{alg:mtf-theta} and Alg.~\ref{alg:mtf}.

The first step is to find an initial basic feasible solution to \eqref{eq:hd-mtf-theta}.
We use the method for the full-duplex MTFS problem in \S\ref{sec:solve-mtfs}.
Suppose the result is a schedule $S_0$. Then we define $S_0'$ to be $R$ copies of $S_0$ running in parallel. 
Obviously, $S_0'$ is an initial basic feasible solution to \eqref{eq:hd-mtf-theta}.

To compute the max-min throughput, Alg.~\ref{alg:mtf-theta} and Alg.~\ref{alg:mtf} need to be modified.
In Line \ref{alg:mtf-a1} of both Alg.~\ref{alg:mtf-theta} and Alg.~\ref{alg:mtf}, we replace "Do max weight simple $b$-matching on $D$" with "Solve the MWHS problem on $D$".

The MWHS problem on $D$ can be solved as follows. Let $D'$ be a subgraph of $D$ that contains only positive arcs.
The solution of MWHS on $D$ is the same as that on $D'$.
Note that $D'$ satisfies the condition that if there is an arc $(u, v) \in E(D')$, no opposite arcs $(v, u)$ are contained in $D'$. The reason is as follows.
If both $u, v$ are relay BSs and $(u, v) \in E(D')$, then $w((v, u)) = -w((u, v)) < 0$ and the $(v, u)$ arcs will be removed.
 If $u$ is a macro BS, then $(v, u)$ are not contained in $D'$. Moreover, between any two vertices in $D'$, there are either $R$ equivalent arcs (same head, tail and weight) or zero arcs. So, $D'$ can be considered as a weighted undirected loopless multigraph of the uniform edge multiplicity $R$. 
 Because a half-duplex subgraph of $D'$ must be a bipartite subgraph with degree constraint $R$, the maximum weight bipartite subgraph $B \subseteq D'$ with degree constraint $R$ has weight greater than or equal to that of the maximum weight half-duplex subgraph of $D'$. 
 From Lemma~\ref{lem:max-bipartite-subgraph}, $B$ is also a half-duplex subgraph of $D'$. So it is also the maximum weight half-duplex subgraph of $D'$ and $D$.
 Therefore, the optimal schedule for the half-duplex MTFS problem $S^*$ consists of $R$ copies of the same schedule $S$ executed in parallel since each iteration in Alg.~\ref{alg:mtf-theta} and Alg.~\ref{alg:mtf} produces such a schedule. $S$ must be a unit time schedule for $D$ assuming that each node has one RF chain. Consequently, the optimal schedule for the half-duplex MTFS problem is obtained by the algorithm in Theorem~\ref{thm:hd-mtfs-uniform-rf}.
\end{proof}

\subsection{Proof of Theorem~\ref{thm:sparse-exp-net}}
\label{sec:thm:sparse-exp-net}
\begin{proof}
	\begin{figure}[!bthp]
		\centering
		\includegraphics[width=5cm]{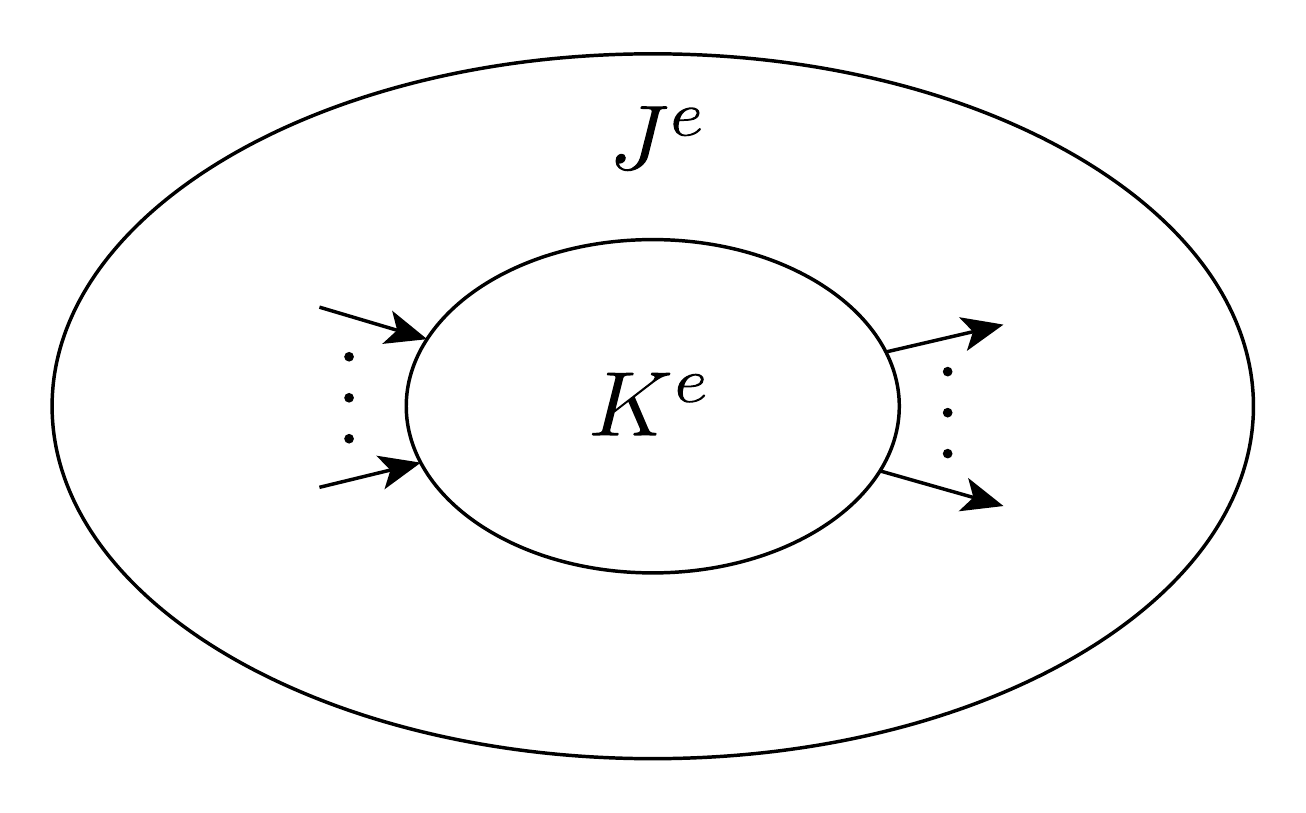}
		\caption{Arcs of $J^e$ entering and leaving $K^e$.}	
		\label{fig:max-ind-subg}
	\end{figure}
	Let an arbitrary half-duplex subgraph of the directed network $D$ be $J$. Obviously, $J$ corresponds to a certain matching $J^e$ (which are the data streams scheduled in a timeslot) in the expanded network $H_{\text{FD}}^{\text{M}}$ as it is fully expanded. We need to prove that $J^e$ is equivalent to $J^s$ which is a matching of the sparsely expanded network $H_{\text{HD}}^{\text{M}}$.
	
	Initially, we set all vertices of $J$ as untagged and let $J^s = \emptyset$.
	Starting from an untagged vertex $v$ of $J$ (suppose it has RF chain number $r(v)$), we find the {\em maximal induced subgraph with $r(v)$ RF chains} $K$, which is defined as a connected (two vertices are connected if there is an arc between them) induced subgraph of $J$ that has the largest number of vertices of exactly $r(v)$ RF chains. $K$ is a bipartite graph with maximum vertex degree of at most $r(v)$ because the vertices in $K$ can be divided into the sender and receiver sets.
	
	Let the {\em link network} of $D$ be $L$.
	According to the K\H{o}nig's Theorem~\cite{Wilson96} for the edge coloring of bipartite graphs, $K$ can be decomposed into at most $r(v)$ matchings in $G$, where $G$ is the induced subgraph of $L$ by the vertex set $V(K)$. 
	In the sparsely expanded network $H_{\text{HD}}^{\text{M}}$, $G$ is expanded into $G^s$ which is $r(v)$ copies of $G$.
	Suppose that $K$ corresponds to a graph $K^e \subseteq H_{\text{FD}}^{\text{M}}$. By rearranging senders and receivers, $K^e$ is equivalent to a $K^s \subseteq G^s$. We add $K^s$ to $J^s$.
	Suppose that $J^e$ has an arc $e$ that goes from a vertex $u^{(i)}$ out of $K^e$ to a vertex $w^{(j)}$ inside $K^e$ (see Fig.~\ref{fig:max-ind-subg}). Then in $K^s$ there is at least one vertex $w^{(k)}$ ($k$ may be different from $j$) which is unconnected. We map $(u^{(i)}, w^{(j)})$ in $J^e$ to $(u^{(i)}, w^{(k)})$ and add the latter arc to $J^s$. This works in the same way for outgoing arcs. After we have processed all arcs entering and leaving $K^e$, we tag all vertices in $K$. Then we go on to process untagged vertices in $J$.
	After we have tagged all vertices, we get a matching $J^s \subseteq H_{\text{HD}}^{\text{M}}$. 
\end{proof}

\subsection{Proof of Theorem~\ref{thm:perf-f3wc}}
\label{sec:thm:perf-f3wc}
Before proving Theorem~\ref{thm:perf-f3wc}, we need to first prove two lemmas:
\begin{lem}
	Let $D$ be a directed network and $L$ be the corresponding link network. We have $Q \subseteq P \subseteq \alpha^*Q$.
	For any $\vect{t} \in Q$, the coloring of $(C, \vect{t})$ by F$^3$WC-FAO has weight at most 1. Furthermore,	 
	\begin{itemize}
		\item $\alpha^* \le \max \Big( 1, \max_{l \in E(L)} \big( \sum_{l' | \mathrm{intf}(l', l) = 1} d(l') \big) \Big) + 2$ for the case of full-duplex network, PI and REAL-SU-SM model.
		\item $\alpha^* \le \max_{l \in E(L)} \big( \sum_{l' | \mathrm{intf}(l', l) = 1} d(l') \big) + 2$ for the case of full-duplex network, PI and MAX-SU-SM model.
		\item $\alpha^* \le \max_{l \in E(L)} \big(r(l) +  \sum_{l' | \mathrm{intf}(l', l) = 1} d(l') \big)$ for the case of half-duplex network and PI model where $r(l) = r(u) + r(v)$ for $l = (u,v) \in E(L)$.
	\end{itemize}
	\label{lem:Q}
\end{lem}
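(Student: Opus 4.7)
The plan is to split Lemma~\ref{lem:Q} into two complementary directions: the containment $Q \subseteq P$ together with the weight bound for F$^3$WC-FAO, and the reverse containment $P \subseteq \alpha^* Q$ together with the three case-specific bounds on $\alpha^*$.

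For $Q \subseteq P$ I would analyze F$^3$WC-FAO directly on $(C,\vect{t})$ with $\vect{t} \in Q$ using a load argument in the spirit of \cite{Wan09}. Processing vertices in the prescribed order $v_1,\ldots,v_n$, the algorithm only opens a new color class once every earlier class already contains a smaller neighbor of the current vertex or has no residual time. An induction on $i$ shows that the total weight of color classes intersecting $\{v_1,\ldots,v_i\}$ equals $\max_{j\le i} t(V_j)$, so at termination the coloring has weight $\max_i t(V_i) \le 1$ by definition of $Q$. Since this coloring certifies the decomposition $\vect{t} = \sum_k \lambda_k \chi_{I_k}$ with $\sum_k \lambda_k \le 1$ and each $I_k$ independent (padding with the empty independent set if needed), every $\vect{t} \in Q$ lies in $P$.

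For $P \subseteq \alpha^* Q$ I would use the fact that for any $\vect{t} \in P$ expressed as a sub-convex combination of incidence vectors of independent sets, $t(V_i) \le \alpha(C[V_i])$. Because $v_i$ is adjacent in $C$ to every other vertex of $V_i$ (by construction), any independent set in $C[V_i]$ is either $\{v_i\}$ or an independent set in $C[V_i\setminus\{v_i\}]$, so $\alpha(C[V_i]) = \max\bigl(1,\alpha(C[V_i\setminus\{v_i\}])\bigr)$. Hence $\alpha^* \le \max_i \max\bigl(1,\alpha(C[V_i\setminus\{v_i\}])\bigr)$, and the three claimed inequalities reduce to case-by-case estimates of $\alpha(C[V_i\setminus\{v_i\}])$. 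Writing $v_i$ as the expanded arc $(u^{(j)},w^{(k)})_m$ of a data stream of link $l=(u,w)$, the smaller neighbors of $v_i$ fall into four classes: arcs sharing $u^{(j)}$ in $H$, arcs sharing $w^{(k)}$, other members of the same-data-stream clique $X(e_0)$, and expanded arcs of interfering links $l'$. The RF-chain clique at each of $u^{(j)}, w^{(k)}$ contributes at most one arc to an independent set; the clique $K(X(e_0))$ contributes at most one further arc avoiding both RF-chains; and for each interfering $l'$ the RF-chain and per-data-stream cliques inside $X(l')$ cap its contribution at $d(l')$. Summing gives the $+2$ plus $\sum_{l' : \mathrm{intf}(l',l)=1} d(l')$ bound in the full-duplex REAL-SU-SM case. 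In the MAX-SU-SM case the per-data-stream clique collapses (each link maps to a single bundle of parallel arcs), which removes one of the unit contributions and makes the outer $\max(1,\cdot)$ redundant. In the half-duplex case the conflict graph additionally contains, for each $v \in V(D)$, the bipartite edges between $\delta_H^-(v)$ and $\delta_H^+(v)$, which merges the two per-RF-chain contributions at the endpoints of $l$ into a single clique of up to $r(u)+r(w) = r(l)$ smaller neighbors, yielding the bound $r(l) + \sum_{l' : \mathrm{intf}(l',l)=1} d(l')$.

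The main obstacle will be avoiding double-counting in the aggregation, since an arc incident to $u^{(j)}$ may simultaneously lie in $X(l')$ for an interfering link $l'$ sharing the endpoint $u$, and an arc in $X(e_0)$ automatically belongs to $\delta_H(u^{(j)})$ or $\delta_H(w^{(k)})$ whenever its RF-chain labels coincide with those of $v_i$. I plan to dispatch this by first selecting one arc from each RF-chain (or half-duplex) clique at $u$ and $w$ to absorb the additive constant, then bounding the remaining independent set entirely inside the link cliques $X(l')$ and the residual of $X(e_0)$; the sum of the two contributions uniformly upper-bounds $\alpha(C[V_i\setminus\{v_i\}])$ and produces each of the three stated inequalities.
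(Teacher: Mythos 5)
Your proposal is correct and follows essentially the same route as the paper: the containments $Q \subseteq P \subseteq \alpha^* Q$ and the unit weight bound for F$^3$WC-FAO are the standard facts from Wan's framework (which the paper simply imports as Theorem~5.1 and Corollary~5.2 of~\cite{Wan09}, while you re-derive them via the first-fit load argument and the sub-convex-combination bound $t(V_i)\le\alpha(C[V_i])$), and your case-by-case neighbor counting for $\alpha^*$ --- one vertex per RF-chain clique at each endpoint, one extra for the data-stream clique $X(e)$ in REAL-SU-SM, at most $d(l')$ per interfering link, and $r(u)+r(v)$ in the half-duplex case --- is exactly the paper's argument. The only cosmetic quibble is that in the half-duplex case the relevant neighbors at the two endpoints do not form a single clique (arcs entering distinct copies $u^{(j)}$ and $u^{(j')}$ are not adjacent), but your underlying count of at most one independent vertex per RF-chain copy, hence at most $r(u)+r(v)=r(l)$ in total, is precisely what the paper uses.
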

\begin{proof}
	Let $H$ be the expanded network\footnote{There are four possibilities of $H$ depending on the modeling.} of $D$. Let $e$ be an arc in $H$. 
	Correspondingly, $e$ is a vertex in the conflict graph $C$. We check the neighbors of $e$ in $C$. 
	If any independent set of $C$ contained in $e$ and its neighbors, has a size at most $N$, then the {\em inductive independence number} of $C$ is $\alpha^* \le N$, where $\alpha^*$ is defined to be the maximum size of any independent set of $C$ contained in some $V_i$, for $1 \le i \le n$. 
	$Q \subseteq P \subseteq \alpha^*Q$ follows directly from Corollary 5.2 of~\cite{Wan09}.
	By Theorem 5.1 of~\cite{Wan09}, for any $\vect{t} \in Q$, the coloring of $(C, \vect{t})$ by F$^3$WC-FAO has weight at most $\max_{1 \le i \le n}\{ t(V_i) \} \le 1$. Now we look at four backhaul network modelings.
	\begin{figure}[!htbp]
		\centering
		\includegraphics[width=7cm]{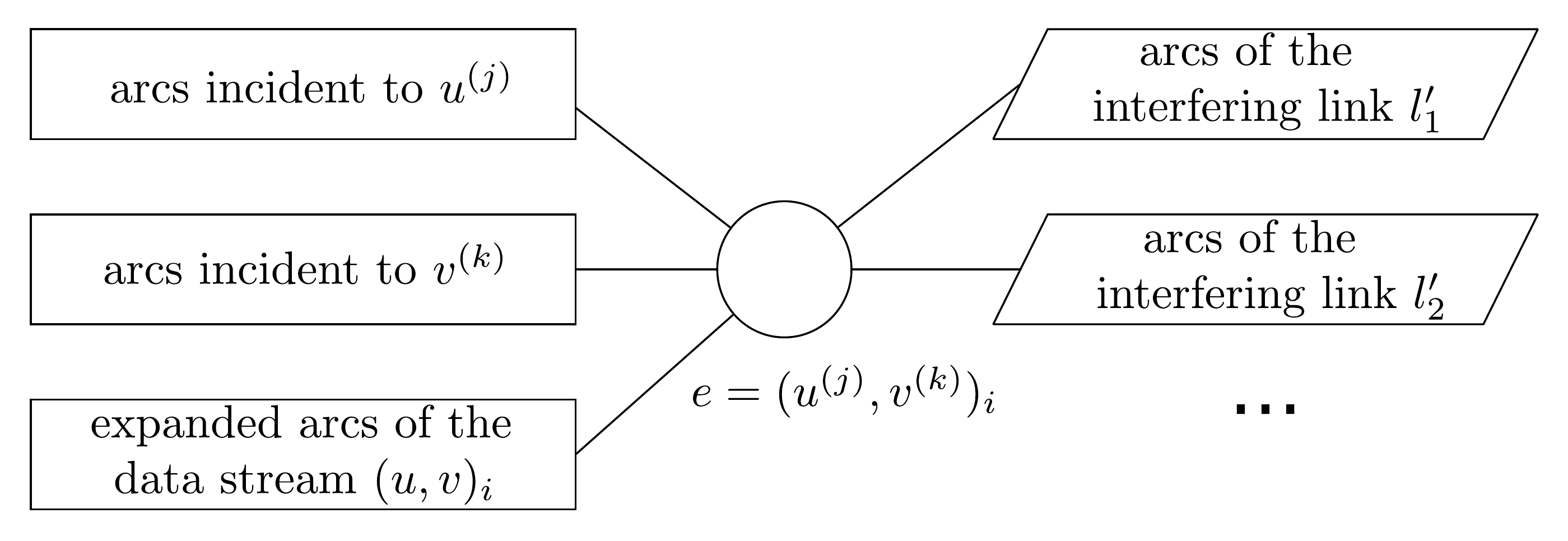}
		\caption{The neighbors of a vertex $e = (u^{(j)}, v^{(k)})_i \in V(C)$.
			A rectangle is a complete graph that contributes only one vertex to the independent set while	 
			a parallelogram may contribute multiple.}	
		\label{fig:fd-r-cft-nbr}
	\end{figure}
	
	\begin{figure}[!htbp]
		\centering
		\includegraphics[width=7cm]{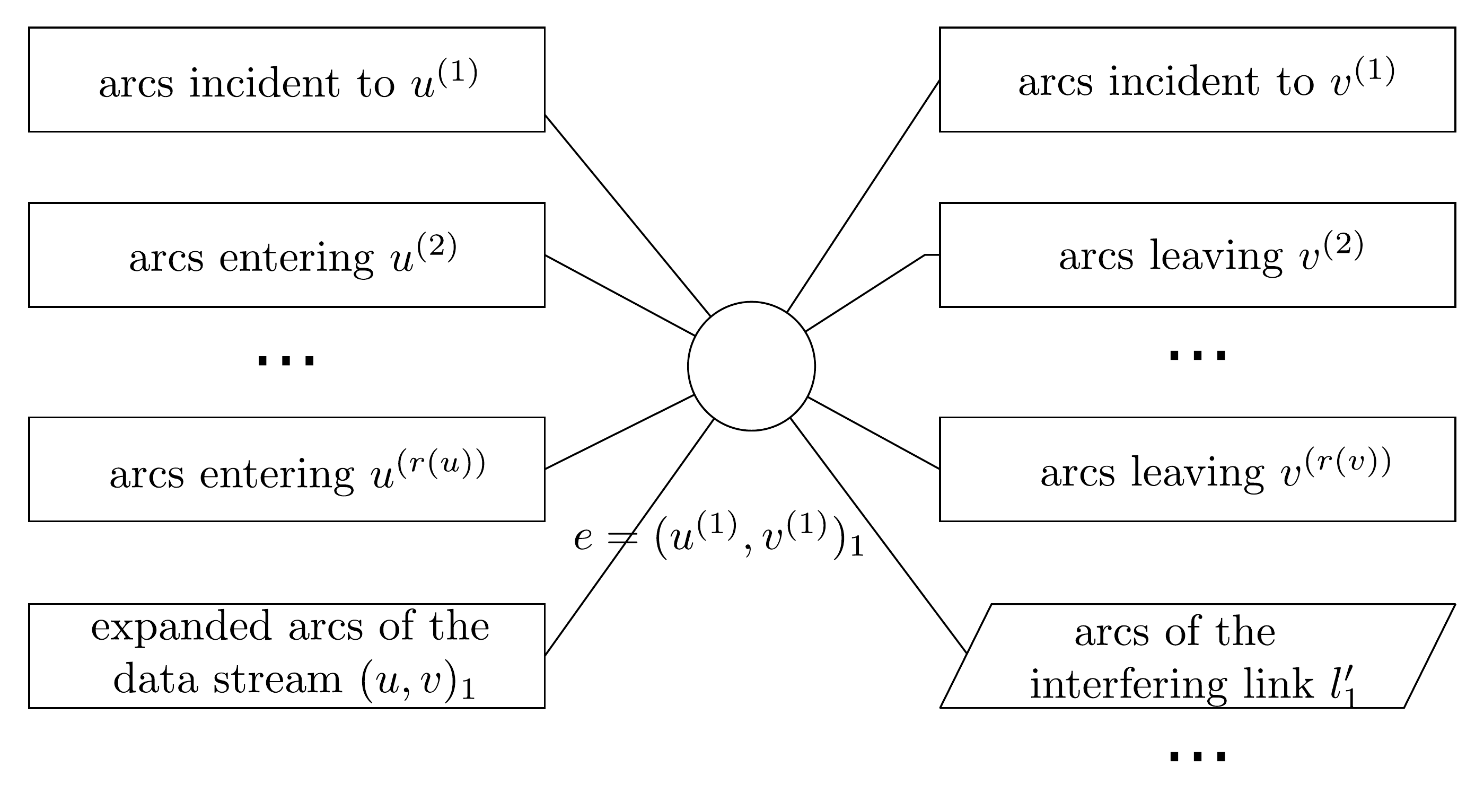}
		\caption{The neighbors of a vertex $e = (u^{(1)}, v^{(1)})_1 \in V(C)$.}	
		\label{fig:hd-r-cft-nbr}
	\end{figure}
	
	1. We check the case of full-duplex, PI and REAL-SU-SM model.
	Let $e = (u^{(j)}, v^{(k)})_i$ be an arc in $H_{\text{FD}}^{\text{R}}$. The neighbors of $e$ in $C$ is shown in Fig.~\ref{fig:fd-r-cft-nbr}. The arcs incident to $u^{(j)}$ form a complete graph in $C$, so they contributes only one vertex to the independent set. This applies also be the arcs incident to $v^{(j)}$. 
	If there are no links that are interfering with $(u, v)$. Then the independent set can be extended by one due to the expanded arcs of $(u, v)_i$. Otherwise, it can be extended by $\sum_{l' | \mathrm{intf}(l', l) = 1} d(l')$ because each interfering link $l'$ may contribute at most $d(l')$. Because these arcs are in conflict with the expanded arcs of $(u, v)_i$, the latter does not contribute to the independent set.
	
	2. We check the case of full-duplex, PI and MAX-SU-SM model. The neighbors of $e = (u^{(j)}, v^{(k)})$ is similar to Fig.~\ref{fig:fd-r-cft-nbr} except that we don't have the neighbors that are the expanded arcs of a data stream.
	
	3. We check the case of half-duplex, PI and REAL-SU-SM model. Let $e = (u^{(1)}, v^{(1)})_1$ be an arc in $H_{\text{HD}}^{\text{R}}$ without loss of generality. The neighbors of $e$ in $C$ is shown in Fig.~\ref{fig:hd-r-cft-nbr}. The arcs incident to $u^{(1)}$, entering $u^{(j)}$, incident to $v^{(1)}$ and
	leaving $v^{(k)}$ for any $j \ne 1$ and any $k \ne 1$, each contributes one vertex to the independent set.
	If an expanded arc, say $(u^{(l)}, v^{(m)})_1$ is add to the independent set, then we need to remove two vertices from the independent set that belong to the arcs entering $u^{(l)}$ and those leaving $v^{(m)}$.
	So it is not worth to do that.
	The increase of the independent set due to the interfering links is the same as that of full-duplex scheduling.
	
	4. The case of half-duplex, PI and MAX-SU-SM model is the same as 3.
\end{proof}

\begin{lem}
	Let $D$ be the directed network and $L$ be the corresponding link network. We have $Q' \subseteq P \subseteq 2 \beta^* Q'$.
	For any $\vect{t} \in Q'$, the coloring of $(C, \vect{t})$ by F$^3$WC-LSLO has weight at most 1.
	Furthermore,
	\begin{itemize}
		\item $\beta^* \le \max \Big( 1, \max_{l \in E(L)} \big( \sum_{l' |l' < l, \mathrm{intf}(l', l) = 1} d(l') \big) \Big) + 2$ for the case of full-duplex network, PI and REAL-SU-SM model.
		\item $\beta^* \le \max_{l \in E(L)} \big( \sum_{l' |l' < l, \mathrm{intf}(l', l) = 1} d(l') \big) + 2$ for the case of full-duplex network, PI and REAL-SU-SM model.
		\item $\beta^* \le \max_{l = (u, v) \in E(L)} \big(r(u) +  \sum_{l' | l' < l, \mathrm{intf}(l', l) = 1} d(l') \big) \\ +1$ 
		for the case of half-duplex network and PI model.
	\end{itemize}
	\label{lem:Q'}
\end{lem}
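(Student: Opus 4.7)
The plan is to mirror the structure of the proof of Lemma~\ref{lem:Q}, with the inductive local independence number $\beta^*$ of the oriented conflict graph $C^d$ playing the role that the inductive independence number $\alpha^*$ of $C$ played there. First I would invoke the largest-surplus-last analogue of Wan's result~\cite{Wan09} (the counterpart of Theorem 5.1 and Corollary 5.2 of that paper): this simultaneously yields the polytope inclusion $Q' \subseteq P \subseteq 2\beta^* Q'$ and the bound that F$^3$WC-LSLO applied to $(C, \vect{t})$ with $\vect{t} \in Q'$ produces a fractional weighted coloring of total weight at most $2\max_{u \in V(C^d)} t(N^{in}[u]) \le 1$.

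The remaining work is to bound $\beta^*$ by analysing, for an arbitrary vertex $e \in V(C^d)$, the largest independent set of $C^d$ contained in the closed in-neighbourhood $N^{in}[e]$ under the orientation specified in \S\ref{sec:hd-f3wc-lslo}. The key mechanism, and the step I expect to require the most care, is rule~1 of that orientation: every edge between an arc of the form $(u^{(i)}, v^{(j)})_m$ entering $v$ and an arc $(v^{(k)}, x^{(l)})_n$ leaving $v$ (with $u \neq x$) is oriented from the former to the latter. This converts certain symmetric clique-type neighbourhoods that appeared in $N[e]$ in Lemma~\ref{lem:Q} into one-sided neighbourhoods of $C^d$, and is precisely what accounts for the constants shrinking and for the restriction $l' < l$ appearing on the interfering-link sums.

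For the full-duplex PI/REAL-SU-SM case I would take $e = (u^{(j)}, v^{(k)})_i$ and argue that an independent set in $N^{in}[e]$ can select at most one representative from the clique at $u^{(j)}$, at most one from the clique at $v^{(k)}$, and at most $\max\bigl(1,\sum_{l' \mid l' < l,\, \mathrm{intf}(l',l)=1} d(l')\bigr)$ further vertices drawn from the union of the expanded arcs of the data stream $(u,v)_i$ and the expanded arcs of streams of interfering links $l'$ with $l' < l$ (the ordering restriction is the consequence of rule~2, since non-adjacent links produce edges oriented small-to-large in $C^d$). This yields $\beta^* \le \max\bigl(1,\sum d(l')\bigr) + 2$. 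The full-duplex MAX-SU-SM bound follows by observing that the expanded-arcs-of-data-stream contribution is absorbed into the two cliques at $u^{(j)}$ and $v^{(k)}$, which lets me drop the outer $\max(1,\cdot)$.

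For the half-duplex PI case the decisive observation is that rule~1 reorients every edge between $e = (u^{(j)}, v^{(k)})_i$ and any arc $(v^{(a)}, y^{(b)})_n$ with $y \neq u$ so that such arcs land in $N^{out}(e)$ rather than in $N^{in}(e)$. Hence the $r(v)$ contribution present in Lemma~\ref{lem:Q} vanishes, and only the arcs incident to copies of $u$ (contributing at most $r(u)$ to any independent set because of the half-duplex edges joining $\delta^-(u)$ and $\delta^+(u)$) together with the interfering-link contribution $\sum_{l' < l,\, \mathrm{intf}(l',l)=1} d(l')$ survive, with an additional $+1$ for the single clique representative from the expanded arcs of the link $(u,v)$ itself (or, equivalently, for an edge oriented into $e$ by rule~2 sharing both endpoints with $e$). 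A case check enumerating the four kinds of neighbour in $C^d$ then establishes the stated bound. The main obstacle throughout is bookkeeping: for every pair of conflicting arcs one must determine under which of the two orientation rules the edge is oriented, and only then can one decide whether it contributes to $N^{in}[e]$ and thus to $\beta^*$.
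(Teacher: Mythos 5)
Your overall strategy is exactly the paper's: invoke the LSLO counterparts of Wan's results (Lemma~5.3 and Corollary~5.4 of~\cite{Wan09}) for the inclusion $Q' \subseteq P \subseteq 2\beta^* Q'$ and the weight bound $2\max_{u} t(N^{in}[u]) \le 1$, then bound the local independence number $\beta^*$ by a case analysis of the largest independent set inside $N^{in}[e]$ under the two orientation rules. The full-duplex cases and the role of rule~2 in producing the restriction $l' < l$ match the paper's argument.

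There is, however, one concrete slip in your half-duplex bookkeeping. You attribute the final $+1$ to ``the single clique representative from the expanded arcs of the link $(u,v)$ itself,'' but every expanded arc of $(u,v)$ leaves some copy $u^{(a)}$ and is therefore already inside your first category (arcs incident to copies of $u$, bounded by $r(u)$); indeed the paper explicitly argues that no such arc can join the independent set at all, since it is half-duplex--conflicting with the arcs entering copies of $u$ that realize the $r(u)$ count. Meanwhile your three categories fail to cover arcs of \emph{other} links incident to $v^{(k)}$, e.g.\ $(w^{(c)}, v^{(k)})_n$ with $w \neq u$ and $(w,v)$ non-interfering with $(u,v)$: such an arc is adjacent to $e$ through the RF-chain clique at $v^{(k)}$, may be an in-neighbor of $e$ under rule~2, and can coexist with all $r(u)$ arcs entering copies of $u$. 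So as stated your enumeration is not a valid covering of $N^{in}[e]$, even though the total happens to coincide with the paper's bound. The fix is exactly the paper's: the $+1$ is one representative of the clique at $v^{(k)}$ (absorbing the expanded arcs of $(u,v)$ and of the reverse link into the $r(u)$ term), while rule~1 pushes all arcs leaving copies of $v$ toward third vertices into $N^{out}(e)$. With that reallocation the rest of your case check goes through and reproduces the stated bounds.
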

\begin{proof}
	The {\em local independence number} of $C^d$ is defined to be the maximum size of any independent set of $C$ contained in $N^{in}[u]$ for any $u \in V(C^d)$ and is denoted by $\beta^*$. $Q' \subseteq P \subseteq 2\beta^* Q'$ follows directly from Corollary 5.4 of~\cite{Wan09}. By Lemma 5.3 of~\cite{Wan09}, for any $\vect{t} \in Q'$, the coloring of $(C, \vect{t})$ by F$^3$WC-LSLO has weight at most $2  \max_{u \in V(C^d)} t(N^{in}[u]) \le 1$. Now we check the local independence number for four backhaul network modeling.
	
	1. We check the case of full-duplex, PI and REAL-SU-SM model. Let $e = (u^{(j)}, v^{(k)})_i$ be an arc in $H_{\text{FD}}^{\text{R}}$. 
	By observing Fig.~\ref{fig:fd-r-cft-nbr}, we see that one  of the arcs incident to $u^{(j)}$ and one of the arcs incident to $v^{(k)}$ may be in-neighbors of $e$ according to the edge orientation rule and be added to the independent set.
	That also applies to one of the expanded arcs of $(u, v)_i$ if there are no interfering arcs that are also in-neighbors of $e$. 
	Otherwise, we will add the maximum number of interfering arcs (their links must be smaller than $(u, v)$) but not an expanded arc of $(u, v)_i$ because they are in conflict.
	
	2. The case of full-duplex, PI and MAX-SU-SM model is similar to case 1 except that we don't have expanded arcs of a data stream being modeled in $H_{\text{FD}}^{\text{M}}$. 
	
	3. We check the case of half-duplex, PI and REAL-SU-SM model. Let $e = (u^{(j)}, v^{(k)})_i$ be an arc in $H_{\text{HD}}^{\text{R}}$. 
	By observing Fig.~\ref{fig:hd-r-cft-nbr}, we see that one arc each from the arcs incident to $u^{(j)}$, entering $u^{(l)}$ and incident to $v^{(k)}$ for all $l \ne j$ may be an in-neighbor of $e$ and be added to the independent set together. Now no expanded arc of $(u, v)_i$ can be added to the independent set because it is in conflict with the current independent set. Also no arc $e'$ leaving $v^{(m)}$ for any $m \ne k$ can be added to the independent set. Otherwise there must be an oriented edge $(e', e)$. According to the orientation rule, $e'$ must end with a vertex $u^{(n)}$. Thus $e'$ is in conflict with the current independent set. The argument regarding the interfering arcs of $e$ is the same as in 1.
	
	4. The result of 3. also applies to the case of half-duplex, PI and MAX-SU-SM model.
\end{proof}
\subsubsection*{Proof of Theorem~\ref{thm:perf-f3wc}}
\begin{proof}
	We prove for the first sorting method (FAO). The second (LSLO) can be proved in the same way.
	The optimal max-min throughput $\theta^*$ can be obtained by solving the linear program
	\begin{IEEEeqnarray}{lrClr}
		\label{eq:opt-max-min-tput}
		\IEEEyesnumber \IEEEyessubnumber* 
		\textnormal{max} & \theta & & & \\
		\textnormal{s.t.} &\sum_{\mathclap{e\in\delta_{H}^-(U(v))}} c(e) t_e
		-\sum_{\mathclap{e\in\delta_{H}^+(U(v))}} c(e) t_e& \ge & \theta
		\quad\forall\,v\in M(D)\IEEEeqnarraynumspace 
		\\
		& \vect{t} & \in & P \IEEEeqnarraynumspace  
	\end{IEEEeqnarray}
	and the approximate max-min throughput $\theta$ is obtained by solving \eqref{eq:q0app1} with $Q^\circ = Q$.
	Suppose that the optimal solution to \eqref{eq:opt-max-min-tput} is $(\theta^*, \vect{t}^*)$, we will see that
	$(\theta^*/\alpha^*, \vect{t}^*/\alpha^*)$ is a solution to \eqref{eq:q0app1}.
	Since $\vect{t}^* \in P$, we have $\vect{t}^*/\alpha^* \in P /\alpha^* \subseteq Q$ due to Lemma~\ref{lem:Q}.
	In addition, we verify that $(\theta^*/\alpha^*, \vect{t}^*/\alpha^*)$ satisfies~\eqref{eq:q0app-maxmin-tput}.
	Therefore $\theta \ge \theta^*/\alpha^*$. In addition, because the  schedule $S$ produced by F$^3$WC-FAO without the scaling step satisfies $\vect{t} \in Q$, we have that the length of $S$ is at most 1 due to Lemma~\ref{lem:Q}. 
	The scaling step makes the schedule length exactly one and the final max-min throughput $\theta' \ge \theta$.
\end{proof}

\subsection{Proof of Theorem~\ref{thm:pls-perf}}
\label{sec:thm:pls-perf}
\begin{proof}
	Assuming REAL-SU-SM model, 
	let $r_{max}^{D^e}$ be the maximum number of RF chains of any vertex in $D^e$. 
	Given $D^e$, let the optimal max-min throughput of the half-duplex MTFS problem be $\theta_{min}$ when all vertices of $D^e$ have 1 RF chain (each link $l$ has one data stream of the largest capacity $c(l_1)$). Let the optimal max-min throughput be $\theta_{max}$ when all vertices of $D^e$ have $r_{max}^{D^e}$ and each link has $r_{max}^{D^e}$ data streams of the largest capacity $c(l_1)$. According to Theorem~\ref{thm:hd-mtfs-uniform-rf},
	$\theta_{min} \le \theta \le \theta^* \le \theta_{max} = r_{max}^{D^e} \theta_{min}$.
	We have $\theta \ge \theta_{min} = \frac{\theta_{max}}{r_{max}^{D^e}}  \ge \frac{\theta^*}{r_{max}^{D^e}}$. 
	$r_{max}^{D^e} = \max(r_{max}^M, \max_{v \in B(D)} m(v))$. In addition, $m(v) \le 2d_{min} - 1$.

	Assuming MAX-SU-SM model, $D$ and $D^e$ has the same optimal max-min throughput for the MTFS problem $\theta^*$.
	Let $r_{min}^{D^e}$ and $r_{max}^{D^e}$ be the minimum and maximum number of RF chains of any vertex in $D^e$.
	Given $D^e$, let the optimal max-min throughput of the half-duplex MTFS problem be $\theta_{min}$ 
	and $\theta_{max}$ when all vertices of $D^e$ have $r_{min}^{D^e}$ and $r_{max}^{D^e}$ number of RF chains, respectively.
	According to Theorem~\ref{thm:hd-mtfs-uniform-rf}, $\theta_{min} \le \theta \le \theta^* \le \theta_{max} = \frac{r_{max}^{D^e}}{r_{min}^{D^e}} \theta_{min}$.
	We have $\theta \ge \theta_{min} = \frac{r_{min}^{D^e}}{r_{max}^{D^e}} \theta_{max} \ge \frac{r_{min}^{D^e}}{r_{max}^{D^e}} \theta^*$.  Moreover, $r_{min}^{D^e} = d_{min} = r_{min}$. On the other hand, $r_{max}^{D^e} = \max(r_{max}^M, \max_{v \in B(D)} m(v))$. In addition, $m(v) \le 2r_{min} - 1$, which completes the proof.
\end{proof}

\ifCLASSOPTIONcaptionsoff
  \newpage
\fi



%
\bibliographystyle{IEEEtran}


\bibliography{biblio}





\end{document}